\PassOptionsToPackage{unicode}{hyperref}
\PassOptionsToPackage{hyphens}{url}
\PassOptionsToPackage{dvipsnames,svgnames,x11names}{xcolor}
\documentclass[12pt]{article}

\usepackage{amsmath,amsfonts,amssymb,mathtools,mathrsfs,bm}
\usepackage{iftex}
\ifPDFTeX
  \usepackage[T1]{fontenc}
  \usepackage[utf8]{inputenc}
  \usepackage{textcomp}
\else
  \usepackage{unicode-math}
  \defaultfontfeatures{Scale=MatchLowercase}
  \defaultfontfeatures[\rmfamily]{Ligatures=TeX,Scale=1}
\fi
\usepackage{lmodern}
\IfFileExists{upquote.sty}{\usepackage{upquote}}{}
\IfFileExists{microtype.sty}{%
  \usepackage[]{microtype}
  \UseMicrotypeSet[protrusion]{basicmath}
}{}
\makeatletter
\@ifundefined{KOMAClassName}{%
  \IfFileExists{parskip.sty}{%
    \usepackage{parskip}
  }{%
    \setlength{\parindent}{0pt}
    \setlength{\parskip}{6pt plus 2pt minus 1pt}}
}{%
  \KOMAoptions{parskip=half}}
\makeatother
\usepackage{xcolor}
\makeatletter
\ifx\paragraph\undefined\else
  \let\oldparagraph\paragraph
  \renewcommand{\paragraph}{
    \@ifstar
      \xxxParagraphStar
      \xxxParagraphNoStar
  }
  \newcommand{\xxxParagraphStar}[1]{\oldparagraph*{#1}\mbox{}}
  \newcommand{\xxxParagraphNoStar}[1]{\oldparagraph{#1}\mbox{}}
\fi
\ifx\subparagraph\undefined\else
  \let\oldsubparagraph\subparagraph
  \renewcommand{\subparagraph}{
    \@ifstar
      \xxxSubParagraphStar
      \xxxSubParagraphNoStar
  }
  \newcommand{\xxxSubParagraphStar}[1]{\oldsubparagraph*{#1}\mbox{}}
  \newcommand{\xxxSubParagraphNoStar}[1]{\oldsubparagraph{#1}\mbox{}}
\fi
\makeatother

\usepackage{longtable,booktabs,array}
\usepackage{multirow}
\usepackage{calc}
\usepackage{etoolbox}
\makeatletter
\patchcmd\longtable{\par}{\if@noskipsec\mbox{}\fi\par}{}{}
\makeatother
\IfFileExists{footnotehyper.sty}{\usepackage{footnotehyper}}{\usepackage{footnote}}
\makesavenoteenv{longtable}

\usepackage{graphicx}
\makeatletter
\def\maxwidth{\ifdim\Gin@nat@width>\linewidth\linewidth\else\Gin@nat@width\fi}
\def\maxheight{\ifdim\Gin@nat@height>\textheight\textheight\else\Gin@nat@height\fi}
\makeatother
\setkeys{Gin}{width=\maxwidth,height=\maxheight,keepaspectratio}
\makeatletter
\def\fps@figure{htbp}
\makeatother

\usepackage{float}
\usepackage{caption,subcaption}
\usepackage{algorithm}
\usepackage{algpseudocode}
\usepackage{placeins}
\usepackage[hang,flushmargin]{footmisc}
\floatstyle{ruled}
\restylefloat{algorithm}
\algrenewcommand\alglinenumber[1]{\scriptsize #1:}

\ifLuaTeX
  \usepackage{selnolig}
\fi
\usepackage[round,authoryear]{natbib}
\makeatletter
\AtBeginEnvironment{thebibliography}{%
  \let\JASAorignewline\newline
  \renewcommand{\harvardurl}[1]{}%
  \def\newline{%
    \@ifnextchar\harvardurl
      {\relax}%
      {\JASAorignewline}%
  }%
}
\makeatother

\IfFileExists{xurl.sty}{\usepackage{xurl}}{}
\usepackage[colorlinks=true,linkcolor=blue,citecolor=blue,urlcolor=blue,filecolor=Maroon]{hyperref}
\usepackage{bookmark}

\newtheorem{prop}{Proposition}[section]

\newtheorem{theorem}{Theorem}[section]
\newtheorem{lemma}{Lemma}[section]

\newtheorem{corollary}{Corollary}[section]

\newcommand{\anon}{1}

\hypersetup{
  pdftitle={A Correlation-Free Test for High-Dimensional Elliptical Distributions},
  pdfauthor={Wenrui Wu, Bingye Yang, Minghua Deng, Xu Guo},
  pdfkeywords={Goodness-of-fit testing; Elliptical distributions; Gaussian approximation; Multiplier bootstrap}
}

\begin{document}
\def\spacingset#1{\renewcommand{\baselinestretch}{#1}\small\normalsize} \spacingset{1}

\if1\anon
{
  \title{\bf A Correlation-Free Test for High-Dimensional Elliptical Distributions}
  \author{Wenrui Wu\\
    School of Mathematical Sciences, Peking University, Beijing, China\\
    and\\
    Bingye Yang\\
    School of Mathematical Sciences, Peking University, Beijing, China\\
    and\\
    Minghua Deng\\
    School of Mathematical Sciences, Peking University, Beijing, China\\
    and\\
    Xu Guo\thanks{Xu Guo is the corresponding author. Email-address: xustat12@bnu.edu.cn.}\hspace{.2cm}\\
    School of Statistics, Beijing Normal University, Beijing, China}
  \maketitle
} \fi

\if0\anon
{
  \bigskip
  \bigskip
  \bigskip
  \begin{center}
    {\LARGE\bf A Correlation-Free Test for High-Dimensional Elliptical Distributions}
\end{center}
  \medskip
} \fi

\bigskip
\begin{abstract}
Elliptical distributions provide a flexible and widely used extension of multivariate normal distribution. They play a critical role in many statistical procedures when dealing with high-dimensional data. However, goodness-of-fit testing for elliptical distributions remains challenging when the dimension $p$ is comparable to or larger than the sample size $n$. In this work, we propose a correlation-free test for high-dimensional elliptical distributions. We establish high-dimensional Gaussian approximation for the test statistic under general correlation structures, allowing the dimension to grow as \(\log p=o(n^{1/14})\) under finite moment conditions. We further develop Gaussian multiplier bootstrap test procedure and prove its theoretical validity. Numerical studies demonstrate stable finite-sample behavior and favorable power against a range of alternatives. Applications to real datasets illustrate practical utility of the proposed test.


\end{abstract}
\date{}
\noindent%
{\it Keywords:} Elliptical distributions; Gaussian approximation; Goodness-of-fit; High-dimensional inference; Multiplier bootstrap.
\vfill

\newpage
\spacingset{1.8}

\section{Introduction}

Elliptical distributions constitute one of the most important extensions of the multivariate normal distribution. A centered random vector \( \textbf{x}\in \mathbb{R}^p\) with covariance matrix \(\Sigma\) is said to be elliptically distributed if it admits the stochastic representation
\[
        \textbf{x} \stackrel{d}{=} \xi \Sigma^{1/2} \textbf{u} ,
\]
where \(\textbf{u}\) is uniformly distributed on the unit sphere in \(\mathbb{R}^p\), \(\xi\) is a nonnegative radial random variable independent of \(\textbf{u}\), and \(\mathbb{E} \xi^2=p\). 
A systematic treatment of elliptical distributions is provided by \citet{FangKotzNg2018}. Elliptical distributions have become a standard distribution assumption in multivariate analysis and high-dimensional statistics. 
Representative applications include sufficient dimension reduction under elliptical predictor distributions \citep{Li2018,ChenZhangZhou2022}, 
variable selection and feature screening \citep{LiLiuLou2017, zhou2020model}, and high-dimensional testing \citep{wang2015high, cai2023tests, xu2025adjusted, zhao2026hypothesis}.
These examples illustrate that ellipticity often serves as a structural assumption underlying downstream statistical procedures, making it important to assess its plausibility before such methods are applied. Let \(\mathcal{P}\) denote the class of elliptical distributions. Given a sample \(X = (X_1, \ldots, X_n) \), where \((X_i)_{i=1}^n\) are i.i.d. \(p\)-dimensional random vectors drawn from a population distribution \(P\), we are interested in testing
 \begin{equation}
 \label{eppli_test}
     H_0: P \in \mathcal{P} \quad \text{versus} \quad H_1: P \notin \mathcal{P} .
 \end{equation} 

There is a large number of classical literature on testing spherical and elliptical symmetry in low dimensions. Classical approaches include tests based on projections, empirical processes, characteristic functions, moment restrictions, ranks, graphical diagnostics, and bootstrap approximations. For spherical symmetry, representative examples include
\citet{Baringhaus1991,KoltchinskiiLi1998,LiangFangHickernell2008, HenzeHlavkaMeintanis2014}. For elliptical symmetry, 
\citet{LiFangZhu1997} proposed diagnostic Q--Q probability plots based on statistics invariant under orthogonal transformations, 
\citet{ManzottiPerezQuiroz2002} considered tests based on the uniformity of standardized directions, 
\citet{ZhuNeuhaus2003} developed conditional tests that are exactly valid when the center and shape matrix are known, and \citet{HufferPark2007} proposed a chi-square-type statistic based on slicing. More recent developments include pseudo-Gaussian and optimal tests against skew-elliptical alternatives
\citep{CassartHallinPaindaveine2008,BabicGelbgrasHallinLey2021},
Kolmogorov--Smirnov-type tests based on characterizations of spherical symmetry
\citep{AlbisettiBalabdaouiHolzmann2020}, and diagnostic smooth tests
\citep{DucharmeDeMicheaux2020}. These methods have substantially advanced the classical theory of testing elliptical distributions. However, these methods were developed under fixed-dimensional asymptotics, and a number of them require inversion of the sample covariance matrix, which becomes unstable or infeasible when \(p\) is comparable to or exceeds \(n\), thereby limiting their direct applicability in high dimensions.

Modern datasets in genomics, finance, imaging, and signal processing often have a dimension comparable to or larger than the sample size. Even for testing multivariate normality, asymptotically valid high-dimensional procedures have only recently become available. Representative developments include the nearest-neighbor approach of \citet{ChenXia2023} and the aggregation test of moments of \citet{CuiZhang2026}. 
While for testing high-dimensional elliptical distributions, \citet{TangLi2024} proposed a kernel-embedding test based on the radial--directional characterization of ellipticity and established its consistency when the dimension $p$ is in the order of $n^{1/4}$. More recently, \citet{WangLopes2026} developed a novel high-dimensional goodness-of-fit test procedure for elliptical distributions based on two estimators of a common kurtosis parameter. 
Their procedure remains valid when \(p\) and \(n\) diverge proportionally.
The test, however, uses sample splitting to estimate an average marginal kurtosis and a common structural quantity. Consequently, it may have limited power against alternatives for which departures from ellipticity are not manifested through a discrepancy in average marginal kurtosis.

In this paper, we develop a kurtosis-based correlation-free goodness-of-fit test for elliptical distributions in high dimensions. Motivated by \citet{WangLopes2026}, our starting point is the simple but informative implication that, under an elliptical distribution with nondegenerate marginal variances and finite fourth moments, all coordinate-wise marginal kurtoses
\[
        \kappa_j
        =
        \frac{\mathbb{E}X_{1j}^4}{(\mathbb{E}X_{1j}^2)^2},
        \qquad j=1,\ldots,p ,
\]
are identical. The proposed statistic is designed to detect sparse and localized departures from the moment structure implied by ellipticity. Methodically, our proposed procedure does not require estimation of the inverse population covariance matrix. Theoretically, we establish high-dimensional Gaussian approximation for
the test statistic under general correlation structures. 

The main contribution of this paper is to establish theoretical validity of the proposed test procedure in genuinely high-dimensional regimes. To the best of our knowledge, this is the first high-dimensional goodness-of-fit test for elliptical distributions that is correlation-free while allowing \( \log p=o(n^{\alpha}) \) for some \(\alpha>0\). The analysis is challenging for two reasons. First, the test statistic involves the maximum of nonlinear and dependent kurtosis estimators, whose covariance structure depends on unknown correlation matrix of the data. Second, the fourth-powers of coordinates generally fail to satisfy sub-exponential tail under the null, even in the Gaussian case. Thus a key difficulty is that classical Gaussian approximation results do not apply directly to the proposed test statistics. To overcome these challenges, we first decompose the test statistic into a leading linear component and a nonlinear remainder. The nonlinear remainder is controlled using Rosenthal-type inequalities, whereas a Gaussian approximation is established for the leading linear component. We then construct Gaussian multiplier bootstrap based on this leading component to determine the critical value, thereby accommodating the unknown covariance structure. To dispense with sub-Gaussian tail assumptions, we establish a maximal moment inequality for elliptical distributions as a by-product, which may be of independent interest. The resulting test controls the asymptotic size under high-dimensional regimes that allow \(\log p=o(n^{1/14})\) under finite moment conditions. We establish the validity of the Gaussian multiplier bootstrap procedure and also conduct power analysis. 

The remainder of the paper is organized as follows. Section~\ref{sec_2} introduces the proposed test statistic and its high-dimensional Gaussian approximation. Section~\ref{sec_3} develops the multiplier-bootstrap test procedure and establishes its theoretical validity. Section~\ref{sec_4} reports simulation studies and real-data applications. Technical proofs, and other theoretical results are provided in the supplementary materials.

\textit{Notations}. Let \((X_i)_{i=1}^n\) be i.i.d. \(p\)-dimensional observations drawn from a population distribution \(P\). For two positive arrays \(\{a_n\},\{b_n\}\), we write \(a_n = o(b_n)\) to mean \(a_n/b_n \to 0\) as \(n \to \infty\). We write \(a_n\lesssim b_n\) if \(a_n \leq C\cdot b_n\) for sufficiently large \(n\), where \(C\) is a positive constant. If \(a_n \lesssim b_n\) and \(b_n \lesssim a_n\), we write \(a_n \asymp b_n\). For a matrix $A=(a_{ij})$, we denote by $A^{\odot m}=(a_{ij}^{m})$ its \(m\)-th Hadamard power. The $L^p$ norm of a random variable $\xi$ is defined as $\|\xi\|_{L_p} = [\mathbb{E}(|\xi|^p)]^{1/p}$, and the entrywise $l_p$ norm of a real matrix $A=(a_{ij})$ is defined as $\|A\|_{p} = (\sum_{i,j}|a_{ij}|^p )^{1/p}$. We set $\| A \|_{\text{max}} = \max_{i,j}|a_{ij}|$ . Let \(\Sigma_{jj}\) denote the \(j\)th diagonal entry of the matrix \(\Sigma\), and let \(R=(r_{jk})_{p\times p}\) denote the correlation matrix corresponding to \(\Sigma\). We use \(Z^{\Sigma}\) to denote the multivariate normal random vector \(Z^{\Sigma} \sim N_p(0,\Sigma)\), where \(\Sigma\) is its covariance matrix. Let \(\hat{\Sigma} = \frac{1}{n}\sum_{i=1}^nX_iX_i^{\top}\). We set $\text{med}_{1 \leq j \leq p}q_j$ to represent the median of $\{q_1, \cdots, q_p \}$. Let \(C,C_1,C_2,\ldots\) denote positive constants independent of \(n\) and \(p\), which may differ from line to line.

\section{Construction of test statistic and its high-dimensional Gaussian approximation}
\label{sec_2}

As discussed above, the goodness-of-fit problem in~\eqref{eppli_test} can be approached through the null implication that all coordinate-wise marginal kurtoses share a common value. A natural estimator of \(\kappa_j\) is
\[
\hat\kappa_j =\frac{\sum_{i=1}^n{X_{ij}^4}/n}{\big(\sum_{i=1}^n{X_{ij}^2/n}\big)^2}.
\]
Previous studies, such as \citet{WangLopes2026}, have considered aggregating the component-wise estimators of \(\kappa_j\) and comparing their average with an estimator of
\[
\kappa_c := \frac{3\bigl(\operatorname{var}(\lVert X_1\rVert_2^2)+\operatorname{tr}(\Sigma)^2\bigr)}
{2\operatorname{tr}(\Sigma^2)+\operatorname{tr}(\Sigma)^2}
,\]
based on the identity \( p^{-1}\sum_{j=1}^p\kappa_j = \kappa_c\) under \(H_0\). This average-based identity, however, fails to distinguish the null from alternatives under which the marginal kurtosis parameters are heterogeneous but their average remains equal to \(\kappa_c\). For instance, when \(p=2\), \(\kappa_1 = \kappa_c + 1\), and \(\kappa_2 = \kappa_c - 1\), we still have \(p^{-1}\sum_{j=1}^p \kappa_j = \kappa_c \), whereas \(\kappa_j = \kappa_c \) does not hold for all \(j\). This observation suggests that averaging-based procedures may lose power against certain non-elliptical alternatives with heterogeneous marginal kurtoses. Motivated by this consideration, we propose the following test statistic:
\begin{equation}
    \label{Test_statistic}
    T_n = \sqrt{n} \max_{1\leq j \leq p-1} \big|\hat\kappa_{j+1} - \hat\kappa_j \big|.
\end{equation}
Under the null hypothesis, \(T_n\) should be small. Under alternatives for which not all the marginal kurtoses are identical, then at least one $\kappa_{j+1}-\kappa_j$ is nonzero, and $T_n$ would be large. A formal power analysis is provided in Sections~\ref{sec_3}.

The asymptotic theory for~\eqref{Test_statistic} is based on a
Berry--Esseen bound for maxima of sums of high-dimensional random vectors.
To begin with, we impose the following conditions:
\begin{flalign*}
\textnormal{(A.1)}
&&
\mathbb E\xi_i^2=p,
\qquad
\operatorname{var}\bigg(
    \frac{\xi_i^2-p}{\sqrt p}
\bigg)
=
\tau+o(1),
&&
\phantom{\textnormal{(A.1)}}
\end{flalign*}
for some constant \(\tau\geq0\), and
\begin{flalign*}
\textnormal{(A.2)}
&&
\begin{aligned}
    b
    \leq
    \Sigma_{jj}
    \leq
    B
    <
    \infty,
    \qquad
    1\leq j\leq p,
    \\
    |r_{j,j+1}|
    \leq
    r_0
    <
    1,
    \qquad
    1\leq j\leq p-1,
\end{aligned}
&&
\phantom{\textnormal{(A.2)}}
\end{flalign*}
for some constants \(B,b,r_0>0\). Condition~(A.1) imposes a mild
normalization on the radial variability, while Condition~(A.2) provides
uniform non-degenerate bounds on the marginal variances and correlations.
We further assume
\begin{flalign*}
\textnormal{(A.3)}
&&
\bigg\|
    \frac{\xi_i^2-p}{\sqrt p}
\bigg\|_{L_4}
=
o\bigl(p^{1/4}\bigr),
\qquad
\mathbb E\bigl(\xi_i^{2k}\bigr)
=
O\bigl(p^k\bigr),
\qquad
k=5,6,7,8.
&&
\phantom{\textnormal{(A.3)}}
\end{flalign*}
Similar moment conditions have also appeared in \citet{Jiang2019}. Condition (A.3) is weaker than the assumption imposed in \citet{WangLopes2026}:
\[
    \bigg\|
    \frac{\xi_i^2-p}{\sqrt p}
    \bigg\|_{L_8}
    =
    o\big(p^{1/4}\big).
\]
It is worth noting that these assumptions do not impose restrictions on the correlation structure and are satisfied by many commonly used radial distributions, including those induced by \(\chi_p^2\) and \(\text{Poisson}(p)\). We first analyze the asymptotic behavior of \(\hat\kappa_j - \kappa_j\). Let
\begin{equation}
r_k = \frac{\mathbb{E}({\xi}^{2k})}{\mathbb{E}\big(\| \textbf{z} \|_2^{2k}\big) } , \quad k\geq 1,
    \nonumber
\end{equation}
where $\textbf{z} \sim N_p(0, I_p)$, and let \[
    \varepsilon_{n,p}=\bigg\{
    \frac{\log^{11}(2pn)}{n}
    \bigg\}^{1/6} + \bigg\{
    \frac{\log^{7} (2pn)}{\sqrt n}
    \bigg\}^{1/3} .\] The following theorem provides a nonasymptotic Gaussian approximation for the maximum coordinate-wise kurtosis estimation error.

\begin{theorem}
\label{thm1}
    Assume that Conditions~(A.1)--(A.3) hold and that \(n,p\to\infty\). Then, for a sample \((X_i)_{i=1}^n\) drawn from an elliptical distribution, we have
    \begin{equation}
        \label{thm1_cls1}
        \sup_{x \in \mathbb{R}}\bigg| \mathbb{P}\Big(\sqrt{n} \max_{1\leq j \leq p} \big| \hat\kappa_{j} - \kappa_j  \big| >x\Big) -\mathbb{P}\Big( \max_{1 \leq j \leq p} \big| Z_j^{\tilde{\Sigma}^*} \big| > x \Big) \bigg| \leq C\varepsilon_{n,p},
    \end{equation}
    where \(\tilde{\Sigma}^{*}_{jk} = r_4 \big[ 24r_{jk}^4 + 72r_{jk}^2 + 9 \big] + r_2^3\big[ 72r_{jk}^2 + 36 \big] - 9r_2^2 - r_2r_3\big[ 144r_{jk}^2 + 36 \big]\) for \(j \neq k\), \(j,k \in \{1,\cdots, p\}\), and \(\tilde{\Sigma}^{*}_{jj} = 105r_4-180r_2r_3 +108r_2^3 - 9r_2^2\). Hence, if \(\log p=o(n^{1/14})\), then as \(n,p \to \infty\),
    \begin{equation}
        \label{thm1_cls2}
            \sup_{x\in\mathbb R}
    \bigg|
    \mathbb P\Big(
    \sqrt n\max_{1\le j\le p}|\hat\kappa_j-\kappa_j|>x
    \Big)
    -
    \mathbb P\Big(
    \max_{1\le j\le p}|Z_j^{\tilde{\Sigma}}|>x
    \Big)
    \bigg|
    \to 0 ,
    \end{equation}
    where \(\tilde{\Sigma} = 24 R^{\odot 4}\).
\end{theorem}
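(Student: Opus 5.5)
We linearize each $\hat\kappa_j$ (delta method), apply a high-dimensional Gaussian approximation to the resulting sum of independent vectors, and absorb the remainder by anti-concentration.

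\textbf{Step 1: linearization.} Write $\sigma_j=\Sigma_{jj}$, $m_{4j}=\mathbb{E}X_{1j}^4=\kappa_j\sigma_j^2$, and set
\[
\bar A_j=\frac1n\sum_{i}\Big(\frac{X_{ij}^4}{\sigma_j^2}-\kappa_j\Big),\qquad
\bar B_j=\frac1n\sum_{i}\Big(\frac{X_{ij}^2}{\sigma_j}-1\Big).
\]
A Taylor expansion of $(a,b)\mapsto a/b^2$ at $(\kappa_j,1)$ gives
\[
\hat\kappa_j-\kappa_j=\bar A_j-2\kappa_j\bar B_j+\mathcal R_j,
\]
where $\mathcal R_j$ is quadratic in $(\bar A_j,\bar B_j)$ up to a factor controlled on the event $\{\max_j|\bar B_j|\le 1/2\}$. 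The leading term is then $n^{-1/2}\sum_i W_i$ with $W_{ij}=X_{ij}^4/\sigma_j^2-2\kappa_j X_{ij}^2/\sigma_j+\kappa_j$.

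\textbf{Step 2: covariance computation.} Under $H_0$ we have $X_{ij}/\sqrt{\sigma_j}\overset{d}{=}\xi\,\mathbf u^\top\mathbf a_j$. Writing $\xi\mathbf u\overset d= (\xi/\|\mathbf z\|)\mathbf z$, joint moments of the pair $(X_{ij},X_{ik})$ equal Gaussian moments of correlation $r_{jk}$ multiplied by the ratios $r_k$. In particular $\kappa_j=3r_2$, and $\mathbb{E}(W_{ij}W_{ik})$ combines Isserlis moments such as $\mathbb{E}Z_j^4Z_k^4=9+72r^2+24r^4$ scaled by $r_4$, cross terms scaled by $r_3$ and $r_2$, and so on. This yields $\tilde\Sigma^*$ exactly. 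Conditions (A.1) and (A.3) give $r_k=1+O(1/p)$ uniformly for $k\le 4$, so $\tilde\Sigma^*\to 24R^{\odot4}$ entrywise. For instance, on the diagonal $105-180+108-9=24$, and off the diagonal the $r^2$ and constant terms cancel in the limit.

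\textbf{Step 3: Gaussian approximation for the leading term.} We apply the Chernozhukov--Chetverikov--Kato (or Kato-type) Berry--Esseen bound for maxima of sums under polynomial moment conditions, i.e.\ $\mathbb{E}\max_j|W_{ij}|^q$ controlled rather than sub-exponential tails. The key input, and the main obstacle, is a maximal moment inequality
\[
\mathbb{E}\max_{j\le p}|X_{ij}|^{8}\lesssim \log^4 p\cdot \max_j\sigma_j^4.
\]
We would prove it by conditioning on $\xi$: $\mathbf u^\top\mathbf a_j$ is sub-Gaussian with variance proxy $\|\mathbf a_j\|^2/p$, so $\max_j|\mathbf u^\top\mathbf a_j|\lesssim\sqrt{\log p/p}$ in every $L_q$. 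Combining this with $\mathbb{E}\xi^{16}=O(p^8)$ from (A.3) via Hölder gives the bound. Plugging the resulting $B_n\asymp\log^2(pn)$-type envelope into the CCK bound gives the two terms of $\varepsilon_{n,p}$. The variances $\tilde\Sigma^*_{jj}$ are bounded below by a constant, using (A.2) and $r_k\to1$.

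\textbf{Step 4: remainder and conclusion.} We show that $\mathbb P(\sqrt n\max_j|\mathcal R_j|>\delta_n)$ is at most of order $\varepsilon_{n,p}$ for some $\delta_n$ with $\delta_n\sqrt{\log p}\lesssim\varepsilon_{n,p}$. This uses a Rosenthal/Fuk--Nagaev maximal inequality for $\max_j|\bar A_j|$ and $\max_j|\bar B_j|$ (of order $\sqrt{\log p/n}$ plus moment-tail terms via the Step 3 envelope), which makes $\sqrt n\max_j|\mathcal R_j|\lesssim \log^{c}(pn)/\sqrt n$ with high probability. Nazarov's anti-concentration inequality for $\max_j|Z_j|$ then absorbs both the perturbation $\delta_n$ and the event $\max_j|\bar B_j|>1/2$, which proves \eqref{thm1_cls1}.

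For \eqref{thm1_cls2}, note that $\varepsilon_{n,p}\to0$ exactly when $\log p=o(n^{1/14})$ (the second term dominates). We then compare $Z^{\tilde\Sigma^*}$ with $Z^{\tilde\Sigma}$ via the Gaussian comparison inequality, which is bounded by
\[
\big(\|\tilde\Sigma^*-\tilde\Sigma\|_{\max}\log^2p\big)^{1/3}\lesssim\big(p^{-1}\log^2p\big)^{1/3}\to0.
\]
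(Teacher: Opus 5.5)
Your proposal is correct and follows essentially the paper's route: the same linearization (your leading term coincides with the paper's centered $w_{ij}/\Sigma_{jj}^2$, where $w_{ij}=X_{ij}^4-6r_2\Sigma_{jj}X_{ij}^2$), the same covariance computation via Gaussian moments scaled by the ratios $r_k$, and the Chernozhukov--Chetverikov--Kato bound under the polynomial maximal-moment condition with $B_n\asymp\log^2p$. That envelope comes from the elliptical maximal moment lemma, which you prove as the paper does, via sub-Gaussian spherical projections and independence of $\xi$; the remaining steps are a maximal-inequality bound on the quadratic remainder absorbed by anti-concentration (you use Fuk--Nagaev where the paper uses symmetrization plus Markov) and a Gaussian comparison for the limit. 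One slip to fix: the two terms of $\varepsilon_{n,p}$ come from that moment condition with $q=4$, which requires $\mathbb E\max_jX_{1j}^{16}\lesssim\log^8p$ rather than the eighth-power bound you state, but your argument with $\mathbb E\xi^{16}=O(p^8)$ gives exactly this.
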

The true values of \(\kappa_j\)'s are typically unknown in practice, which leads us to consider $T_n$ instead. An argument parallel to that used in the proof of Theorem~\ref{thm1} yields the corresponding Gaussian approximation for $T_n$.
\begin{prop}
\label{prop_1}
    Under \(H_0\), assume the same conditions as in Theorem~\ref{thm1}. Then, $T_n$ satisfies
    \begin{equation}
        \label{cls1_prop1}
        \sup_{x \in \mathbb{R}}\bigg| \mathbb{P}\Big(T_n >x\Big) - \mathbb{P}\Big( \max_{1 \leq j \leq p-1} \big| Z_j^{\breve{\Sigma}^{*}}\big| > x \Big) \bigg| \leq C\varepsilon_{n,p},
    \end{equation}
    where \(\breve{\Sigma}^{*}_{jk} = \tilde{\Sigma}^{*}_{j+1,k+1}+\tilde{\Sigma}^{*}_{jk} - \tilde{\Sigma}^{*}_{j+1,k} - \tilde{\Sigma}^{*}_{j,k+1}\) for \(j \neq k\), \(j,k \in \{1,\cdots, p-1\}\), and \(\breve{\Sigma}^{*}_{jj} = 210r_4 - 360r_2r_3 + 216r_2^3 - 18r_2^2 -2\tilde{\Sigma}^*_{j+1,j}\). Consequently, if \(\log p=o(n^{1/14})\), then as \(n,p \to \infty\),
    \begin{equation}
        \label{cls2_prop1}
            \sup_{x\in\mathbb R}
    \bigg|
    \mathbb P\Big(
    T_n >x
    \Big)
    -
    \mathbb P\Big(
    \max_{1\le j\le p-1} |Z_j^{\breve{\Sigma}}|>x
    \Big)
    \bigg|
    \to 0 ,
    \end{equation}
    where \(\breve{\Sigma}_{jk} = 24\big( r_{j+1,k+1}^4 + r_{j,k}^4 - r_{j+1,k}^4 - r_{j,k+1}^4 \big)\) for \(j \neq k\) and \(\breve{\Sigma}_{jj} = 48(1 - r_{j+1,j}^4)\).
\end{prop}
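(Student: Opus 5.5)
Under $H_0$ all $\kappa_j$ equal a common value $\kappa$, so $\hat\kappa_{j+1}-\hat\kappa_j=(\hat\kappa_{j+1}-\kappa_{j+1})-(\hat\kappa_j-\kappa_j)$. We will therefore reuse the decomposition behind Theorem~\ref{thm1} and apply it to the differenced coordinates. Write $\sigma_j=\Sigma_{jj}$, $\mu_{4j}=\mathbb E X_{1j}^4$, and linearize the ratio $\hat\kappa_j=\hat m_{4j}/\hat m_{2j}^2$ around $(\mu_{4j},\sigma_j)$. This gives
\[
\sqrt n(\hat\kappa_j-\kappa_j)=\frac{1}{\sqrt n}\sum_{i=1}^n Y_{ij}+\sqrt n\,\mathcal R_j,\qquad Y_{ij}=\frac{X_{ij}^4-\mu_{4j}}{\sigma_j^2}-\frac{2\kappa(X_{ij}^2-\sigma_j)}{\sigma_j},
\]
where $\mathcal R_j$ collects the second-order terms (products of centered second and fourth sample moments). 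Setting $W_{ij}=Y_{i,j+1}-Y_{ij}$ for $1\le j\le p-1$, we get
\[
T_n=\max_{j\le p-1}\Bigl|n^{-1/2}\sum_i W_{ij}+\sqrt n(\mathcal R_{j+1}-\mathcal R_j)\Bigr|.
\]

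The first step is the covariance computation. Since $\mathrm{Cov}(W_{\cdot j},W_{\cdot k})=\tilde\Sigma^*_{j+1,k+1}+\tilde\Sigma^*_{jk}-\tilde\Sigma^*_{j+1,k}-\tilde\Sigma^*_{j,k+1}$, with $\tilde\Sigma^*=\mathrm{Cov}(Y_{\cdot})$ as identified in the proof of Theorem~\ref{thm1}, the expression for $\breve\Sigma^*$ follows immediately. The diagonal entry is $2\tilde\Sigma^*_{jj}-2\tilde\Sigma^*_{j+1,j}$, which matches the stated formula because $\tilde\Sigma^*_{jj}$ does not depend on $j$. Under (A.2), $|r_{j,j+1}|\le r_0<1$, so the variances $\breve\Sigma^*_{jj}$ are bounded away from zero. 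This is the nondegeneracy needed for Gaussian approximation and anti-concentration. It is the only place where the correlation condition in (A.2) is used; by contrast, Theorem~\ref{thm1} needs only $r_{jj}=1$.

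The second step handles the linear part. We apply the same high-dimensional Berry--Esseen bound used for Theorem~\ref{thm1} to the vectors $(W_{i1},\dots,W_{i,p-1},-W_{i1},\dots,-W_{i,p-1})$, so that the maximum of absolute values becomes a maximum over $2(p-1)$ coordinates. The required moment quantities are $\max_j\mathbb E|W_{ij}|^{2+k}$ for $k=1,2$ and the envelope $\mathbb E\max_{i,j}|W_{ij}|^q$. Both are bounded by twice the corresponding bounds for $Y$, which come from the elliptical maximal moment inequality and (A.3) exactly as before. This yields the rate $C\varepsilon_{n,p}$ for $\max_j|n^{-1/2}\sum_iW_{ij}|$ against $\max_j|Z_j^{\breve\Sigma^*}|$.

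The third step, which we expect to be the main obstacle, is the remainder. We need $\mathbb P\bigl(\sqrt n\max_j|\mathcal R_{j+1}-\mathcal R_j|>\delta_n\bigr)\lesssim\varepsilon_{n,p}$ for some $\delta_n$ with $\delta_n\sqrt{\log p}\lesssim\varepsilon_{n,p}$. The plan is to bound $\max_j|\mathcal R_{j+1}-\mathcal R_j|\le 2\max_j|\mathcal R_j|$ and reuse the Rosenthal-type control of Theorem~\ref{thm1}. Nazarov's anti-concentration inequality for the Gaussian maximum, which requires the variance lower bound from the first step, then converts the perturbation $\delta_n$ into an error of order $\delta_n\sqrt{\log p}$. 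Combining the three steps gives \eqref{cls1_prop1}.

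For \eqref{cls2_prop1}, we first show $\|\breve\Sigma^*-\breve\Sigma\|_{\max}\to0$ under (A.1) and (A.3). This uses $r_k\to1$ at a rate controlled by $\tau$: expanding gives $105-180+108-9=24$ on the diagonal and $24r^4$ off the diagonal, since the $r^2$ terms cancel to first order. We then apply a Gaussian comparison inequality with error $\lesssim(\Delta\log^2p)^{1/3}$, where $\Delta=\|\breve\Sigma^*-\breve\Sigma\|_{\max}$. The remaining point to verify is that $\Delta$ is $o(\log^{-2}p)$. If only $\Delta=o(1)$ is available, we would instead use $\mathrm{var}((\xi^2-p)/\sqrt p)=\tau+o(1)$ to get $r_k-1=O(1/p)$, and hence $\Delta=O(1/p)$. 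Since $\log p=o(n^{1/14})$ also forces $\varepsilon_{n,p}\to0$, this completes the argument.
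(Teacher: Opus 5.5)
Your proposal is correct and follows essentially the same route as the paper's proof in Appendix~B. Your $Y_{ij}$ coincides with the paper's $(w_{ij}-\mathbb E w_{ij})/\Sigma_{jj}^2$, and your $W_{ij}$ is its adjacent contrast; as in the paper, the remainder is bounded by $2\mathcal E_{n,p}$ from Theorem~\ref{thm1} plus anti-concentration, the linear part is handled by the Chernozhukov--Chetverikov--Kato bound on the augmented $2(p-1)$-vector, and the limit follows from Gaussian comparison. Your rate $r_k-1=O(1/p)$, which does follow from (A.1) and the $L_4$ part of (A.3), is slightly sharper than the paper's $o(p^{-3/4})$, though either suffices; note also that the envelope condition needed is $\mathbb E\max_j|W_{ij}|^4$ for each fixed $i$, not a maximum over $i$ as well.
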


\noindent\textbf{Remark.}
The proofs of Theorem~\ref{thm1} and Proposition~\ref{prop_1} proceed in three main steps. First, we decompose \(\hat{\kappa}_j -\kappa_j \) into a leading linear component and several nonlinear remainder terms. Second, we establish that these remainder terms are uniformly negligible at the \(n^{-1/2}\) scale under the stated moment conditions. Third, we invoke high-dimensional Gaussian approximation results for maxima of sums of random vectors \citep{victor2017clt}, verify the requisite moment conditions, and characterize the corresponding covariance structure. A key technical difficulty is that fourth-power coordinate transformations generally do not satisfy dimension-free sub-exponential tail conditions. Indeed, even in the Gaussian case, $\mathbb{E}\{\exp(tX_{1j}^4)\} = \infty$ for all $t > 0$. Standard concentration inequalities for uniformly sub-exponential coordinates are therefore not directly applicable. To circumvent this difficulty, we establish a maximal moment inequality for elliptical distributions that controls \(\mathbb{E}(\max_{1\leq j\leq p} X_{1j}^{2k})\), which may be of independent interest.
\begin{lemma}
\label{lemma_elliptical_max_expect}
Assume \( \textbf{x}=\xi\,\Sigma^{1/2}\textbf{u}\), where \(\textbf{u}\sim \mathrm{Unif}(S^{p-1})\), \(\xi\perp \textbf{u}\),
\(\Sigma\succ 0\), \(\max_{1\le j\le p}\Sigma_{jj}=O(1)\), and \(\mathbb E|\xi|^{2k}=O(p^k)\) for some \(k \geq 1\). Then,
\[
\mathbb E\Big[\max_{1\le j\le p} x_{j}^{2k}\Big]=O\big(\log^k p \big).
\]
\end{lemma}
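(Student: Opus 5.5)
Our approach is to write \(\textbf{u}=\textbf{z}/\|\textbf{z}\|_2\) with \(\textbf{z}\sim N_p(0,I_p)\) independent of \(\xi\). Then \(\textbf{x}=\xi\,\|\textbf{z}\|_2^{-1}\,\Sigma^{1/2}\textbf{z}\), and \(\textbf{g}:=\Sigma^{1/2}\textbf{z}\sim N_p(0,\Sigma)\). Hence
\[
\max_{1\le j\le p} x_j^{2k}=\frac{\xi^{2k}}{\|\textbf{z}\|_2^{2k}}\,\max_{1\le j\le p} g_j^{2k}.
\]
The factor \(\xi^{2k}\) is independent of \((\textbf{z},\textbf{g})\) and satisfies \(\mathbb E\xi^{2k}=O(p^k)\). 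So it suffices to show
\[
\mathbb E\Big[\|\textbf{z}\|_2^{-2k}\max_j g_j^{2k}\Big]=O\big(p^{-k}\log^k p\big).
\]
The difficulty is that \(\|\textbf{z}\|_2\) and \(\textbf{g}\) are dependent, so the expectation does not factor. We avoid this by using the Cauchy--Schwarz inequality:
\[
\mathbb E\Big[\|\textbf{z}\|_2^{-2k}\max_j g_j^{2k}\Big]\le \big(\mathbb E\|\textbf{z}\|_2^{-4k}\big)^{1/2}\big(\mathbb E\max_j g_j^{4k}\big)^{1/2}.
\]

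For the first factor, \(\|\textbf{z}\|_2^2\sim\chi^2_p\). For \(p>4k\), the explicit formula gives \(\mathbb E(\chi^2_p)^{-2k}=\Gamma(p/2-2k)/(2^{2k}\Gamma(p/2))\asymp p^{-2k}\). Thus \((\mathbb E\|\textbf{z}\|_2^{-4k})^{1/2}=O(p^{-k})\), and this is legitimate because \(k\) is fixed and \(p\to\infty\). For the second factor, each \(g_j\sim N(0,\Sigma_{jj})\) with \(\Sigma_{jj}\le B\), so \(\max_j|g_j|\) is the maximum of \(p\) Gaussians with bounded variance. We bound it through
\[
\mathbb E\max_j|g_j|^{m}\le \mathbb E\Big(\max_j|g_j|-\mathbb E\max_j|g_j|\Big)_+^{m}\cdot 2^{m-1}+2^{m-1}\big(\mathbb E\max_j|g_j|\big)^m .
\]
The standard bound gives \(\mathbb E\max_j|g_j|\le \sqrt{2B\log(2p)}\). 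The Borell--TIS inequality gives Gaussian concentration of \(\max_j|g_j|\) around its mean with variance proxy \(B\), so the centered \(m\)-th moment is \(O(B^{m/2})\). With \(m=4k\) this yields \(\mathbb E\max_j g_j^{4k}=O(\log^{2k}p)\), and its square root is \(O(\log^k p)\). An alternative that avoids Borell--TIS is to use \(\mathbb E\max_j|g_j|^m\le (\sum_j\mathbb E|g_j|^{mq})^{1/q}\) with \(q=\log p\), together with Gaussian moment growth \((mq)^{m/2}\). This also gives \(O(\log^{m/2}p)\).

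Combining the two factors,
\[
\mathbb E\max_j x_j^{2k}\le \mathbb E\xi^{2k}\cdot O(p^{-k})\cdot O(\log^k p)=O(\log^k p).
\]
We expect the main obstacle to be the decoupling of \(\|\textbf{z}\|_2\) from \(\textbf{g}\), which Cauchy--Schwarz handles. The only care needed is that the negative moment of \(\chi^2_p\) of order \(2k\) is finite and of the right order, which requires \(p>4k\) and so holds for large \(p\). We also need the assumption \(\Sigma\succ0\) only to make \(\Sigma^{1/2}\) well defined; the bounded diagonal is the only quantitative input.
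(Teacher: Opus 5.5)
Your proof is correct, but it takes a different route from the paper's.

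The paper stays on the sphere. It writes $x_j=\xi\langle a_j,\textbf{u}\rangle$, with $a_j^\top$ the $j$th row of $\Sigma^{1/2}$, so that $\|a_j\|_2^2=\Sigma_{jj}$. It then factors $\mathbb E\max_j x_j^{2k}=\mathbb E\xi^{2k}\,\mathbb E W^{2k}$, where $W=\max_j|\langle a_j,\textbf{u}\rangle|$, directly from $\xi\perp\textbf{u}$. To bound $\mathbb E W^{2k}$ it uses the spherical-cap estimate $\mathbb P(|\langle a_j,\textbf{u}\rangle|\geq t)\leq 2\exp\{-pt^2/(2\Sigma_{jj})\}$, a union bound over $j$, and an explicit integration of $t^{2k-1}\min\{1,2pe^{-ct^2}\}$. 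This gives $\mathbb E W^{2k}\lesssim(2\max_j\Sigma_{jj}/p)^k\log^k(2p)$ with explicit constants, valid for every $p$.

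Your version replaces the concentration-of-measure estimate on the sphere with textbook Gaussian facts: negative moments of $\chi^2_p$, and Borell--TIS or the $\ell_q$ trick for Gaussian maxima. The cost is that Cauchy--Schwarz doubles the moment order and you need $p>4k$. That is harmless here, since $k$ is fixed and the claim is asymptotic.

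The dependence you single out as the main obstacle can in fact be removed exactly. Since $\|\textbf{z}\|_2$ is independent of $\textbf{z}/\|\textbf{z}\|_2$, you have $\max_j|g_j|=\|\textbf{z}\|_2W$, and hence
\[
\mathbb E\max_j g_j^{2k}=\mathbb E\|\textbf{z}\|_2^{2k}\,\mathbb E W^{2k}.
\]
By Jensen, $\mathbb E\|\textbf{z}\|_2^{2k}\geq(\mathbb E\|\textbf{z}\|_2^2)^k=p^k$. So $\mathbb E W^{2k}\leq p^{-k}\,\mathbb E\max_j g_j^{2k}=O(p^{-k}\log^k p)$, with no Cauchy--Schwarz step and no restriction on $p$. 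This is the same radial--directional independence the paper uses to compute $\mathbb E X_{ij}^{2k}$.
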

This lemma provides an upper bound for the expected maximum of even coordinate-wise powers under an elliptical distribution. An intuitive example to check this property is the multivariate normal distribution. For $\textbf{x} \sim N_p(0, I_p)$, this rate is consistent with the familiar relation $\max_{1 \leq j \leq p}| \text{x}_{j}| = O_{\mathbb{P}}(\sqrt{\log p})$.


\section{Bootstrap test procedure and its validity}
\label{sec_3}

Although Proposition \ref{prop_1} establishes a Gaussian approximation for \(T_n\), its direct implementation is generally infeasible because the limiting covariance structure depends on the unknown correlation matrix \(R\). Following the multiplier-bootstrap framework of \citet{Peter1981,victor2013}, we employ a Gaussian multiplier bootstrap to approximate the distribution of \(\max_{1 \leq j \leq p-1} |Z_j^{\breve{\Sigma}}|\).

In this section, we first introduce the bootstrap statistic and present the complete implementation of the proposed test procedure. We then establish the asymptotic validity of the test by showing that, under the null hypothesis, its rejection probability is controlled at the prespecified significance level \(\alpha\in(0,1)\). Finally, we investigate the power properties of the proposed test and demonstrate that it is consistent against a broad class of alternatives.

\subsection{Gaussian multiplier bootstrap}

To begin with, we define the random vector \(w_i=(w_{i1},\ldots,w_{i,p-1})^\top\), where
\begin{equation}
    w_{ij}
    =
    \Sigma_{j+1,j+1}^{-2}
    \big(
    X_{i,j+1}^{4}-6r_2\Sigma_{j+1,j+1}X_{i,j+1}^2
    \big)
    -
    \Sigma_{jj}^{-2}
    \big(
    X_{ij}^{4}-6r_2\Sigma_{jj}X_{ij}^2
    \big),
    \quad 1\leq j\leq p-1.
    \nonumber
\end{equation}
Its estimator is \(v_i=(v_{i1},\ldots,v_{i,p-1})^\top\), with
\begin{equation}
    v_{ij}
    =
    \hat\Sigma_{j+1,j+1}^{-2}
    \big(
    X_{i,j+1}^{4}-6\hat\Sigma_{j+1,j+1}X_{i,j+1}^2
    \big)
    -
    \hat\Sigma_{jj}^{-2}
    \big(
    X_{ij}^{4}-6\hat\Sigma_{jj}X_{ij}^2
    \big),
    \quad 1\leq j\leq p-1.
    \nonumber
\end{equation}
Here $\hat\Sigma_{jj}$'s are sample estimator of $\Sigma_{jj}$'s. Let
\[
    \bar v=\frac1n\sum_{i=1}^n v_i,
    \qquad
    \hat\Omega
    =
    \frac1n\sum_{i=1}^n
    \big(v_i-\bar v\big)\big(v_i-\bar v\big)^\top.
\]
For \(1\leq j,k\leq p\), let
\[
    \hat r_{jk}
    =
    \frac{\hat\Sigma_{jk}}
    {\big(\hat\Sigma_{jj}\hat\Sigma_{kk}\big)^{1/2}}.
\]
Set
\[
    \hat\omega_j
    =
    48\big(1-\hat r_{j,j+1}^4\big),
    \qquad
    \hat q_j
    =
    \begin{cases}
    \displaystyle
    \frac{\hat\Omega_{jj}}{\hat\omega_j},
    &
    \hat\omega_j>0,
    \\[2mm]
    1,
    &
    \hat\omega_j\leq0,
    \end{cases}
\]
and define
\[
    \mathcal G_{n,p}
    =
    \mathcal D_{n,p}
    \cap
    \bigg\{
    \min_{1\leq j\leq p-1}\hat\omega_j>0,
    \quad
    \operatorname{med}_{1\leq j\leq p-1}\hat q_j>0
    \bigg\}, \qquad \mathcal D_{n,p} = \bigg\{
    \min_{1\leq j\leq p}\hat\Sigma_{jj}>0
    \bigg\}.
\]
Define the bootstrap variable
\begin{equation}
    S_n
    =
    \frac{1}{\sqrt n}
    \sum_{i=1}^n \hat a_{n,p}\big(w_i-\bar w\big)e_i,
    \qquad
    \bar w=\frac1n\sum_{i=1}^n w_i,
    \nonumber
\end{equation}
where \((e_i)_{i=1}^n\) are i.i.d. \(N(0,1)\) multipliers and
\begin{equation}
    \label{bootstrap_factor}
    \hat a_{n,p}
    =
    \begin{cases}
    \displaystyle
    \bigg\{
    \frac{\max_{1\leq j\leq p-1}\hat q_j}
    {\operatorname{med}_{1\leq j\leq p-1}\hat q_j}
    \bigg\}^{-1/8},
    &
    \mathcal G_{n,p},
    \\[3mm]
    1,
    &
    \mathcal G_{n,p}^c.
    \end{cases}
\end{equation}
Condition~(A.2) implies \(\mathbb P(\mathcal G_{n,p})\to1\) under \(H_0\). The second branch is just to make \(\hat a_{n,p}\) well defined.
The variable \(S_n\) remains infeasible since \(w_i\) depends on unknown population quantities under $H_0$. In practice, we use the following multiplier bootstrap statistic
\begin{equation}
    \label{gaussian_boot}
    \hat S_n
    =
    \frac{1}{\sqrt n}
    \sum_{i=1}^n \hat a_{n,p}\big(v_i-\bar v\big)e_i.
\end{equation}
The factor $\hat a_{n,p}$ in \eqref{gaussian_boot} attenuates the influence of a small number of coordinates whose empirical eighth-moment variability is unusually large relative to the typical coordinate. It can be shown that $|\hat{a}_{n,p} - 1|$ is asymptotically negligible, but it turns out that retaining this term is beneficial at a finite-sample level. The form of \(v_{ij}\) arises from the leading linear term in the expansion of \(\hat\kappa_j - \kappa_j\). By repeatedly generating multiplier variables and recomputing \(\hat S_n\), the bootstrap distribution of \(\|\hat S_n\|_{\infty}\) can be used to approximate the distribution of \(\max_{1 \leq j \leq p-1} \big|Z_j^{\breve{\Sigma}}\big|\). Algorithm~\ref{algorithm:mytest} summarizes the resulting test procedure.

\begin{algorithm}[!htbp]
\spacingset{1}
\small
\caption{\small{Multiplier-bootstrap test procedure}}
\label{algorithm:mytest}
\begin{algorithmic}[1]
\State \textbf{Input:} observations \(\{X_i\}_{i=1}^n\), number of bootstrap iterations \(L\), significance level \(\alpha\).
\vspace{0.2mm}

\State Construct \(v_i\) from \(X_i\):
\[
    v_{ij}
    =
    \hat\Sigma_{j+1,j+1}^{-2}
    \big(
    X_{i,j+1}^{4}-6\hat\Sigma_{j+1,j+1}X_{i,j+1}^2
    \big)
    -
    \hat\Sigma_{jj}^{-2}
    \big(
    X_{ij}^{4}-6\hat\Sigma_{jj}X_{ij}^2
    \big),
    \quad 1\leq j\leq p-1,
\]
\[
    \hat\Sigma_{jk}
    =
    \frac1n\sum_{i=1}^n X_{ij}X_{ik}.
\]

\State Compute
\[
    \bar v=\frac1n\sum_{i=1}^n v_i,
    \qquad
    \hat\Omega
    =
    \frac1n\sum_{i=1}^n
    \big(v_i-\bar v\big)\big(v_i-\bar v\big)^\top,
\]
\[
    \hat r_{j,j+1}
    =
    \frac{\hat\Sigma_{j,j+1}}
    {\big(\hat\Sigma_{jj}\hat\Sigma_{j+1,j+1}\big)^{1/2}},
    \qquad
    \hat\omega_j
    =
    48\big(1-\hat r_{j,j+1}^4\big),
    \qquad
    \hat q_j
    =
    \begin{cases}
    \displaystyle
    \frac{\hat\Omega_{jj}}{\hat\omega_j},
    &
    \hat\omega_j>0,
    \\[2mm]
    1,
    &
    \hat\omega_j\leq0,
    \end{cases}
\]
\[
    \hat a_{n,p}
    =
    \begin{cases}
    \displaystyle
    \bigg\{
    \frac{\max_{1\leq j\leq p-1}\hat q_j}
    {\operatorname{med}_{1\leq j\leq p-1}\hat q_j}
    \bigg\}^{-1/8},
    &
    \mathcal G_{n,p},
    \\[3mm]
    1,
    &
    \mathcal G_{n,p}^c.
    \end{cases}
\]

\State Compute the test statistic
\[
    T_n
    =
    \sqrt n\max_{1\leq j\leq p-1}
    |\hat\kappa_{j+1}-\hat\kappa_j|,
\]
\[
    \hat\kappa_j
    =
    \frac{n^{-1}\sum_{i=1}^n X_{ij}^4}
    {\big(n^{-1}\sum_{i=1}^n X_{ij}^2\big)^2}.
\]

\For{\(l=1,\ldots,L\)}
    \State Generate bootstrap multipliers \((e_i^{(l)})_{i=1}^n\stackrel{\mathrm{i.i.d.}}{\sim}N(0,1)\).
    \State Calculate
    \[
        \hat S_n^{(l)}
        =
        \frac{1}{\sqrt n}
        \sum_{i=1}^n 
        \hat a_{n,p}\big(v_i-\bar v\big) e_i^{(l)}.
    \]
    \State Set
    \[
        T_n^{(l)}=\big\|\hat S_n^{(l)}\big\|_{\infty}.
    \]
\EndFor

\State Take \(\hat c_{1-\alpha}^{(L)}\) as the \(100(1-\alpha)\)th sample percentile of \(\{T_n^{(1)},\ldots,T_n^{(L)}\}\).

\If{\(T_n>\hat c_{1-\alpha}^{(L)}\)}
    \State \Return \(1\) \Comment{Reject the null hypothesis.}
\Else
    \State \Return \(0\) \Comment{Do not reject the null hypothesis.}
\EndIf

\end{algorithmic}
\end{algorithm}

\subsection{The validity of Bootstrap procedure}

The following theorem shows that the conditional distribution of the bootstrap statistic consistently approximates the Gaussian distribution appearing in Proposition~\ref{prop_1}. 

\begin{theorem}
\label{thm2}
Under \(H_0\), assume the same conditions as in Theorem~\ref{thm1}. Let
\[
    \mathring S_n
    =
    \frac1{\sqrt n}
    \sum_{i=1}^n
    \big(w_i-\bar w\big)e_i.
\]
Then, for any \(\eta\in(0,e^{-1})\), with probability at least \(1-\eta\),
\begin{equation}
\label{cls1_thm2}
\begin{aligned}
&\sup_{x\in\mathbb R}
\bigg|
\mathbb P\Big(
\max_{1\leq j\leq p-1}|\mathring S_{nj}|>x
\,\big|\,
\{X_i\}_{i=1}^n
\Big)
-
\mathbb P\Big(
\max_{1\leq j\leq p-1}
\big|Z_j^{\breve\Sigma^*}\big|>x
\Big)
\bigg|                                                   \\
&\quad\leq
C\Bigg[
\bigg\{
\frac{\log^9(2pn)\log^2(1/\eta)}{n}
\bigg\}^{1/6}
+
\bigg\{
\frac{\log^7(2pn)}{\eta^{1/2}n^{1/2}}
\bigg\}^{1/3}
\Bigg]
.
\end{aligned}
\end{equation}
Moreover, under \(\log p=o(n^{1/14})\), the bootstrap statistic satisfies
\begin{equation}
\label{cls2_thm2}
\sup_{x\in\mathbb R}
\bigg|
\mathbb P\Big(
\max_{1\leq j\leq p-1}|\hat S_{nj}|>x
\,\big|\,
\{X_i\}_{i=1}^n
\Big)
-
\mathbb P\Big(
\max_{1\leq j\leq p-1}
\big|Z_j^{\breve\Sigma}\big|>x
\Big)
\bigg|
\xrightarrow{\mathbb P}0
\end{equation}
as \(n,p\to\infty\).
\end{theorem}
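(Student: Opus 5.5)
The plan has two parts. For \eqref{cls1_thm2} we use the multiplier-bootstrap comparison of \citet{victor2017clt}. For \eqref{cls2_thm2} we add perturbation arguments that replace $w_i$ by $v_i$, insert the factor $\hat a_{n,p}$, and pass from $\breve\Sigma^*$ to $\breve\Sigma$.

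\textbf{Step 1: bootstrap comparison for $\mathring S_n$.} Conditionally on $\{X_i\}_{i=1}^n$, $\mathring S_n$ is centered Gaussian with covariance $\hat\Omega^w=n^{-1}\sum_i(w_i-\bar w)(w_i-\bar w)^\top$. The expansion behind Proposition~\ref{prop_1} shows $\mathrm{Cov}(w_1)=\breve\Sigma^*$. Gaussian comparison (Nazarov's inequality together with the Chernozhukov--Chetverikov--Kato comparison bound for maxima) then gives
\[
\sup_x\big|\mathbb P(\|\mathring S_n\|_\infty>x\mid X)-\mathbb P(\|Z^{\breve\Sigma^*}\|_\infty>x)\big|\lesssim \{\Delta\log^2 p\}^{1/3},\qquad \Delta=\|\hat\Omega^w-\breve\Sigma^*\|_{\max}.
\]
This requires the diagonal of $\breve\Sigma^*$ to be bounded below. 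That holds by (A.2), since $\breve\Sigma_{jj}\to48(1-r_{j,j+1}^4)\ge 48(1-r_0^4)$.

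\textbf{Step 2: bounding $\Delta$ with probability $1-\eta$.} Write $\Delta\le\|n^{-1}\sum_i(w_iw_i^\top-\mathbb Ew_iw_i^\top)\|_{\max}+\|\bar w\|_\infty^2$. The summands are degree-8 polynomials in the coordinates, so they are not sub-exponential. I would therefore apply the maximal inequality of Chernozhukov et al.\ (Lemma E.3-type, or a Fuk--Nagaev bound), which uses the envelope moment $\mathbb E\max_{j,k}|w_{1j}w_{1k}|^q$. By Lemma~\ref{lemma_elliptical_max_expect} with $k$ up to 8 (allowed by (A.3)), this envelope is $O(\log^{4}p)$ for $q=1$ and $O(\log^8 p)$ in $L_2$. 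This yields
\[
\mathbb E\Delta\lesssim \sqrt{\log p/n}+n^{-1/2}\log^{5} (pn),
\]
and Markov's inequality converts it to a bound holding with probability $1-\eta$. The $\log(1/\eta)$ part of the bound comes from the sub-Gaussian piece via a Talagrand/Bernstein-type tail. The $\eta^{-1/2}$ part comes from a Chebyshev step on the envelope term. Inserting this into $\{\Delta\log^2p\}^{1/3}$ produces the two terms in \eqref{cls1_thm2}. \textbf{Matching the stated exponents (powers 9 and 7) is the main obstacle.} I would first try splitting the maximal inequality into a truncated bounded part and an envelope remainder, and then tuning the split.

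\textbf{Step 3: from $\mathring S_n$ to $\hat S_n$ and from $\breve\Sigma^*$ to $\breve\Sigma$.} Write $\hat S_n=\hat a_{n,p}\{\mathring S_n+\tilde S_n\}$, where $\tilde S_n$ is the multiplier sum of $(v_i-\bar v)-(w_i-\bar w)$. Since $v_{ij}-w_{ij}$ is linear in $(\hat\Sigma_{jj}^{-2}-\Sigma_{jj}^{-2})X_{ij}^4$ and similar terms, and $\max_j|\hat\Sigma_{jj}-\Sigma_{jj}|=O_{\mathbb P}(\sqrt{\log p/n})$, the conditional covariance of $\tilde S_n$ has max-entry $O_{\mathbb P}(\log^{c}p/n)$. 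Under $\log p=o(n^{1/14})$, Gaussian comparison absorbs this term. The $6r_2$ versus $6$ discrepancy is also handled here, using $r_2=1+O(1/p)$ under (A.1). For $\hat a_{n,p}$: on $\mathcal G_{n,p}$, the $\hat q_j$ concentrate uniformly around $\breve\Sigma^*_{jj}/\omega_j\to1$, so $|\hat a_{n,p}-1|=o_{\mathbb P}(1/\sqrt{\log p})$. Multiplying a maximum of Gaussians by such a factor changes its law by $o(1)$, again by Nazarov's inequality. Finally, (A.1) and (A.3) give $r_k=1+O(1/p)$, which implies $\|\breve\Sigma^*-\breve\Sigma\|_{\max}\to0$ at a rate fast enough for one more comparison. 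Combining these steps, with $\eta\to0$ slowly, yields \eqref{cls2_thm2}.
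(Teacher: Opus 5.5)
\textbf{The $\hat a_{n,p}$ step fails.} Your architecture is the paper's: a bootstrap bound for the infeasible $\mathring S_n$, then the swap $w_i\to v_i$, insertion of $\hat a_{n,p}$, and the passage $\breve\Sigma^*\to\breve\Sigma$. But the $\hat a_{n,p}$ step does not go through as written.

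Since $\max_j|\hat S_{nj}|=\hat a_{n,p}\max_j|\mathring{\hat S}_{nj}|$, and that maximum is of size $\sqrt{\log p}$, the perturbation is $(1-\hat a_{n,p})\sqrt{\log p}$. Nazarov's inequality then costs another factor $\sqrt{\log p}$, so the bound is $C|1-\hat a_{n,p}|\log p$. This is not $o_{\mathbb P}(1)$ under your rate $o_{\mathbb P}(1/\sqrt{\log p})$.

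The implication you use is in fact false, not merely lossy. For $\Sigma=I_p$ the limiting contrasts are $1$-dependent, so their maximum fluctuates on the scale $1/\sqrt{\log p}$. Rescaling by $1-\epsilon_p/\sqrt{\log p}$ shifts it by order $\epsilon_p$. Hence whenever $\epsilon_p\sqrt{\log p}\to\infty$, the Kolmogorov distance does not vanish.

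You need $|\hat a_{n,p}-1|\log p=o_{\mathbb P}(1)$. The paper proves the stronger $|\hat a_{n,p}-1|\log^2p=o_{\mathbb P}(1)$ (Lemma~\ref{lemma_bootstrap_factor}), because it uses Gaussian comparison with covariance gap $|\hat a_{n,p}^2-1|\,\|\widetilde\Omega\|_{\max}$. Either way you need quantitative uniform rates, not just concentration of the $\hat q_j$:
\begin{itemize}
\item $\max_j|\hat\Omega_{jj}-\breve\Sigma^*_{jj}|=O_{\mathbb P}(\log^{9/2}p/\sqrt n)+o(p^{-3/4}\log^2p)$. This comes from the maximal inequality with envelope moment $\mathbb E\max_j|w_{1j}|^4=O(\log^8p)$, plus the plug-in error $D_n$.
\item $\max_j|\hat\omega_j-\omega_j|=O_{\mathbb P}(\log^{3/2}p/\sqrt n)$.
\end{itemize}
Under $\log p=o(n^{1/14})$ these give $\max_j|\hat q_j-1|\log^2p=o_{\mathbb P}(1)$, and the max/median ratio inherits this.

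\textbf{The first inequality is left open, but nothing needs tuning.} You flag the exponents as the main obstacle, and a Markov step would give $\eta^{-1}$ rather than $\eta^{-1/2}$. The paper applies the multiplier-bootstrap theorem of \citet{victor2017clt} directly to the augmented vectors $(w_i^\top,-w_i^\top)^\top$ after checking three conditions:
\begin{itemize}
\item (M.1) via \eqref{nondegenerate_contrasts};
\item (M.2), whose left side is $O(1)$;
\item (E.2) with $q=4$ and $B_n=C\log^2p$, which is exactly where Lemma~\ref{lemma_elliptical_max_expect} with $2k=16$ enters.
\end{itemize}
That theorem's bound $\{B_n^2\log^5(2pn)\log^2(1/\eta)/n\}^{1/6}+\{B_n^2\log^3(2pn)/(\eta^{2/q}n^{1-2/q})\}^{1/3}$ becomes the stated one on substituting $B_n^2\asymp\log^4p$ and $q=4$.

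Your Gaussian-comparison route is essentially the internal proof of that theorem. It also works if you control the tail of $\Delta$ with a Fuk--Nagaev-type inequality using two moments of the envelope $\max_{i,j}w_{ij}^2$; that is where $\eta^{-1/2}$ comes from. The powers you obtain are then no larger than the stated ones, which are upper bounds rather than targets to match.

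\textbf{A smaller point.} When replacing $w_i$ by $v_i$, the covariance gap between $\mathring S_n+\tilde S_n$ and $\mathring S_n$ contains the cross term, which is of order $D_n^{1/2}$, not $D_n$. It is still absorbed under $\log p=o(n^{1/14})$. Alternatively, bound $\mathbb E_*\|\tilde S_n\|_\infty\lesssim\sqrt{\log p}\,D_n^{1/2}$ and use anti-concentration, as the paper does.
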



Combining this result with Proposition~\ref{prop_1} yields the validity of the bootstrap approximation for \(T_n\) itself and the asymptotic size control of the proposed bootstrap test. These conclusions are summarized in the following corollary.

\begin{corollary}
\label{cor_bootstrap_Tn}
Under \(H_0\), assume the conditions of Theorem~\ref{thm2}. Let
\[
    \hat M_n
    =
    \max_{1\leq j\leq p-1}|\hat S_{nj}|.
\]
Then
\begin{equation}
\label{cor_bootstrap_Tn_cls1}
\sup_{x\in\mathbb R}
\Big|
\mathbb P\big(T_n>x\big)
-
\mathbb P\Big(
\hat M_n>x
\,\big|\,
\{X_i\}_{i=1}^n
\Big)
\Big|
\xrightarrow{\mathbb P}0.
\end{equation}
More precisely, for the auxiliary bootstrap variable \(\mathring S_n\) and
any \(\eta\in(0,e^{-1})\), with probability at least \(1-\eta\),
\begin{equation}
\label{cor_bootstrap_Tn_cls2}
\begin{aligned}
&\sup_{x\in\mathbb R}
\bigg|
\mathbb P\big(T_n>x\big)
-
\mathbb P\Big(
\max_{1\leq j\leq p-1}|\mathring S_{nj}|>x
\,\big|\,
\{X_i\}_{i=1}^n
\Big)
\bigg|                                                   \\
&\quad\leq
C\Bigg[
\varepsilon_{n,p}
+
\bigg\{
\frac{\log^9(2pn)\log^2(1/\eta)}{n}
\bigg\}^{1/6}
+
\bigg\{
\frac{\log^7(2pn)}{\eta^{1/2}\sqrt n}
\bigg\}^{1/3}
\Bigg]
.
\end{aligned}
\end{equation}
Consequently, for any fixed \(\alpha\in(0,1)\), if \(\hat c_{1-\alpha}\) denotes the conditional \((1-\alpha)\)-quantile of \(\hat M_n\), then the bootstrap test
\begin{equation}
    \label{proposed_test}
    \phi_n=1\big\{T_n>\hat c_{1-\alpha}\big\}
\end{equation}
has asymptotic size \(\alpha\), namely
\[
    \lim_{n\to\infty}
    \mathbb P_{H_0}\big(T_n>\hat c_{1-\alpha}\big)
    =
    \alpha.
\]
\end{corollary}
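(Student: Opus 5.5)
The corollary follows from Proposition~\ref{prop_1} and Theorem~\ref{thm2} by a triangle inequality, followed by a standard quantile argument; I will take the steps in that order.

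\textbf{Step 1: the bound \eqref{cor_bootstrap_Tn_cls2}.} For every $x$, I insert the law of $\max_{j}|Z_j^{\breve\Sigma^*}|$ between the two probabilities:
\[
\big|\mathbb P(T_n>x)-\mathbb P(\max_j|\mathring S_{nj}|>x\mid\{X_i\})\big|
\le \big|\mathbb P(T_n>x)-\mathbb P(\max_j|Z_j^{\breve\Sigma^*}|>x)\big| + \big|\mathbb P(\max_j|Z_j^{\breve\Sigma^*}|>x)-\mathbb P(\max_j|\mathring S_{nj}|>x\mid\{X_i\})\big| .
\]
Proposition~\ref{prop_1}, display \eqref{cls1_prop1}, bounds the first term uniformly in $x$ by $C\varepsilon_{n,p}$. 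This bound is deterministic. On the event of probability at least $1-\eta$ given by Theorem~\ref{thm2}, display \eqref{cls1_thm2} bounds the second term. Taking the supremum over $x$ gives \eqref{cor_bootstrap_Tn_cls2}.

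\textbf{Step 2: the convergence \eqref{cor_bootstrap_Tn_cls1}.} I repeat the triangle inequality with $\breve\Sigma$ in place of $\breve\Sigma^*$. Display \eqref{cls2_prop1} shows that the first term tends to $0$ deterministically. Display \eqref{cls2_thm2} shows that the second term, which is now the conditional law of $\hat M_n$, tends to $0$ in probability. Hence the supremum tends to $0$ in probability.

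\textbf{Step 3: asymptotic size.} Let $F_n(x)=\mathbb P(T_n\le x)$, let $\hat F_n(x)=\mathbb P(\hat M_n\le x\mid\{X_i\})$, and let $\Delta_n=\sup_x|F_n-\hat F_n|$, so that $\Delta_n\xrightarrow{\mathbb P}0$ by Step 2.

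\emph{Continuity.} The Gaussian maximum $\max_j|Z_j^{\breve\Sigma}|$ has a continuous distribution, because $\breve\Sigma_{jj}=48(1-r_{j,j+1}^4)\ge 48(1-r_0^4)>0$ under (A.2). Conditional on the data, $\hat S_n$ is Gaussian. Therefore $\hat F_n$ is continuous on the event $\mathcal G_{n,p}$ whenever the conditional variances are positive, and that event has probability tending to one. Continuity gives $\hat F_n(\hat c_{1-\alpha})=1-\alpha$.

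\emph{Quantile comparison.} On the event $\{\Delta_n\le\delta\}$,
\[
F_n(\hat c_{1-\alpha})\in[1-\alpha-\delta,\,1-\alpha+\delta].
\]
Fix a small $\delta>0$ and let $c^\pm$ be the deterministic $(1-\alpha\pm\delta)$-quantiles of $T_n$. Then on this event, up to atoms, $c^-\le\hat c_{1-\alpha}\le c^+$. This sandwiches the rejection probability:
\[
\mathbb P(T_n>c^+)-\mathbb P(\Delta_n>\delta)\le\mathbb P(T_n>\hat c_{1-\alpha})\le\mathbb P(T_n>c^-)+\mathbb P(\Delta_n>\delta).
\]
The outer probabilities equal $\alpha\mp\delta$ up to $o(1)$. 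This follows because $F_n$ is uniformly close to the continuous law of $\max_j|Z_j^{\breve\Sigma}|$ by \eqref{cls2_prop1}, so $T_n$ has asymptotically no atoms. Letting $n\to\infty$ and then $\delta\to0$ gives the limit $\alpha$.

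\textbf{Main obstacle.} The only delicate point is this quantile step. The difficulty is that $\hat c_{1-\alpha}$ is random and not independent of $T_n$, so one cannot simply substitute it into $F_n$. The sandwich by the deterministic quantiles $c^\pm$ avoids this, in the spirit of \citet{victor2013}. Near-continuity of $F_n$ must be justified through the Gaussian limit rather than assumed; I expect the uniform bound in \eqref{cls2_prop1} to suffice, since it transfers anti-concentration from $\max_j|Z_j^{\breve\Sigma}|$ to $T_n$.
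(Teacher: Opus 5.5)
Your proposal is correct and follows essentially the same route as the paper: a triangle inequality through the Gaussian maximum using Proposition~\ref{prop_1} and Theorem~\ref{thm2}, then sandwiching the random critical value between deterministic quantiles. The one difference is in that sandwich: the paper uses the quantiles $q_n^G(1-\alpha\pm\gamma_n)$ of the continuous Gaussian law $F_n^G$, with a single vanishing $\gamma_n$ dominating $\rho_n$ and, in probability, $\beta_n$, which avoids your ``up to atoms'' caveat and any need for continuity of $\hat F_n$; you instead use quantiles of $T_n$ at a fixed offset $\delta$ and take a double limit.
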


\noindent\textbf{Remark.}
A notable feature of test~\eqref{proposed_test} is that it accommodates the regime \(\log p=o(n^{1/14})\) under finite-moment conditions, without estimating the inverse sample covariance matrix or imposing any structural restrictions on the population covariance matrix. To the best of our knowledge, this is the first theoretically justified high-dimensional test for ellipticity that allows \(p\) to grow at an exponential rate in \(n\).

At a high level, the proofs of Theorem~\ref{thm2} and Corollary~\ref{cor_bootstrap_Tn} proceed in three stages. First, we establish uniform consistency of the empirical diagonal variances entering \(\hat q_j\), which gives \(|\hat a_{n,p}-1|\log^2p=o_{\mathbb P}(1)\). Second, we combine the stated moment conditions with Lemma~\ref{lemma_elliptical_max_expect} and apply the high-dimensional multiplier-bootstrap results of \citet{victor2017clt}. Third, we use Gaussian comparison and anti-concentration inequalities for maxima in \citet{chernozhukov2015comparison} to translate this nonasymptotic bound
into the limiting conditional distribution, which in turn yields the asymptotic size guarantee in Corollary~\ref{cor_bootstrap_Tn}.

\subsection{Power analysis}

The preceding results establish the validity of the proposed bootstrap
procedure under the null hypothesis. We next investigate its ability to
distinguish elliptical distributions from alternatives characterized by
heterogeneous marginal kurtoses. By the construction of \(T_n\), a natural
measure of signal strength is
\[
\Delta
:=
\max_{1\leq j\leq p-1}
\big|\kappa_{j+1}-\kappa_j\big|.
\]
Whenever the marginal kurtosis parameters are not all identical, at least
one adjacent population contrast must be nonzero. We now formalize the
class of alternatives detectable by the proposed test. For constants
\(b,B>0\), let \(\mathcal H_{1,n,p}(b,B)\) denote the collection of
distributions \(P\) under \(H_1\) satisfying
\begin{flalign*}
\textnormal{(C.1)}
&&
0<b\leq\Sigma_{jj},
\qquad
\kappa_j\leq B<\infty,
\qquad
1\leq j\leq p,
&&
\phantom{\textnormal{(C.1)}}
\end{flalign*}
\begin{flalign*}
\textnormal{(C.2)}
&&
M_{16,n,p}
:=
\mathbb E\Big(
    \max_{1\leq j\leq p}X_{1j}^{16}
\Big)
=
o(n),
&&
\phantom{\textnormal{(C.2)}}
\end{flalign*}
and
\begin{flalign*}
\textnormal{(C.3)}
&&
\Delta
\gg
M_{16,n,p}^{1/4}
\sqrt{\frac{\log p}{n}},
&&
\phantom{\textnormal{(C.3)}}
\end{flalign*}
The following proposition establishes consistency over any sequence of
alternatives belonging to this region.

\begin{prop}
\label{power_analysis}
For any sequence of alternatives
\(P=P_{n,p}\in\mathcal H_{1,n,p}(b,B)\), the bootstrap test
\(\phi_n\) defined in~\eqref{proposed_test} satisfies
\begin{equation}
\label{power_cls}
\lim_{n\to\infty}
\mathbb P_{H_1}
\big(
T_n>\hat c_{1-\alpha}
\big)
=
1.
\end{equation}
\end{prop}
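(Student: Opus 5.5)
The argument has two parts. First, I bound the bootstrap critical value $\hat c_{1-\alpha}$ from above with probability tending to one. Second, I show that $T_n$ exceeds that bound with probability tending to one under (C.1)--(C.3).

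\textbf{Step 1: the critical value.} Conditionally on the data, $\hat S_n$ is a centered Gaussian vector. The $j$th coordinate has conditional variance $\hat a_{n,p}^2\hat\Omega_{jj}$. On $\mathcal G_{n,p}$ the construction gives $\hat a_{n,p}\le 1$. Off $\mathcal G_{n,p}$ we have $\hat a_{n,p}=1$, so in all cases $\hat a_{n,p}\le 1$.

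The standard Gaussian maximal inequality combined with Borell--TIS then gives
\[
\hat c_{1-\alpha}\le C\sqrt{\log p}\,\max_j\hat\Omega_{jj}^{1/2}.
\]
Since $\hat\Omega_{jj}\le n^{-1}\sum_i v_{ij}^2$, I need a uniform bound on the second empirical moments of $v_{ij}$. These are driven by $n^{-1}\sum_i X_{ij}^8\hat\Sigma_{jj}^{-4}$ and lower-order terms.

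\begin{itemize}
\item For the denominators, $\min_j\hat\Sigma_{jj}\ge b/2$ with probability tending to one. This follows from a maximal inequality for $\max_j|\hat\Sigma_{jj}-\Sigma_{jj}|$, using $\mathbb E\max_j X_{1j}^4\le M_{16,n,p}^{1/4}$ together with (C.2) in a Nemirovski/Rosenthal-type bound.
\item For the numerators, $\max_j n^{-1}\sum_i X_{ij}^8\le \mathbb E\max_j X_{1j}^8+o_{\mathbb P}(\cdot)$, and $\mathbb E\max_j X_{1j}^8\le M_{16,n,p}^{1/2}$. Controlling the fluctuation also uses $M_{16,n,p}=o(n)$.
\end{itemize}

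Together these give $\max_j\hat\Omega_{jj}=O_{\mathbb P}(M_{16,n,p}^{1/2})$, and hence
\[
\hat c_{1-\alpha}=O_{\mathbb P}\bigl(M_{16,n,p}^{1/4}\sqrt{\log p}\bigr).
\]
I expect this uniform control of the eighth-moment averages, with only the max-moment condition (C.2) and no ellipticity, to be the main obstacle. I would first try a plain Markov bound on $\max_j n^{-1}\sum_iX_{ij}^8\le n^{-1}\sum_i\max_jX_{ij}^8$, whose expectation is at most $M_{16,n,p}^{1/2}$. This already gives the required $O_{\mathbb P}$ without any concentration argument.

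\textbf{Step 2: lower bound for $T_n$.} Pick $j^*$ attaining $\Delta$. Then
\[
T_n\ge \sqrt n\,\Delta-2\sqrt n\max_j|\hat\kappa_j-\kappa_j|.
\]
For the error term, I write
\[
\hat\kappa_j-\kappa_j=\frac{\hat m_{4j}-m_{4j}}{\hat m_{2j}^2}+m_{4j}\,\frac{m_{2j}^2-\hat m_{2j}^2}{\hat m_{2j}^2m_{2j}^2},
\]
where $m_{kj}=\mathbb E X_{1j}^k$ and $\hat m_{kj}=n^{-1}\sum_i X_{ij}^k$. By (C.1), $m_{4j}=\kappa_jm_{2j}^2\le B\,m_{2j}^2$ and $m_{2j}\ge b$. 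On the event from Step 1 the denominators are bounded below.

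It then suffices to have
\[
\max_j|\hat m_{4j}-m_{4j}|+\max_j|\hat m_{2j}-m_{2j}|=O_{\mathbb P}\bigl(M_{16,n,p}^{1/4}\sqrt{\log p/n}\bigr).
\]
For this I apply the maximal inequality $\mathbb E\max_j|n^{-1}\sum_i(Y_{ij}-\mathbb EY_{ij})|\lesssim \sqrt{\log p/n}\,(\mathbb E\max_jY_{1j}^2)^{1/2}+\log p\,(\mathbb E\max_{i,j}Y_{ij}^2)^{1/2}/n$ with $Y_{ij}=X_{ij}^4$. The first term is at most $\sqrt{\log p/n}\,M_{16,n,p}^{1/4}$. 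For the second, $\mathbb E\max_{i,j}X_{ij}^8\le (nM_{16,n,p})^{1/2}$, so it is at most $\log p\,(nM_{16,n,p})^{1/4}/n=\sqrt{\log p/n}\,M_{16,n,p}^{1/4}\,\sqrt{\log p}\,n^{-1/4}$. This is of lower order when $\log^2 p=o(n)$, which I will assume; if needed it can be absorbed into (C.3). The factor $m_{2j}$ is possibly unbounded above, but this causes no difficulty because $\kappa_j$ is scale-free and the ratios are bounded by (C.1).

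\textbf{Conclusion.} Step 2 gives $T_n\ge\sqrt n\Delta\,(1-o_{\mathbb P}(1))$. By (C.3), $\sqrt n\Delta\gg M_{16,n,p}^{1/4}\sqrt{\log p}$, and by Step 1 this in turn dominates $\hat c_{1-\alpha}$ with probability tending to one. Hence $\mathbb P(T_n>\hat c_{1-\alpha})\to1$.
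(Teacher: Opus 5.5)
Your two-step structure matches the paper's proof in Appendix D. It uses $\hat a_{n,p}\le 1$, $\hat\Omega_{jj}\le n^{-1}\sum_i v_{ij}^2$, Markov's inequality on $n^{-1}\sum_i\max_j X_{ij}^8$ and a Gaussian tail bound for the quantile, followed by a uniform linearization of $\hat\kappa_j-\kappa_j$ through maximal inequalities for $\hat m_{4j}$ and $\hat m_{2j}$.

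The one genuine gap is the extra assumption $\log^2 p=o(n)$ in Step 2. It is not implied by (C.1)--(C.3). Since $\Delta\le B$ and $M_{16,n,p}\ge b^8$, (C.3) only gives $M_{16,n,p}^{1/4}\sqrt{\log p/n}\to0$, hence $\log p=o(n)$. For example, uniformly bounded coordinates with $\Delta$ bounded away from zero and $\log p\asymp n^{0.9}$ satisfy all three conditions. "Absorbing" the extra term into (C.3) would strengthen the hypothesis, so as written you prove a weaker proposition.

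The extra term is an artifact of your choice of maximal inequality, and it is redundant. Once the leading term carries $\mathbb E\max_j Y_{1j}^2$ rather than $\max_j\mathbb E Y_{1j}^2$, no $\max_{i,j}$ term is needed. Symmetrize, apply the sub-Gaussian maximal inequality to the Rademacher sums conditionally on the data, and use $\max_j\sum_i Z_{ij}^2\le\sum_i\max_j Z_{ij}^2$ together with Jensen. For i.i.d. centered $Z_i$ this gives
\[
\mathbb E\max_{1\le j\le p}\Big|\frac1n\sum_{i=1}^n Z_{ij}\Big|\le C\sqrt{\frac{\log(2p)}{n}}\Big\{\mathbb E\max_{1\le j\le p}Z_{1j}^2\Big\}^{1/2}.
\]
Take $Z_{ij}=X_{ij}^4-\mathbb EX_{1j}^4$ and $Z_{ij}=X_{ij}^2-\mathbb EX_{1j}^2$. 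This yields the rates $M_{16,n,p}^{1/4}\sqrt{\log p/n}$ and $M_{16,n,p}^{1/8}\sqrt{\log p/n}$, with no restriction on $p$ beyond (C.3); this is exactly what the paper does.

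Relatedly, $\min_j\hat\Sigma_{jj}\ge b/2$ with probability tending to one follows from
\[
M_{16,n,p}^{1/8}\sqrt{\log p/n}=\big\{M_{16,n,p}^{1/4}\sqrt{\log p/n}\big\}^{1/2}(\log p/n)^{1/4}\to0 .
\]
That is, it comes from (C.3) together with $\Delta\le B$ and $M_{16,n,p}\ge b^8$. It does not come from (C.2), which by itself does not control $\log p/n$.
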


\noindent\textbf{Remark.}
Condition~(C.1) rules out degeneracy of the marginal scales and
imposes uniform control on the coordinate-wise kurtoses, whereas
Condition~(C.2) controls the tails of the fourth-power terms entering
the empirical kurtoses and their bootstrap linearization. Under
Conditions~(C.1)--(C.2), one can show that
\[
\sqrt n
\max_{1\leq j\leq p-1}
\big|
\hat\kappa_{j+1}-\hat\kappa_j
-
\big(\kappa_{j+1}-\kappa_j\big)
\big|
=
O_{\mathbb P}\big(
M_{16,n,p}^{1/4}\sqrt{\log p}
\big),
\]
and \[
\hat c_{1-\alpha}
=
O_{\mathbb P}\big(
M_{16,n,p}^{1/4}\sqrt{\log p}
\big).
\]
Consequently,
\[
\begin{aligned}
T_n
&\geq
\sqrt n\,\Delta
-
\sqrt n
\max_{1\leq j\leq p-1}
\big|
\hat\kappa_{j+1}-\hat\kappa_j
-
\big(\kappa_{j+1}-\kappa_j\big)
\big|                                                  \\
&=
\sqrt n\,\Delta
-
O_{\mathbb P}\big(
M_{16,n,p}^{1/4}\sqrt{\log p}
\big).
\end{aligned}
\]
Condition~(C.3) ensures that the population signal dominates both
the stochastic estimation error and the bootstrap critical value,
which yields \(
\mathbb P_{H_1}
\big(
T_n>\hat c_{1-\alpha}
\big)
\to1.
\)
Proposition~\ref{power_analysis} therefore highlights the sensitivity
of the proposed test procedure to sparse or localized departures
in marginal fourth moments.

\section{Numerical studies}
\label{sec_4}

In this section, we investigate the finite-sample performance of the proposed bootstrap test. We first examine its empirical size under a range of elliptical null distributions with different radial distributions, covariance structures, and dimension-to-sample-size ratios. We then study power under sparse heterogeneous alternatives and several complementary alternatives that distinguish local from global departures. Finally, we apply the proposed test to two real datasets.

\subsection{Assessment of empirical size}
\label{subsec:size}
We first examine the empirical size of our proposed test under the
elliptical null hypothesis. Data are generated according to
\[
    X_i=\xi_i\Sigma^{1/2}U_i,
    \qquad
    i=1,\ldots,n,
\]
where \(U_i\) is uniformly distributed on the unit sphere in
\(\mathbb R^p\), \(\xi_i\) is independent of \(U_i\), and
\(\mathbb E(\xi_i^2)=p\). We consider the following five choices for the
distribution of \(\xi_i^2\):
\[
\begin{array}{ll}
\text{(i).}
&
\xi_i^2\sim\chi_p^2,
\\[1mm]
\text{(ii).}
&
\xi_i^2\sim\mathrm{Gamma}(p/4,4),
\\[1mm]
\text{(iii).}
&
\xi_i^2=2pB_i,
\qquad
B_i\sim\mathrm{Beta}(p/2,p/2),
\\[1mm]
\text{(iv).}
&
\xi_i^2=p \Gamma_{1i}/ \Gamma_{2i},
\qquad
\Gamma_{1i}\sim\mathrm{Gamma}(p,1),
\quad
\Gamma_{2i}\sim\mathrm{Gamma}(p+1,1),
\\[1mm]
\text{(v).}
&
\xi_i^2=p+\sqrt p\,(\varepsilon_i+S_{i,p}),
\end{array}
\]
where \(\Gamma_{1i}\) and \(\Gamma_{2i}\) are independent, \(\varepsilon_i\) is a
Rademacher random variable independent of \(S_{i,p}\), and
\[
    S_{i,p} =
    \begin{cases}
        p^{1/2},
        &
        \text{with probability }p^{-3/2},
        \\[1mm]
        \displaystyle
        -\frac{ 1 }{p - p^{-1/2} } ,
        &
        \text{with probability }1 - p^{-3/2}.
    \end{cases}
\]
All five constructions satisfy \(\mathbb E(\xi_i^2)=p\) and
Conditions~(A.1)--(A.3). They include the Gaussian radial distribution, a
gamma-type radial distribution with increased radial variability, a bounded
beta-type distribution, a ratio-type beta-prime distribution, and a
rare-spike triangular-array distribution. The last construction (v) satisfies
the \(L_4\)-based condition (A.3) imposed in this paper but does not satisfy the
stronger requirement in \citet{WangLopes2026}.

For the covariance matrix \(\Sigma\), we consider four designs:
\[
\begin{array}{ll}
\text{(1).}
&
\Sigma=I_p,
\\[1mm]
\text{(2).}
&
\Sigma_{jk}=0.7^{|j-k|},
\\[1mm]
\text{(3).}
&
\Sigma\text{ is block equicorrelated with block size }20
\text{ and within-block correlation }0.8,
\\[1mm]
\text{(4).}
&
\Sigma=0.1I_p+0.9\mathbf 1\mathbf 1^\top.
\end{array}
\]
The last two settings represent strong local and global correlation structures, respectively. In particular, design~(4) has pairwise correlation \(0.9\) between every two distinct coordinates, which supports the fact that our method is correlation-free.

The sample size is fixed at \(n=200\), and the dimension is selected from
\(p\in\{100,200,400\}\), corresponding to \(p/n\in\{0.50,1.00,2.00\}\). For each setting, we generate 1000 Monte Carlo datasets and use \(B=200\) Gaussian multiplier bootstrap replications. All rejection decisions are based directly on the bootstrap critical value in Algorithm~\ref{algorithm:mytest} at the nominal significance level \(\alpha=0.05\).

Table~\ref{tab:size_assessment} reports the empirical rejection probabilities
under the null hypothesis, in percentage form. Across the 60 settings, the
average empirical size is \(5.31\%\), and the median is \(5.30\%\). Among
the 60 entries, 40 lie between \(4\%\) and \(6\%\), and 50 lie in the exact
95\% binomial interval \([3.7\%,6.4\%]\) for a nominal rejection
probability of \(5\%\). The average rejection probabilities for
\(p/n=0.50\), \(1.00\), and \(2.00\) are \(5.63\%\), \(5.47\%\), and
\(4.84\%\), respectively.

The largest empirical size is \(7.0\%\), which occurs for the rare-spike
radial distribution under the block-equicorrelation design when
\(p/n=1.00\). The smallest empirical size is \(2.9\%\), occurring for the
beta-prime radial distribution under the identity covariance design when
\(p/n=2.00\). Under the strongest equicorrelation design, the empirical
sizes range from \(5.1\%\) to \(6.8\%\), with an average of \(5.83\%\).
Overall, the proposed multiplier-bootstrap procedure provides reasonably
stable size control across substantially different radial distributions and
under both weak and strong correlation structures.

\begin{table}[!htbp]
\centering
\caption{Empirical size of the proposed multiplier-bootstrap test at the
5\% nominal level. Entries are rejection probabilities $\times 100$, based on 1000 Monte Carlo replications with \(B=200\) bootstrap
replications.}
\label{tab:size_assessment}
\begin{tabular*}{\textwidth}{
@{\extracolsep{\fill}}
l
*{12}{c}
@{}
}
\toprule
&
\multicolumn{4}{c}{\(p/n=0.50\)}
&
\multicolumn{4}{c}{\(p/n=1.00\)}
&
\multicolumn{4}{c}{\(p/n=2.00\)}
\\
\cmidrule(lr){2-5}
\cmidrule(lr){6-9}
\cmidrule(lr){10-13}
&
(1)&(2)&(3)&(4)
&
(1)&(2)&(3)&(4)
&
(1)&(2)&(3)&(4)
\\
\midrule
(i)
&
4.9&4.6&6.4&5.3
&
3.3&4.3&5.3&6.7
&
4.3&4.0&4.5&5.9
\\
(ii)
&
5.1&5.4&5.0&5.3
&
5.1&6.6&5.1&5.5
&
4.7&4.9&6.3&5.5
\\
(iii)
&
6.9&5.2&6.2&5.1
&
5.2&3.7&6.2&6.2
&
4.0&4.0&5.2&6.1
\\
(iv)
&
6.1&4.8&6.0&6.8
&
5.5&5.4&5.9&6.0
&
2.9&4.7&5.1&5.6
\\
(v)
&
6.1&5.5&6.7&5.2
&
6.3&4.3&7.0&5.7
&
4.8&3.6&4.1&6.6
\\
\bottomrule
\end{tabular*}
\end{table}

\FloatBarrier

\subsection{Assessment of empirical power}
\label{subsec:power}

We next investigate the finite-sample power of the proposed test. For
comparison, we implement the high-dimensional test of
\citet{WangLopes2026} according to the procedure proposed in their paper.

Throughout this subsection, the sample size is \(n=200\), the dimension is
\(p\in\{100,200,400\}\), and the nominal significance level is
\(\alpha=0.05\). Each rejection probability is estimated from 1000 Monte
Carlo replications. For each Monte Carlo sample, the proposed test uses
\(B=200\) Gaussian multiplier bootstrap replications. Our objective is not
to claim uniform dominance over the competing methods, but rather to
demonstrate that the proposed test provides complementary power against
alternatives under which a small subset of coordinates exhibits distinct
marginal kurtosis.

\textit{Sparse design of alternatives.}
We first consider sparse heterogeneous alternatives. For
\(i=1,\ldots,n\), let
\[
    X_i=\Sigma^{1/2}G_i,
    \qquad
    G_i=\bigl(G_{i1},\ldots,G_{ip}\bigr)^\top,
\]
where the coordinates of \(G_i\) are independent and have unit variance.
For an active set \(\mathcal A\subset\{1,\ldots,p\}\), whose elements are
distributed approximately evenly across the coordinate indices, we generate
\begin{equation}
    G_{ij}
    =
    \begin{cases}
        \sqrt{1-h}\,Z_{ij}+\sqrt{h}\,H_{ij},
        & j\in\mathcal A,\\[1mm]
        Z_{ij},
        & j\notin\mathcal A,
    \end{cases}
    \nonumber
\end{equation}
where \(Z_{ij}\sim N(0,1)\), and \(H_{ij}\) is independent of \(Z_{ij}\)
and follows the rare-spike distribution
\[
    H_{ij}
    =
    \begin{cases}
        0,
        & \text{with probability }1-q,\\[1mm]
        q^{-1/2},
        & \text{with probability }q/2,\\[1mm]
        -q^{-1/2},
        & \text{with probability }q/2.
    \end{cases}
\]
When \(h=0\), the observations are Gaussian and hence elliptical. As \(h\)
increases, a small fraction of the coordinates departs from the Gaussian
baseline through changes in their marginal fourth moments. This construction
is in the spirit of rare and heterogeneous alternatives commonly considered
in high-dimensional testing problems \citep{DonohoJin2004,DonohoJin2015}. We use the same four covariance structures as in the size assessment, with the rare-spike parameters chosen as follows:
\[
\begin{array}{lll}
\text{(1).}
 \quad q=0.10, \quad|\mathcal A|/p=0.02,
\\[1mm]
\text{(2).}
 \quad q=0.05,\quad |\mathcal A|/p=0.01,
\\[1mm]
\text{(3).}
 \quad q=0.05,\quad |\mathcal A|/p=0.01,
\\[1mm]
\text{(4).}
 \quad q=0.01,\quad |\mathcal A|/p=0.02.
\end{array}
\]

Figures~\ref{fig:power_main_g1}--\ref{fig:power_main_g4} report the
empirical rejection probabilities. Under the identity and Toeplitz
covariance designs in Figures~\ref{fig:power_main_g1}
and~\ref{fig:power_main_g2}, respectively, the power of the proposed test
increases rapidly and approaches one once the signal becomes moderately
strong. The Wang--Lopes test also attains substantial power in these two
settings, although its increase is generally more gradual.

\begin{figure}[H]
    \centering
    \includegraphics[width=\textwidth]{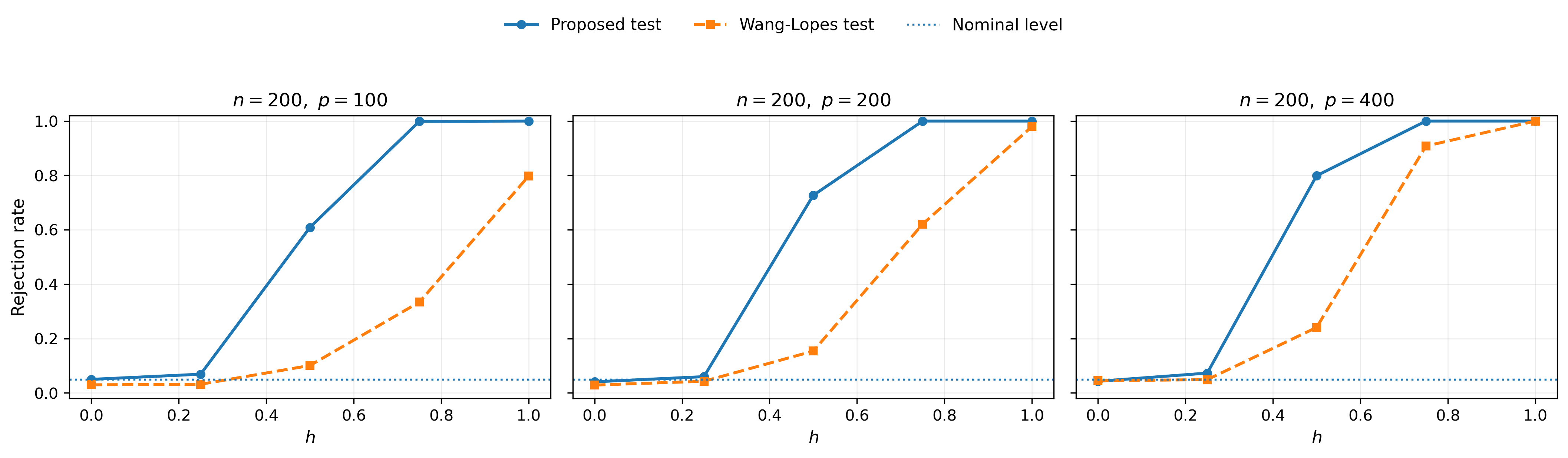}
    \caption{Empirical rejection probabilities under the sparse rare-spike
    alternative with identity covariance \(\Sigma=I_p\),
    \(q=0.10\), and \(|\mathcal A|/p=0.02\).}
    \label{fig:power_main_g1}
\end{figure}

\begin{figure}[H]
    \centering
    \includegraphics[width=\textwidth]{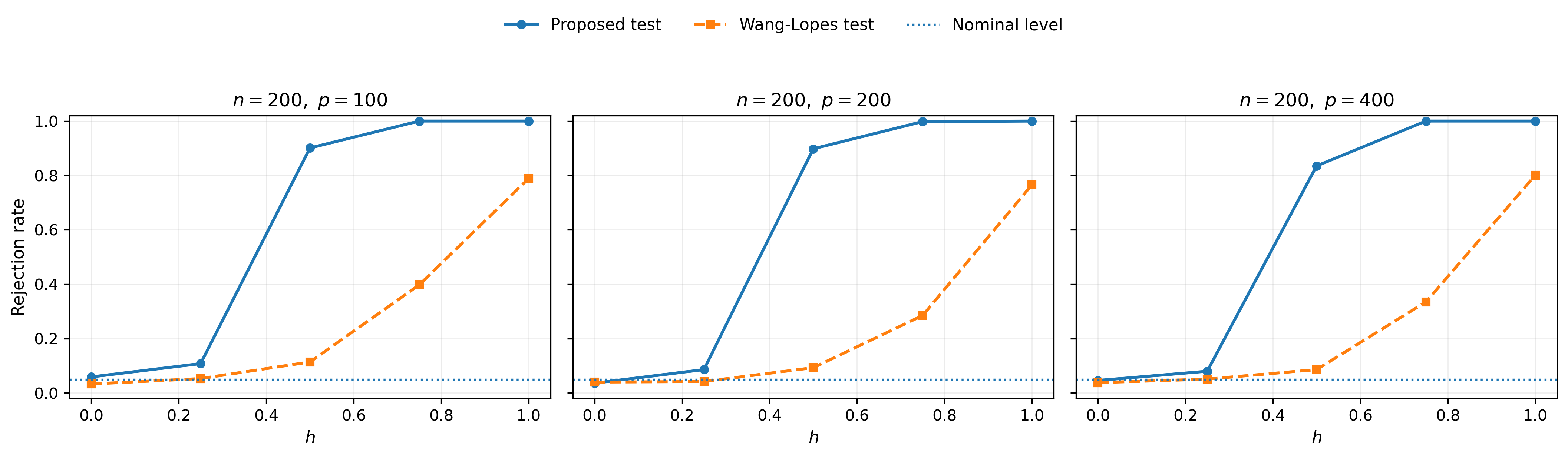}
    \caption{Empirical rejection probabilities under the sparse rare-spike
    alternative with Toeplitz covariance
    \(\Sigma_{jk}=0.7^{|j-k|}\), \(q=0.05\), and
    \(|\mathcal A|/p=0.01\).}
    \label{fig:power_main_g2}
\end{figure}

Under the block-equicorrelation and global-equicorrelation designs,
displayed in Figures~\ref{fig:power_main_g3} and~\ref{fig:power_main_g4}, respectively, the proposed procedure continues to exhibit steadily increasing power. Under block equicorrelation, its rejection probability at \(h=1\) ranges from approximately \(0.87\) to \(0.97\) as \(p\) increases from 100 to 400. Under the more challenging
global-equicorrelation design with pairwise correlation \(0.9\), the
corresponding rejection probabilities range from approximately \(0.71\) to
\(0.83\). In contrast, the rejection probability of the Wang--Lopes test
remains close to the nominal level in both settings. These results illustrate
the benefit of targeting the largest local discrepancy in marginal kurtoses when the departure is sparse, including in the presence of strong cross-coordinate dependence.

\begin{figure}[H]
    \centering
    \includegraphics[width=\textwidth]{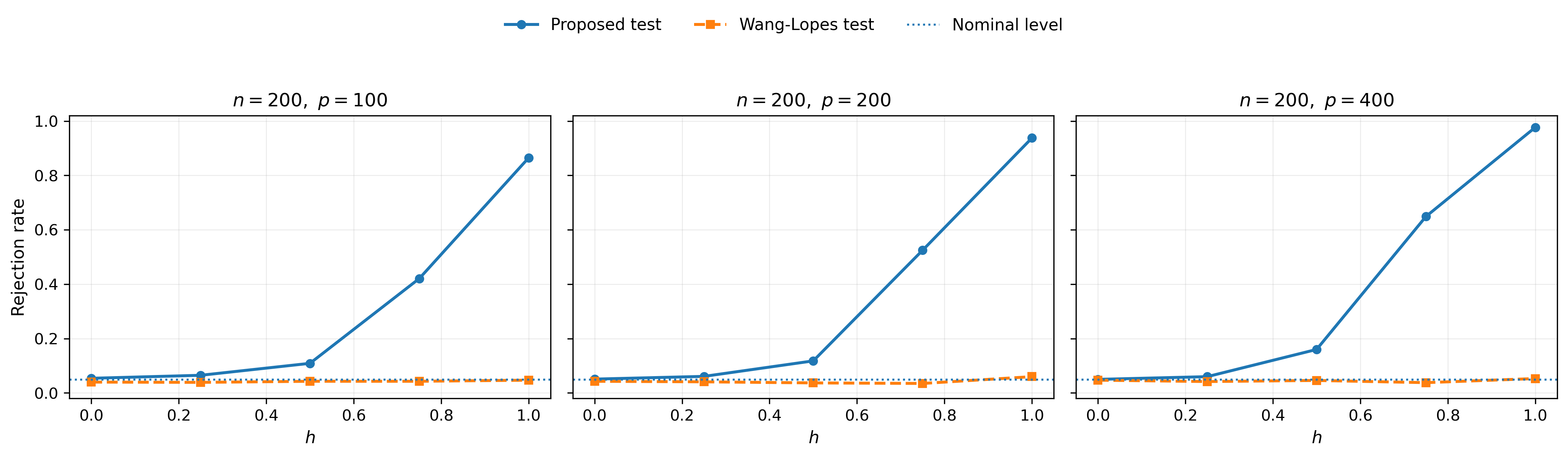}
    \caption{Empirical rejection probabilities under the sparse rare-spike
    alternative with block-equicorrelation covariance, block size \(20\),
    within-block correlation \(0.8\), \(q=0.05\), and
    \(|\mathcal A|/p=0.01\).}
    \label{fig:power_main_g3}
\end{figure}

\begin{figure}[H]
    \centering
    \includegraphics[width=\textwidth]{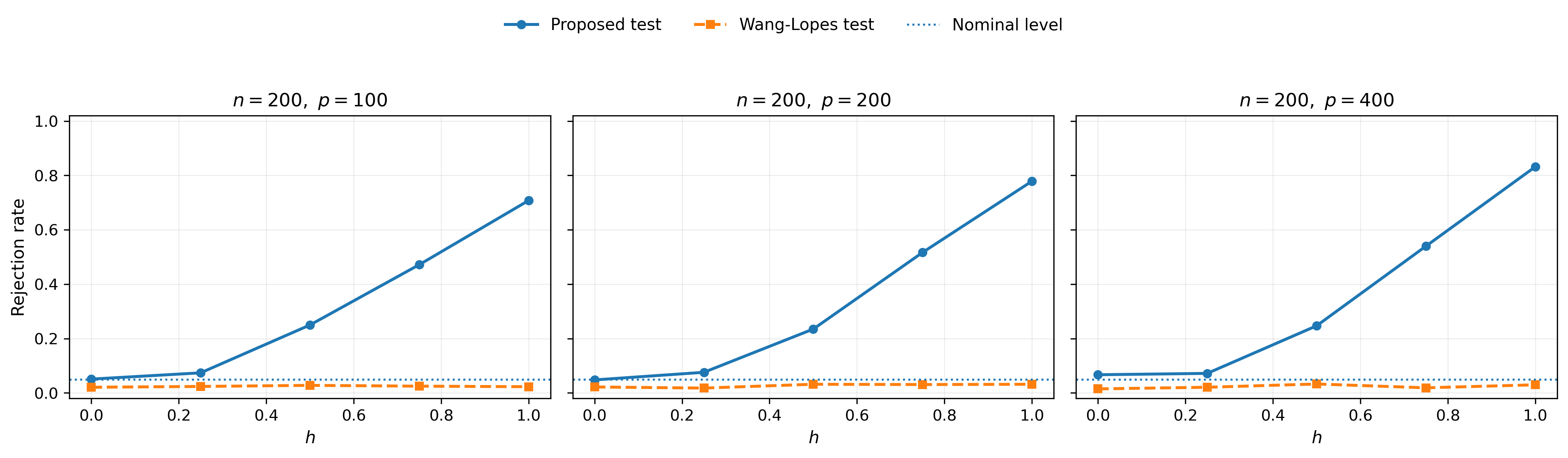}
    \caption{Empirical rejection probabilities under the sparse rare-spike
    alternative with equicorrelation covariance
    \(\Sigma=0.1I_p+0.9\mathbf 1\mathbf 1^\top\),
    \(q=0.01\), and \(|\mathcal A|/p=0.02\).}
    \label{fig:power_main_g4}
\end{figure}

\FloatBarrier

\textit{Several complementary cases.}
We next consider a simple class of alternatives designed to separate local
heterogeneity in marginal fourth moments from an aggregate discrepancy in
average marginal kurtosis. The procedure of \citet{WangLopes2026} compares
an average of estimated marginal kurtoses with a coordinate-free estimator
of a common kurtosis parameter. In particular, its main statistic takes the
form
\begin{equation}
\label{W_L_statistic}
    T_n^{\#}
    =
    \sqrt{\frac{pn}{2}}
    \bigg(
    \frac{\tilde{\kappa}-\check{\kappa}}{3}
    + \frac{4}{n}
    \bigg),
\end{equation}
where
\[
\tilde{\kappa}
=
\frac{1}{p}\sum_{j=1}^p
\frac{\frac{1}{n/2}\sum_{i=1}^{n/2} X_{ij}^4}
{\Big(\frac{1}{n/2}\sum_{i=1}^{n/2} X_{ij}^2\Big)^2}
\]
is an average of marginal kurtosis estimators, and
\[
\check{\kappa}
=
\frac{3\big(\check{\zeta}^{2}+\check{v}_{1}^{2}\big)}
{\check{v}_{1}^{2}+2\big(\check{v}_{2}-\frac{2}{n}\check{v}_{1}^{2}\big)}
\]
is a coordinate-free estimator based on the second half of the sample. The
definitions of \(\check v_k\) and \(\check{\zeta}\) are given in Section~2
of \citet{WangLopes2026}. An average-based procedure of this form may be less
sensitive when positive and negative coordinate-wise departures cancel upon
aggregation.

Let \( X^{(0)}\sim N(0,\Sigma)\), where \(X^{(0)}\) is independent of
\(X^{(1)}=(X^{(1)}_1,\ldots,X^{(1)}_p)^\top\). We consider \(\Sigma=I_p\) and
\(\Sigma=0.1I_p+0.9\mathbf 1\mathbf 1^\top\) in this experiment. The
coordinates of \(X^{(1)}\) are independent and alternate between two
symmetric distributions. For odd \(j\), we generate
\(
    \mathbb P(X^{(1)}_j=0)=\frac{1}{2},
    \) and \(
    \mathbb P(X^{(1)}_j=\sqrt{2})
    =
    \mathbb P(X^{(1)}_j=-\sqrt{2})
    =
    \frac{1}{4}
\).
For even \(j\), we generate
\(
    \mathbb P(X^{(1)}_j=0)
    =
    \frac{99}{275}\), \(
    \mathbb P(X^{(1)}_j=1)
    =
    \mathbb P(X^{(1)}_j=-1)
    =
    \frac{13}{44}\), and
\(
    \mathbb{P}(X^{(1)}_j = 5 / \sqrt{3})
    =
    \mathbb P(X^{(1)}_j=-5/\sqrt{3})
    =
    \frac{27}{1100}\).
The observations are generated as
\[
    X_h =
    \sqrt{1-h}\,X^{(0)} + \sqrt{h}\,X^{(1)}.
\]
Here \(h=0\) corresponds to the Gaussian elliptical null, whereas increasing
\(h\) moves the distribution toward the heterogeneous non-elliptical
alternative attained at \(h=1\). Direct calculation shows that
\(\tilde{\kappa}\) and \(\check{\kappa}\) in~\eqref{W_L_statistic} converge
almost surely to the same value, whereas the adjacent marginal kurtoses
satisfy \( |\kappa_{j+1}-\kappa_j|=2h^2 \) for each \(h\). Thus, the average marginal kurtosis remains fixed at its null value, while the local kurtosis heterogeneity targeted by the proposed statistic becomes increasingly pronounced.

\begin{figure}[H]
    \centering
    \includegraphics[width=\textwidth]{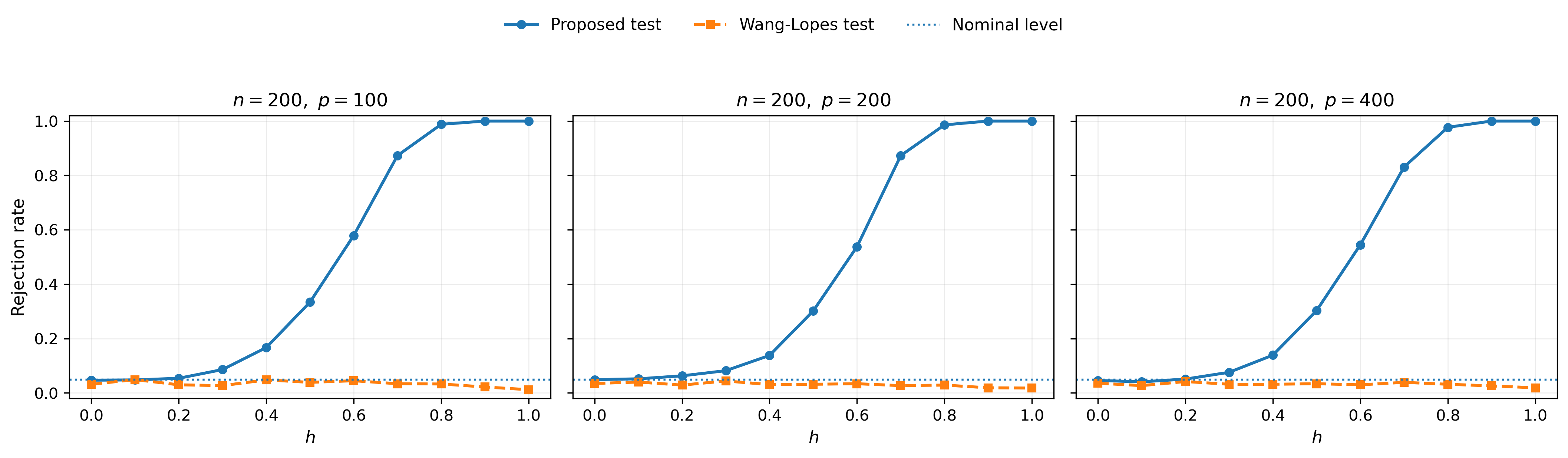}
    \caption{Empirical rejection probabilities under the
    heterogeneous-kurtosis alternative with \(\Sigma=I_p\). The three panels
    correspond to \(n=200\) and \(p=100,200,400\).}
    \label{fig:power_special_case_g1}
\end{figure}

\begin{figure}[H]
    \centering
    \includegraphics[width=\textwidth]{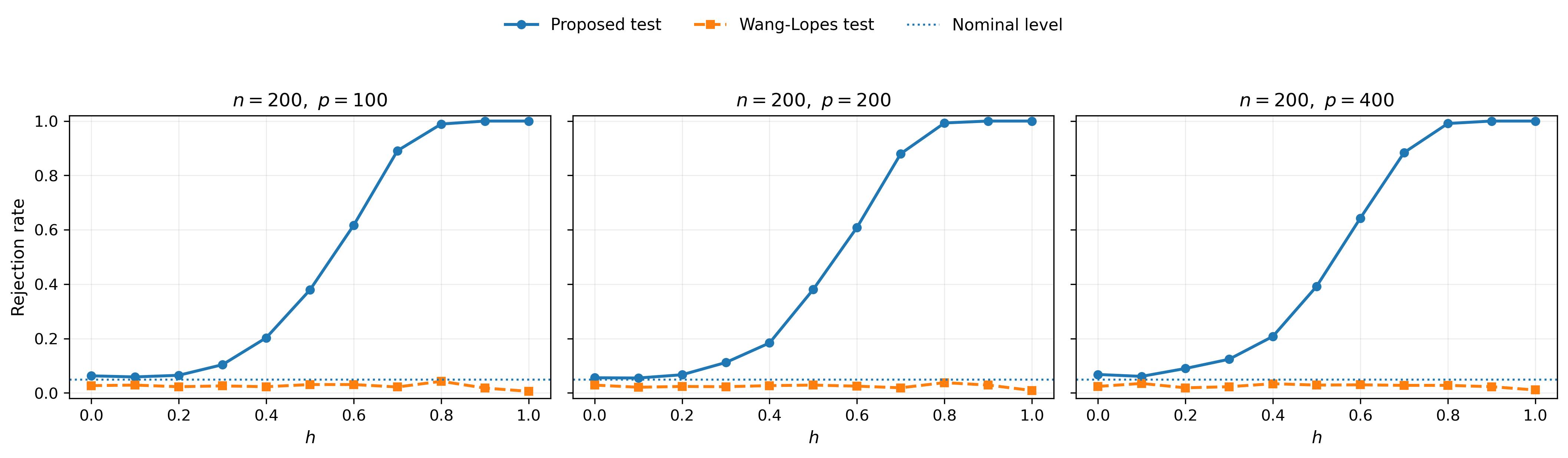}
    \caption{Empirical rejection probabilities under the
    heterogeneous-kurtosis alternative with
    \(\Sigma=0.1I_p+0.9\mathbf 1\mathbf 1^\top\). The three panels
    correspond to \(n=200\) and \(p=100,200,400\).}
    \label{fig:power_special_case_g2}
\end{figure}

Figures~\ref{fig:power_special_case_g1}
and~\ref{fig:power_special_case_g2} report the empirical rejection
probabilities. Under both covariance designs, the rejection probability of
the proposed test increases rapidly with \(h\) and approaches one under
strong departures. By contrast, the rejection probability of the
Wang--Lopes test remains close to the nominal level over most of the path.
This example illustrates that the proposed test can detect pronounced local
heterogeneity even when the corresponding coordinate-wise deviations cancel
in the average marginal kurtosis.

The proposed procedure is not uniformly more powerful, however. Its
sensitivity may be limited when the non-elliptical departure is distributed
globally but induces little adjacent heterogeneity in standardized marginal
fourth moments. To illustrate this point, we also consider two alternatives
from the simulation framework of \citet{WangLopes2026}. The observations are
generated as
\[
    X_i=\Sigma^{1/2}G_i,
    \qquad
    G_{ij}
    =
    \sqrt{1-h}\,Z_{ij}
    +
    \sqrt h\,H_{ij},
\]
where \(Z_{ij}\sim N(0,1)\) and
\[
    H_{ij}
    =
    \frac{B_{ij}-4/7}{\sqrt{8/147}},
    \qquad
    B_{ij}\sim\operatorname{Beta}(2,3/2),
\]
independently across \(i\) and \(j\). We take
\(\Sigma=Q\Lambda Q^\top\), where \(Q\) is a fixed generic orthogonal
matrix. The first design has
\[
    \Lambda
    =
    \operatorname{diag}
    \big(
    1,2^{-1/4},\ldots,p^{-1/4}
    \big),
\]
and the second has
\[
    \Lambda
    =
    \operatorname{diag}
    \big(
    5,5,5,5,5,1,\ldots,1
    \big).
\]
As above, \(n=200\), \(p\in\{100,200,400\}\), and each rejection
probability is based on 1000 Monte Carlo replications.

\begin{figure}[H]
    \centering
    \includegraphics[width=\textwidth]{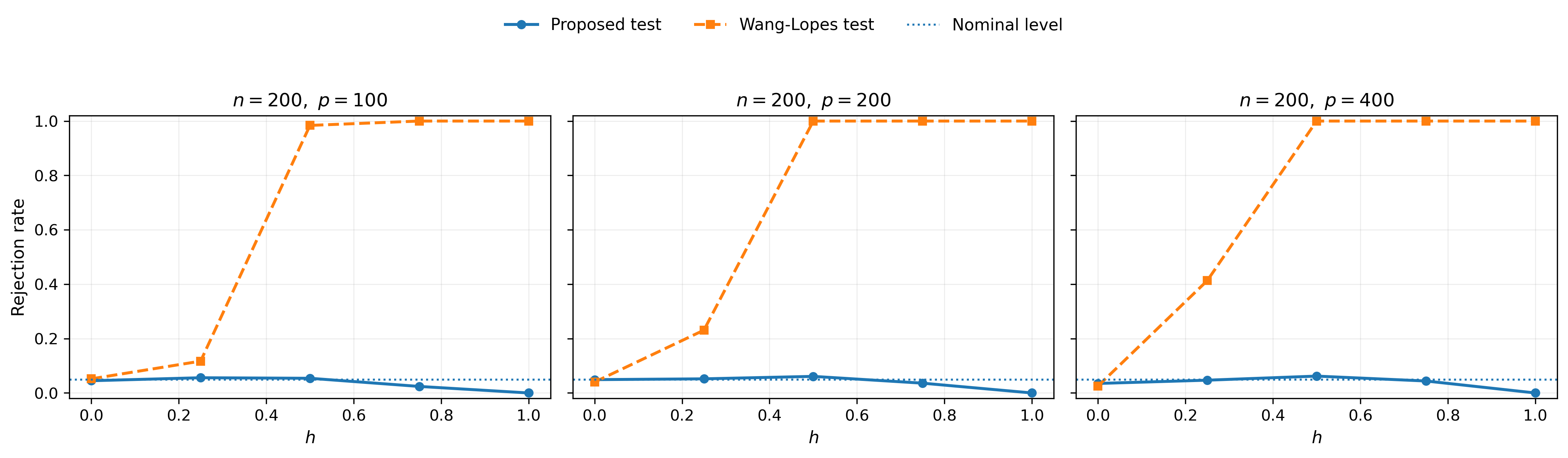}
    \caption{Empirical rejection probabilities under the standardized beta
    alternative with decaying eigenvalues
    \(\lambda_j=j^{-1/4}\) and a generic orthogonal eigenvector matrix.}
    \label{fig:power_app_g5}
\end{figure}

\begin{figure}[H]
    \centering
    \includegraphics[width=\textwidth]{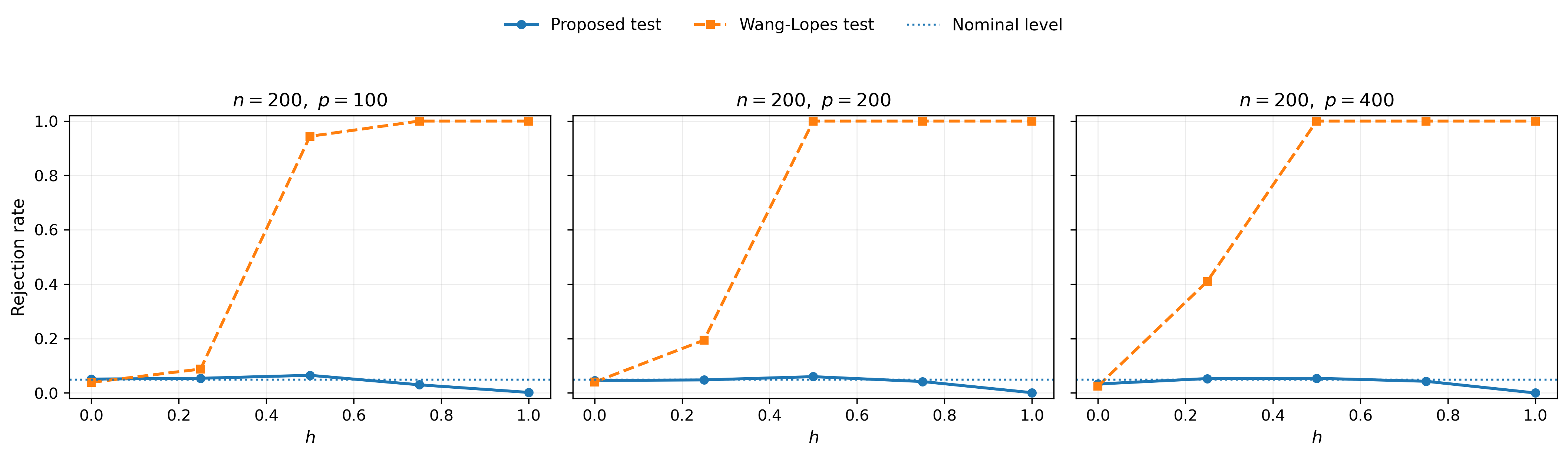}
    \caption{Empirical rejection probabilities under the standardized beta
    alternative with five spiked eigenvalues equal to \(5\), all remaining
    eigenvalues equal to \(1\), and a generic orthogonal eigenvector matrix.}
    \label{fig:power_app_g6}
\end{figure}

Figures~\ref{fig:power_app_g5}--\ref{fig:power_app_g6} show the reverse
power ordering. The Wang--Lopes rejection probability reaches one, whereas
the proposed test has little power at the strongest signal. The generic
rotation spreads the departure over many coordinates and produces little
adjacent heterogeneity in standardized marginal fourth moments. Taken
together, these cases show that the two tests respond to different aspects
of non-ellipticity and can provide complementary evidence in practice.

\FloatBarrier

\subsection{Gene expression data}
\label{sec:gse22309_data}

Our first application revisits the gene-expression data studied by
\citet{wang2015high}. The experiment measured skeletal-muscle biopsies
from 15 patients with type 2 diabetes before and after insulin treatment
on the Affymetrix Human Genome U95A array \citep{Wu2007}. The public data
are available from \href{https://www.ncbi.nlm.nih.gov/geo/query/acc.cgi?acc=GSE22309}
{NCBI GEO} under accession GSE22309.
Following the earlier analysis, we average probes that map unambiguously
to the same gene and form the insulin-stimulated minus untreated
difference for each patient.

\citet{wang2015high} analyzed curated C2 gene sets but did not report the
MSigDB release used in their study. We use the archived C2 v6.0 definition
of \texttt{KIM\_HYPOXIA}, which contains genes upregulated under hypoxia
in the study of \citet{Kim2003}. Of its 25 genes, 23 are represented on
the U95A array
under the conservative probe annotation used here. They are retained in
their GMT order, and every gene is centered across the 15 paired
differences. The represented genes include several involved in glucose
metabolism, such as \textit{GYS1}, \textit{TPI1}, \textit{LDHA},
\textit{PFKP}, \textit{ALDOA}, \textit{ALDOC}, and \textit{SLC2A1}.
Thus the complete sample matrix has \(n=15\) and \(p=23\). 
We apply both tests to every dimension \(p\in\{17,18,\ldots,23\}\). The proposed test uses \(B=20{,}000\) Gaussian multiplier-bootstrap replications. 


Figure~\ref{fig:gse22309_pvalue_curves} shows that neither test rejects at
the \(5\%\) level for any displayed dimension, and the proposed \(p\)-value is uniformly larger over this range. At \(p=23\), the \(p\)-values of the proposed test and
Wang--Lopes's test  are \(0.2130\) and \(0.1461\), respectively.
\citet{wang2015high} developed its main theory first under an elliptical
model and used these paired differences as a mean-vector application. 
Our conclusion is therefore compatible with that modeling framework.
\begin{figure}[H]
    \centering
    \includegraphics[width=0.75\textwidth]
        {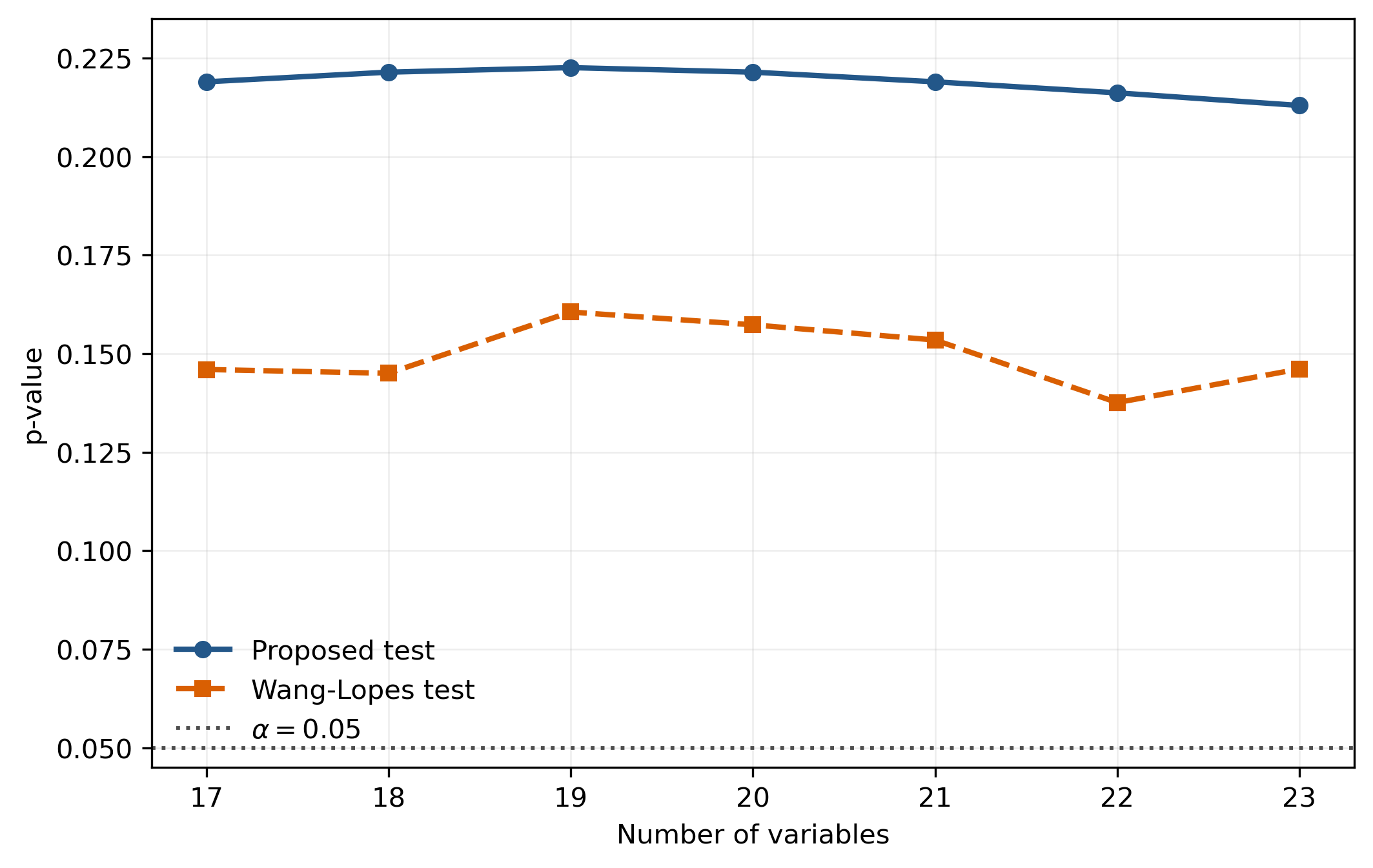}
    \caption{
        \(p\)-values of the proposed test and the Wang--Lopes test for the
        GSE22309 paired skeletal-muscle expression data as a function of the
        number of genes. The horizontal dotted line represents the nominal
        significance level \(0.05\).
    }
    \label{fig:gse22309_pvalue_curves}
\end{figure}

\subsection{Glass spectra data}
\label{sec:glass_data}

The second application uses the spectral data of \citet{Lemberge2000},
distributed as \texttt{data\_glass} in the
\href{https://search.r-project.org/CRAN/refmans/cellWise/html/data_glass.html}
{R package \texttt{cellWise}}.
The data contain 750 ordered spectral channels for $n=180$ archaeological-glass samples. Following the package vignette, we remove V1--V11, which are discrete, and V12--V13, which have zero or very small median absolute deviation. The remaining 737 channels are kept in their supplied order,
without further smoothing or scaling, and each channel is centered across samples. We evaluate both tests on the first \(p\) retained channels for
\(p\in\{200,250,300,\ldots,\allowbreak 650,700,737\}\). Here \(p>n\) at every point of the curve. The number of bootstrap replications is the same as in the first application. 

Figure~\ref{fig:glass_pvalue_curves} shows that both tests reject throughout this range. At \(p=200\), the proposed
and Wang--Lopes \(p\)-values are \(0.00070\) and \(0.01419\),
respectively. At \(p=737\), the corresponding values are \(0.00170\) and
\(6.48\times10^{-8}\). Thus, the relative magnitude of the two
\(p\)-values changes with the number of channels, and the rejection decision is stable over the displayed range. 


\begin{figure}[!htbp]
    \centering
    \includegraphics[width=0.75\textwidth]
        {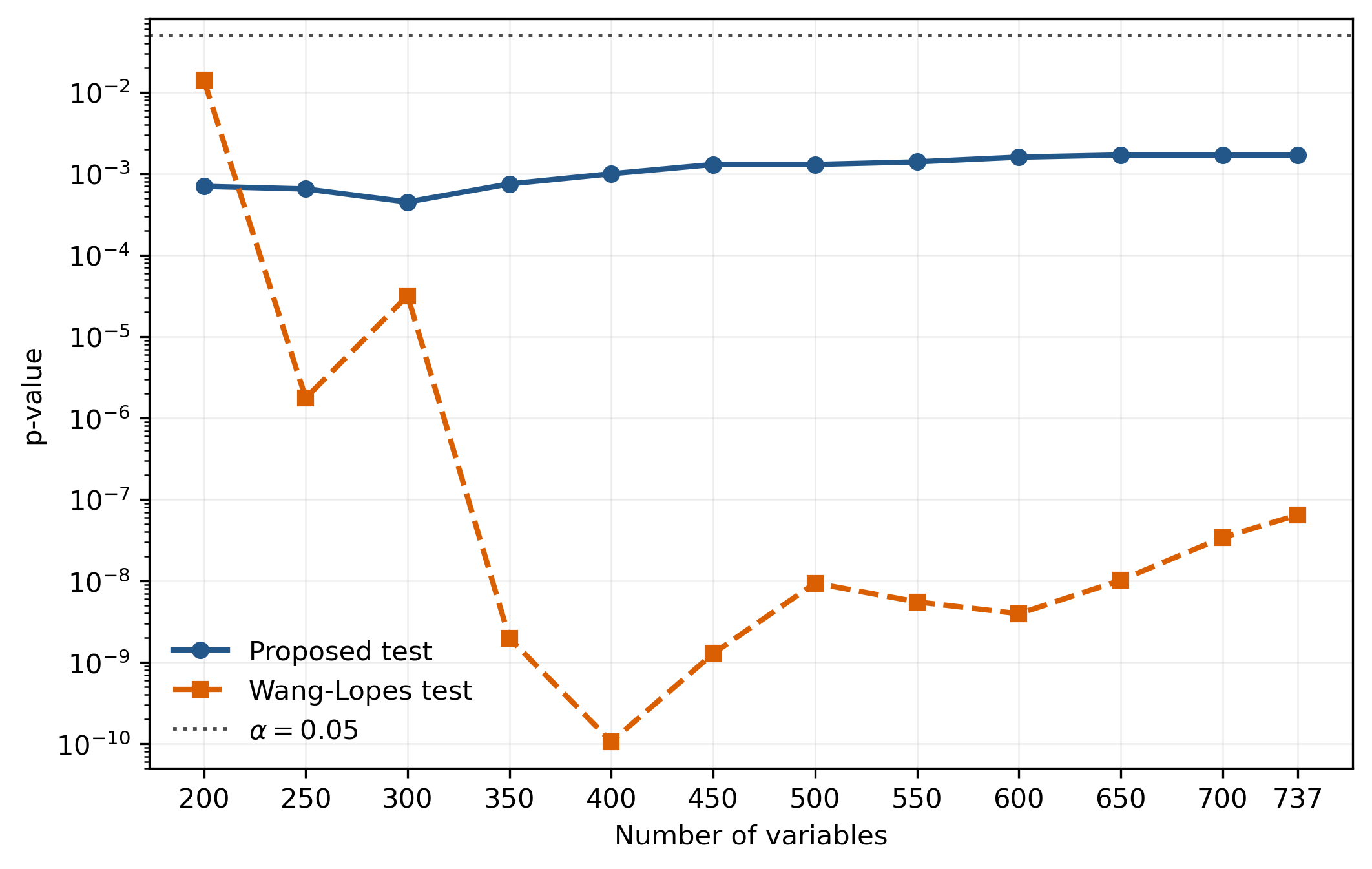}
    \caption{
        \(p\)-values of the proposed test and the Wang--Lopes test for the glass spectra as a function of the number of spectral channels. The horizontal dotted
        line represents the nominal significance level \(0.05\).
    }
    \label{fig:glass_pvalue_curves}
\end{figure}


\FloatBarrier

\section{Discussion}

In this paper, we have developed a goodness-of-fit test for assessing ellipticity in high dimensions. The proposed procedure exploits the common kurtosis implication of elliptical distributions and does not require inversion of the sample covariance matrix or structural restrictions on the population correlation matrix. We establish a
high-dimensional Gaussian approximation for the test statistic under
finite-moment conditions, and construct a Gaussian multiplier bootstrap procedure for determining the critical value, allowing \(\log p=o(n^{1/14})\). A maximal moment inequality for elliptical distributions is obtained as a by-product of the
theoretical analysis. The numerical results indicate stable finite-sample
size control across a broad range of radial distributions and correlation
structures, together with favorable power against sparse and localized
departures in marginal kurtosis.

The influence-function framework also applies to the all-pairwise max
statistic
\[
T_n^{*}
=
\sqrt n
\max_{1\leq j<l\leq p}
\big|
\widehat\kappa_j-\widehat\kappa_l
\big|.
\]
Appendix~G shows that, under \(H_0\), \(T_n^*\) admits Gaussian approximations analogous to those in Proposition~\ref{prop_1}. This extension shows that the theoretical framework under the null is not specific to $T_{n}$. $T_{n}^{*}$ may offer greater finite-sample sensitivity under some alternatives because \(\max_{1\leq j < l \leq p} |\kappa_{j} - \kappa_{l} | \geq \max_{1 \leq j \leq p-1} |\kappa_{j+1} - \kappa_{j} | \). This advantage is not uniform, however, and a direct implementation based on the full collection of pairwise contrasts requires $O(p^2)$ operations and additional covariance calibration. In contrast, $T_n$ involves only $p-1$ adjacent contrasts and can be implemented with $O(p)$ complexity, making it particularly attractive for applications with tens of thousands of variables, such as high-throughput genomic data.
We therefore retain \(T_n\) for the bootstrap procedure developed in this paper, leaving a complete bootstrap implementation and finite-sample study of \(T_n^*\) for future work.

A second extension concerns the quadratic statistic
\[
T_n^{**}
=
\frac{2n}{p(p-1)}
\sum_{1\leq j<l\leq p}
\big(
\widehat{\kappa}_j-\widehat{\kappa}_l
\big)^2.
\]
Unlike \(T_n\) and \(T_n^*\), which are constructed from maxima of linear contrasts, \(T_n^{**}\) aggregates squared pairwise contrasts and is therefore a quadratic form in the marginal kurtosis estimators. Consequently, the Gaussian approximation for maxima of linear contrasts established in this paper does not directly yield the null distribution of \(T_n^{**}\). For fixed $p$, the arguments in this paper suggest a weighted chi-square limiting distribution. When $p$ also
diverges, an appropriately centered and scaled version of $\sqrt p T_n^{**}$ may admit a Gaussian limit
under additional conditions. Establishing such a result, together with a valid bootstrap
calibration, constitutes an interesting direction for future research.

Further as illustrated in our numerical studies, our proposed test and the Wang-Lopes's test are complementary to each other. It is then of interest to develop a procedure further combine their advantages. We will explore these possible topics in near future.

\addtolength{\textheight}{-.2in}%
\section{Disclosure statement}

The authors report there are no competing interests to declare.

\section{Data availability statement}

The GSE22309 skeletal-muscle data and the archaeological-glass spectra data are
publicly available from the sources cited in the Numerical studies section.
The official source data files and analysis code are included in the Supplementary materials.

\phantomsection\label{supplementary-material}
\bigskip

\begin{center}

{\large\bf SUPPLEMENTARY MATERIAL}

\end{center}

\begin{description}
\item[Supplementary materials:]
The supplementary materials include proofs of the main theoretical results
in Appendices A--F and a Gaussian approximation for \(T_n^*\) under \(H_0\)
in Appendix~G. Supporting code and data are also provided for the numerical experiments.
\end{description}

\section*{Appendix A: Proof of Theorem~2.1}
\label{appendix_a}

We first recall several basic facts for elliptical distributions. If each coordinate is nondegenerate and has finite kurtosis, the \(i\)th observation from a centered elliptical distribution can be represented as
\begin{equation}
\label{def_ellip_dis}
    X_i = \xi_i {\Sigma}^{1/2}u_i
\end{equation}
where \(u_i\) is uniformly distributed on the unit sphere in \(\mathbb{R}^p\) and is independent of \(\xi_i\). The radial variable \(\xi_i\) is one-dimensional and satisfies \(\mathbb{E}(\xi_i^2) = p\), and \(\Sigma_{p \times p}\) is deterministic positive semidefinite with positive diagonal entries. We impose the following assumptions
\begin{equation}
\label{assump_a1}
     (A.1) \quad \quad \quad    \mathbb{E}\xi_i^2 = p,\quad \text{var}\bigg(\frac{\xi_i^2 -  p}{\sqrt{p}}\bigg) = \tau + o(1),
     \nonumber
\end{equation}
where \(\tau \geq 0\) is a fixed constant,
\begin{equation}
\label{assump_a2}
\begin{aligned}
   (A.2) \quad \quad \quad 0 <b \leq \Sigma_{jj} \leq B < \infty, \quad 1\leq j \leq p, \\
   | r_{j,j+1}| \leq r_0 < 1, \quad 1\leq j \leq p-1,
   \nonumber
\end{aligned}
\end{equation}
and 
\begin{equation}
\label{assump_b2}
    (A.3) \quad\quad \quad \bigg\| \frac{\xi_i^2 - p}{\sqrt{p}} \bigg\|_{L_4} = o(p^{1/4}),\quad \quad \mathbb{E}(\xi_i^{2k}) = O(p^k),\quad k=5,6,7,8.
    \nonumber
\end{equation}
Compared with the assumptions in \cite{WangLopes2026}, condition (A.3) weakens the \(L^8\)-norm restriction on \(\xi_i^2\) to an \(L^4\)-norm restriction, while imposing weaker higher-order moment bounds. Let \((z_i)_{i=1}^n\) be i.i.d. standard normal vectors in \(\mathbb{R}^p\), independent of \((\xi_i)_{i=1}^n\). Notice \((\ref{def_ellip_dis})\) can equivalently be written as
\[
X_i = \xi_i\Sigma^{1/2}\frac{z_i}{||z_i||_2}.
\]
Let
\begin{equation}
r_k = \frac{\mathbb{E}(\xi_1^{2k})}{\mathbb{E}(|| z_1 ||_2^{2k})} = \frac{\mathbb{E}(\xi_1^{2k})}{m_{k}} ,
    \nonumber
\end{equation}
where \(m_{k}\) is the \(k\)th moment of \({\chi}^2_p\). A direct calculation gives
\begin{equation}
\label{moment_xij}
    \mathbb{E}(X_{ij}^{2k}) = \frac{\mathbb{E}(\xi_i^{2k})}{m_k} \mathbb{E}\big[(\Sigma^{1/2} z_{i})_j^{2k}\big] = (2k-1)!!r_k\Sigma_{jj}^{k} .
\end{equation}
We will use the following lemmas during the proof.
\begin{lemma}
\label{lemma_1}
    Under Conditions (A.1-3), for \(k=1,2,3,4\) we have
    \begin{equation}
        r_k = 1 + o(p^{-3/4}).
        \nonumber
    \end{equation}
\end{lemma}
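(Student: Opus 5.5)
The plan is to write $r_k = \mathbb E(\xi^{2k})/m_k$ and expand both numerator and denominator around $p^k$, using the centered variable $\eta = (\xi^2-p)/\sqrt p$. This gives
\[
\xi^{2k} = p^k\Big(1+\frac{\eta}{\sqrt p}\Big)^k = p^k\sum_{l=0}^k \binom{k}{l} p^{-l/2}\eta^l .
\]
Using $\mathbb E\eta=0$ and $\mathbb E\eta^2=\tau+o(1)$ from (A.1), we get
\[
\mathbb E\xi^{2k}=p^k\Big\{1+\binom k2\frac{\tau+o(1)}{p}+\sum_{l=3}^k\binom kl p^{-l/2}\mathbb E\eta^l\Big\}.
\]
For the denominator, the moments of $\chi^2_p$ are exact: $m_k=p(p+2)\cdots(p+2k-2)=p^k\{1+k(k-1)/p+O(p^{-2})\}$.

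The ratio is then
\[
r_k = 1 + O(p^{-1}) + \sum_{l=3}^{k} \binom{k}{l}\, p^{-l/2}\, \mathbb E\eta^l + O(p^{-2}),
\]
with all $O$-constants depending only on $k\le 4$. The $O(p^{-1})$ term is already $o(p^{-3/4})$, since $\tau$ is a fixed constant and the numerator contributes $\binom{k}{2}\tau/p$ while the denominator contributes $k(k-1)/p$. So only the terms with $l=3,4$ matter.

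These are controlled by the first part of (A.3), $\|\eta\|_{L_4}=o(p^{1/4})$, together with Lyapunov's inequality:
\[
|\mathbb E\eta^3|\le \|\eta\|_{L_4}^3=o(p^{3/4}), \qquad \mathbb E\eta^4=o(p).
\]
Hence
\[
p^{-3/2}\,|\mathbb E\eta^3| = o(p^{-3/4}), \qquad p^{-2}\,\mathbb E\eta^4 = o(p^{-1}).
\]
Combining the pieces gives $r_k=1+o(p^{-3/4})$ for $k=1,2,3,4$. The case $k=1$ is exact with $r_1=1$, and $k=2$ has no $l\ge 3$ terms.

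The main obstacle is the cubic term. It is the only one that is not automatically $O(p^{-1})$, and it is the reason the rate is $p^{-3/4}$ rather than $p^{-1}$. The bound must come from the $L_4$ control of $\eta$ and not from the weaker second-moment information in (A.1). I would first check that $\|\eta\|_{L_3}\le\|\eta\|_{L_4}$ suffices here, which it does by monotonicity of $L_q$ norms. The remaining work is routine bookkeeping of the constants, which are uniform because $k$ is bounded.
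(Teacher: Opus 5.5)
Your proof is correct and follows the paper's route: expand $\mathbb E(p+D)^k$ about $p^k$ with $D=\xi^2-p$, compare with the exact $\chi^2_p$ moments, and control the cubic and quartic terms through the $L_4$ part of (A.3). The one difference is the cubic term: the paper interpolates between the second and fourth moments, $\mathbb E|D|^3\le(\mathbb E D^2\,\mathbb E D^4)^{1/2}=o(p^2)$, which makes that contribution $o(p^{-1})$ and in fact gives $r_k=1+O(p^{-1})$; so your remark that the cubic term forces the $p^{-3/4}$ rate is too pessimistic, although your Lyapunov bound suffices for the lemma as stated.
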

\noindent\textit{Proof.}
Let \(D=\xi_i^2-p\). Conditions~(A.1) and~(A.3) give
\[
    \mathbb ED=0,
    \qquad
    \mathbb ED^2=O(p),
    \qquad
    \mathbb E|D|^4=o(p^3).
\]
By interpolation, \(\mathbb E|D|^3=o(p^2)\). Expanding
\(\mathbb E(p+D)^k\) for \(k=2,3,4\) therefore gives
\[
    \mathbb E(\xi_i^{2k})
    =
    p^k+O(p^{k-1})+o(p^{k-3/4}).
\]
Since
\(\mathbb E\|z_i\|_2^{2k}=p^k+O(p^{k-1})\), division proves
\(r_k=1+o(p^{-3/4})\) for \(k=2,3,4\), while \(r_1=1\) exactly.
\(\hfill\square\)

\begin{lemma}
\label{lemma_gaussian_comparison}
Let \(G^A=(G^A_1,\ldots,G^A_d)^\top\) and
\(G^B=(G^B_1,\ldots,G^B_d)^\top\) be two centered Gaussian random vectors with
covariance matrices \(A=(a_{jk})_{1\le j,k\le d}\) and
\(B=(b_{jk})_{1\le j,k\le d}\), respectively. Put
\[
    \Delta(A,B)=\max_{1\le j,k\le d}|a_{jk}-b_{jk}|.
\]
Suppose that, for constants \(0<\underline\sigma\leq\overline\sigma<\infty\),
\[
    \underline\sigma^2
    \leq
    \min_{1\leq j\leq d}
    \big(a_{jj}\wedge b_{jj}\big),
    \qquad
    \max_{1\leq j\leq d}
    \big(a_{jj}\vee b_{jj}\big)
    \leq
    \overline\sigma^2.
\]
Then, for a constant \(C>0\) depending only on
\(\underline\sigma\) and \(\overline\sigma\),
\[
\sup_{x\in\mathbb R}
\bigg|
\mathbb P\Big(\max_{1\le j\le d}G^A_j\le x\Big)
-
\mathbb P\Big(\max_{1\le j\le d}G^B_j\le x\Big)
\bigg|
\le
C\Delta(A,B)^{1/3}
\bigg\{1\vee \log\bigg(\frac{d}{\Delta(A,B)}\bigg)\bigg\}^{2/3}.
\]
\end{lemma}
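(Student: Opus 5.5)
The plan is to follow the smart-path (Slepian–Stein) interpolation argument of \citet{chernozhukov2015comparison}: smooth the maximum, compare smooth functionals of $G^A$ and $G^B$ along a Gaussian interpolation, and convert back to distribution functions with an anti-concentration inequality. Three reductions come first.
\begin{itemize}
\item Since $\Delta:=\Delta(A,B)\le 2\overline\sigma^2$, if $\Delta\ge c_0$ for a small constant $c_0$, the right-hand side exceeds $1$ once $C$ is large. The claim is then trivial, so we assume $\Delta<c_0$.
\item We write $L=1\vee\log(d/\Delta)$.
\item By symmetry in $A$ and $B$, it suffices to bound $\mathbb P(\max_j G^A_j\le x)-\mathbb P(\max_j G^B_j\le x)$ from above, uniformly in $x$.
\end{itemize}

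\emph{Step 1 (smoothing).} For $\beta>0$ take the soft-max $F_\beta(y)=\beta^{-1}\log\sum_{j=1}^d e^{\beta y_j}$, which satisfies $0\le F_\beta(y)-\max_j y_j\le \beta^{-1}\log d$. Its partial derivatives $\pi_j=\partial_jF_\beta$ are nonnegative and sum to one. Its second derivatives satisfy $\sum_{j,k}|\partial_{jk}F_\beta|\le 2\beta$.

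Fix $\delta>0$ and a $C^2$ function $g_0$ with $g_0=1$ on $(-\infty,0]$ and $g_0=0$ on $[1,\infty)$. Set $g(s)=g_0((s-x-\beta^{-1}\log d)/\delta)$, so that $\|g'\|_\infty\lesssim\delta^{-1}$ and $\|g''\|_\infty\lesssim\delta^{-2}$. Then
\[
\mathbb P(\max_j G^A_j\le x)\le \mathbb E\,g(F_\beta(G^A)),
\qquad
\mathbb E\,g(F_\beta(G^B))\le \mathbb P\big(\max_j G^B_j\le x+\beta^{-1}\log d+\delta\big).
\]

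\emph{Step 2 (interpolation).} Take $G^A$ and $G^B$ independent and put $Z(t)=\sqrt t\,G^A+\sqrt{1-t}\,G^B$ for $t\in[0,1]$. Differentiating $\mathbb E\,g(F_\beta(Z(t)))$ in $t$ and applying Gaussian integration by parts (Stein's identity) gives
\[
\mathbb E\,g(F_\beta(G^A))-\mathbb E\,g(F_\beta(G^B))=\frac12\int_0^1\sum_{j,k}(a_{jk}-b_{jk})\,\mathbb E\big[\partial_{jk}(g\circ F_\beta)(Z(t))\big]\,dt .
\]
By the chain rule, $\sum_{j,k}|\partial_{jk}(g\circ F_\beta)|\le \|g''\|_\infty\sum_{j,k}\pi_j\pi_k+\|g'\|_\infty\sum_{j,k}|\partial_{jk}F_\beta|\le \|g''\|_\infty+2\beta\|g'\|_\infty$. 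Hence the difference is at most $C\Delta(\delta^{-2}+\beta\delta^{-1})$.

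\emph{Step 3 (anti-concentration and optimization).} Since $\min_j b_{jj}\ge\underline\sigma^2$, Nazarov's anti-concentration inequality, as used in \citet{chernozhukov2015comparison}, gives $\mathbb P(x<\max_jG^B_j\le x+\epsilon)\le C\epsilon\sqrt{1\vee\log(d/\epsilon)}$, with $C$ depending on $\underline\sigma$. Choose $\beta=\delta^{-1}\log d$ and combine Steps 1 and 2 to obtain
\[
\mathbb P(\max_jG^A_j\le x)-\mathbb P(\max_jG^B_j\le x)\lesssim \frac{\Delta\log d}{\delta^2}+\delta\sqrt{1\vee\log(d/\delta)} .
\]
Take $\delta=\Delta^{1/3}L^{1/6}$. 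Because $\Delta<c_0$ we have $\delta\ge\Delta$ up to constants, so $\log(d/\delta)\lesssim L$. Both terms are then of order $\Delta^{1/3}L^{2/3}$. Running the same argument with $A$ and $B$ interchanged gives the reverse inequality, which proves the lemma.

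The step I expect to be the main obstacle is the second-derivative bound $\sum_{j,k}|\partial_{jk}(g\circ F_\beta)|\lesssim \delta^{-2}+\beta\delta^{-1}$. It must be free of $d$: a naive entrywise bound would introduce a factor $d^2$ and destroy the logarithmic dependence. The resolution is the explicit structure $\partial_{jk}F_\beta=\beta(\pi_j\mathbf 1\{j=k\}-\pi_j\pi_k)$, whose absolute row sums are controlled because the $\pi_j$ form a probability vector. The other point needing care is that the bound inside the logarithm ends up as $d/\Delta$ rather than $d/\delta$; this is exactly what the reduction to $\Delta<c_0$ handles.
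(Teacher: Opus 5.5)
Your proposal is correct and is essentially the standard argument of \citet{chernozhukov2015comparison} (soft-max smoothing, Slepian--Stein interpolation, then anti-concentration), which is all the paper relies on here: it states the lemma and cites that reference without giving its own proof. One cosmetic fix: take $\beta=\delta^{-1}(1\vee\log d)$, so that the smoothing is well defined when $d=1$ and the first term reads $\Delta(1\vee\log d)/\delta^{2}$; the optimization then goes through unchanged.
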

In particular, if \(\Delta(A,B)\log^2 d\to0\), then the two Gaussian maxima are asymptotically equivalent in Kolmogorov distance. This is a standard Gaussian comparison inequality for maxima, see \cite{chernozhukov2015comparison}.

\begin{lemma}
\label{lemma_2}
    Under Conditions (A.1) and (A.3), we have
        \label{cov_xij}
        \begin{align*}
        \operatorname{cov}\!\left(x_{1j}^4,\,x_{1k}^4\right)
        &= r_4\!\left(72\,\Sigma_{jj}\Sigma_{kk}\Sigma_{jk}^2 + 24\,\Sigma_{jk}^4\right)
          + 9\left(r_4-r_2^{\,2}\right)\Sigma_{jj}^2\Sigma_{kk}^2,\\[4pt]
        \operatorname{cov}\!\left(x_{1j}^2,\,x_{1k}^2\right)
        &= 2r_2\,\Sigma_{jk}^2 + (r_2-1)\Sigma_{jj}\Sigma_{kk},\\[4pt]
        \operatorname{cov}\!\left(x_{1k}^2,\,x_{1j}^4\right)
        &= 12r_3\,\Sigma_{jj}\Sigma_{jk}^2 + 3\left(r_3-r_2\right)\Sigma_{jj}^2\Sigma_{kk},
\end{align*}
    for \(j,k=1,2,\ldots,p\) and \(i=1,2,\ldots,n\).
\end{lemma}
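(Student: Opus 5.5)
The plan is to reduce every mixed moment of $(x_{1j},x_{1k})$ to the corresponding Gaussian moment times a radial factor $r_m$, and then evaluate the Gaussian moments by Wick's (Isserlis') formula.

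\textbf{Step 1 (reduction to Gaussian moments).} Write $y=\Sigma^{1/2}z_1\sim N_p(0,\Sigma)$, so that $x_1=\xi_1 y/\|z_1\|_2$. Fix nonnegative even integers $a,b$ with $a+b=2m$ and $m\le 4$. Homogeneity gives
\[
y_j^a y_k^b=\|z_1\|_2^{2m}\,\big(\Sigma^{1/2}z_1/\|z_1\|_2\big)_j^a\big(\Sigma^{1/2}z_1/\|z_1\|_2\big)_k^b .
\]
For a standard Gaussian vector, $\|z_1\|_2$ is independent of the direction $z_1/\|z_1\|_2$. Taking expectations therefore gives
\[
\mathbb E\big[y_j^ay_k^b\big]=m_m\,\mathbb E\big[(\Sigma^{1/2}z_1/\|z_1\|_2)_j^a(\Sigma^{1/2}z_1/\|z_1\|_2)_k^b\big].
\]
The vector $x_1$ has the same direction factor, and $\xi_1$ is independent of $z_1$. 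Hence
\[
\mathbb E\big[x_{1j}^ax_{1k}^b\big]=\frac{\mathbb E\xi_1^{2m}}{m_m}\,\mathbb E\big[y_j^ay_k^b\big]=r_m\,\mathbb E\big[y_j^ay_k^b\big].
\]
This is exactly the mechanism behind \eqref{moment_xij}, now applied to mixed moments. Finiteness of all the moments involved follows from (A.3), which gives $\mathbb E\xi_1^{2m}=O(p^m)<\infty$ for $m\le 8$, and covers $\mathbb E\xi_1^{8}$ in particular. We also use $r_1=1$.

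\textbf{Step 2 (Gaussian moments via Wick's formula).} For a centered Gaussian pair $(y_j,y_k)$, I will count pairings according to the number of cross pairs, i.e.\ pairs matching a $j$-factor with a $k$-factor.

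For $\mathbb E[y_j^4y_k^4]$:
\begin{itemize}
\item zero cross pairs give $3\cdot 3\,\Sigma_{jj}^2\Sigma_{kk}^2$;
\item two cross pairs give $\binom42\binom42\cdot 2\,\Sigma_{jj}\Sigma_{kk}\Sigma_{jk}^2=72\,\Sigma_{jj}\Sigma_{kk}\Sigma_{jk}^2$;
\item four cross pairs give $4!\,\Sigma_{jk}^4=24\,\Sigma_{jk}^4$.
\end{itemize}

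The lower-order moments follow the same way:
\[
\mathbb E[y_j^2y_k^2]=\Sigma_{jj}\Sigma_{kk}+2\Sigma_{jk}^2,
\qquad
\mathbb E[y_j^4y_k^2]=3\Sigma_{jj}^2\Sigma_{kk}+12\Sigma_{jj}\Sigma_{jk}^2 .
\]
The coefficient $12$ arises as follows: $4\cdot 3$ ways to pair the two $k$'s with two of the four $j$'s, and the remaining two $j$'s pair with each other.

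\textbf{Step 3 (assembling the covariances).} Combine Step 1 with the marginal moments from \eqref{moment_xij}, namely
\[
\mathbb Ex_{1j}^4=3r_2\Sigma_{jj}^2,\qquad \mathbb Ex_{1k}^2=\Sigma_{kk}.
\]
Then:
\begin{itemize}
\item $\operatorname{cov}(x_{1j}^4,x_{1k}^4)=r_4\big(9\Sigma_{jj}^2\Sigma_{kk}^2+72\Sigma_{jj}\Sigma_{kk}\Sigma_{jk}^2+24\Sigma_{jk}^4\big)-9r_2^2\Sigma_{jj}^2\Sigma_{kk}^2$;
\item $\operatorname{cov}(x_{1j}^2,x_{1k}^2)=r_2(\Sigma_{jj}\Sigma_{kk}+2\Sigma_{jk}^2)-\Sigma_{jj}\Sigma_{kk}$;
\item $\operatorname{cov}(x_{1k}^2,x_{1j}^4)=r_3(3\Sigma_{jj}^2\Sigma_{kk}+12\Sigma_{jj}\Sigma_{jk}^2)-3r_2\Sigma_{jj}^2\Sigma_{kk}$.
\end{itemize}
Regrouping the $\Sigma_{jj}^2\Sigma_{kk}^2$, $\Sigma_{jj}\Sigma_{kk}$ and $\Sigma_{jj}^2\Sigma_{kk}$ terms gives exactly the three displayed identities of the lemma.

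The only delicate point is the pairing count for the eighth-order moment. As a cross-check, I would redo it via the regression decomposition $y_k=(\Sigma_{jk}/\Sigma_{jj})y_j+e$ with $e\perp y_j$. Alternatively, I would verify the diagonal case $j=k$, which must give $105=9+72+24$.
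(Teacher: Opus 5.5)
Your proposal is correct and is essentially the paper's route: the paper defers to Lemma D.6 of Wang and Lopes, which is proved by Isserlis' theorem together with a moment lemma of Magnus (1978). Like your Step~1, that lemma separates the Gaussian radial part $\|z_1\|_2$ from the direction, giving $\mathbb E[x_{1j}^ax_{1k}^b]=r_m\,\mathbb E[y_j^ay_k^b]$, after which your Wick counts $9+72+24=105$ and $3+12=15$ are exactly what is needed. The only small imprecision is that (A.3) states $\mathbb E\xi_1^{2k}=O(p^k)$ only for $k=5,\dots,8$, so finiteness of $\mathbb E\xi_1^{2m}$ for $m\le 4$ should be obtained via Lyapunov's inequality rather than read directly off (A.3).
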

This lemma follows from Lemma D.6 in \cite{WangLopes2026}, which can be proved by Isserlis' theorem and Lemma 2.2 of \cite{magnus1978moments}. The following lemma characterizes the order of the expectation of a max-type statistic under an elliptical distribution. Its proof is given in Appendix E.
\begin{lemma}
\label{lemma_elliptical_max_expect}
Assume \(X_1=\xi_1\,\Sigma^{1/2}u_1\), where \(u_1\sim \mathrm{Unif}(S^{p-1})\), \(\xi_1\perp u_1\),
\(\Sigma\succ 0\), \(\max_{1\le j\le p}\Sigma_{jj}=O(1)\), and \(\mathbb E|\xi_1|^{2k}=O(p^k)\) for some \(k \geq 1\). Then
\[
\mathbb E\Big[\max_{1\le j\le p} X_{1j}^{2k}\Big]=O\big(\log ^k p \big).
\]
\end{lemma}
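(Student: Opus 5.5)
We prove the bound with the Gaussian representation $X_1=\xi_1\Sigma^{1/2}z_1/\|z_1\|_2$ from Appendix~A. Here $z_1\sim N_p(0,I_p)$ is independent of $\xi_1$. Write $Y=\Sigma^{1/2}z_1\sim N_p(0,\Sigma)$. Then
\[
\max_{1\le j\le p}X_{1j}^{2k}=\Big(\frac{\xi_1^2}{\|z_1\|_2^2}\Big)^{k}\max_{1\le j\le p}Y_j^{2k}.
\]
The factor $\xi_1^2/\|z_1\|_2^2$ should be of order one, and $\max_j Y_j^{2k}$ should be of order $\log^k p$. The difficulty is that $\|z_1\|_2$ and $Y$ are dependent, and $\|z_1\|_2^{-2k}$ has heavy behaviour near zero.

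\textbf{Step 1: use independence of radius and direction.} Since $z_1/\|z_1\|_2$ is independent of $\|z_1\|_2$, we have
\[
\mathbb E\max_j Y_j^{2k}=\mathbb E\|z_1\|_2^{2k}\cdot\mathbb E\max_j\big(\Sigma^{1/2}u_1\big)_j^{2k}.
\]
Hence
\[
\mathbb E\max_j(\Sigma^{1/2}u_1)_j^{2k}=\frac{\mathbb E\max_j Y_j^{2k}}{m_k}.
\]
By independence of $\xi_1$ and $u_1$,
\[
\mathbb E\max_j X_{1j}^{2k}=\mathbb E\xi_1^{2k}\cdot\frac{\mathbb E\max_jY_j^{2k}}{m_k}=r_k\,\mathbb E\max_jY_j^{2k}.
\]
This removes the dependence problem entirely.

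\textbf{Step 2: control $r_k$.} We have $m_k=\prod_{i=0}^{k-1}(p+2i)\ge p^k$. The hypothesis $\mathbb E\xi_1^{2k}=O(p^k)$ then gives $r_k=O(1)$. Here $k$ is fixed, so the implicit constant does not depend on $p$.

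\textbf{Step 3: Gaussian maximal moment.} It remains to show $\mathbb E\max_j|Y_j|^{2k}=O(\log^k p)$ for $Y_j\sim N(0,\Sigma_{jj})$ with $\Sigma_{jj}\le B$. No assumption on the correlations is needed. We use the tail integral
\[
\mathbb E\max_j|Y_j|^{2k}=\int_0^\infty 2k\,t^{2k-1}\,\mathbb P\big(\max_j|Y_j|>t\big)\,dt.
\]
Set $t_0=\sqrt{2B\log(2p)}$. On $[0,t_0]$, bound the probability by $1$; this contributes $t_0^{2k}=O(\log^k p)$. On $[t_0,\infty)$, use the union bound $2p\,e^{-t^2/(2B)}$ and substitute $t=t_0+s$. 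Then $2p\,e^{-t^2/(2B)}\le e^{-t_0 s/B-s^2/(2B)}$, and the resulting integral is $O(t_0^{2k-1})$, which is lower order.

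Alternatively, Jensen's inequality gives $\mathbb E\max_j|Y_j|^{2k}\le \lambda^{-2k}(\text{const})\log^{2k}\big(\sum_j\mathbb E e^{\lambda|Y_j|}\big)$ with $\lambda\asymp\sqrt{\log p}$. The tail-integral route is cleaner, so we use it.

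\textbf{Main obstacle.} The only real obstacle is the dependence between $\|z_1\|_2$ and the coordinates of $Y$. The radius/direction independence identity in Step~1 resolves it exactly. This step needs care with the convention $u_1=z_1/\|z_1\|_2$ and with the fact that $\mathbb E\|z_1\|_2^{2k}=m_k$ is finite.

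Combining Steps~1--3 gives
\[
\mathbb E\Big[\max_{1\le j\le p}X_{1j}^{2k}\Big]=O(1)\cdot O(\log^k p)=O(\log^k p).
\]
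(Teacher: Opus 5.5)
Your proof is correct. It agrees with the paper's in the first move and differs in how the directional factor is bounded.

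Both arguments start by factoring out the radius via $\xi_1\perp u_1$:
$$\mathbb E\max_j X_{1j}^{2k}=\mathbb E\xi_1^{2k}\,\mathbb E\max_j(\Sigma^{1/2}u_1)_j^{2k}.$$

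\textbf{The paper's route.} It stays on the sphere. It quotes the spherical-cap bound $\mathbb P(|\langle a_j,u_1\rangle|\ge t)\le 2\exp\{-pt^2/(2\|a_j\|^2)\}$ without proof, takes a union bound, and integrates the tail explicitly via the incomplete Gamma function. This gives $\mathbb E W^{16}\le(2C_\Sigma/p)^8[\log^8(2p)+40320\sum_{m\le 7}\log^m(2p)/m!]$, which is then multiplied by $\mathbb E|\xi_1|^{16}=O(p^8)$.

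\textbf{Your route.} You use the independence of $\|z_1\|_2$ and $z_1/\|z_1\|_2$ to get the exact identity $\mathbb E\max_jX_{1j}^{2k}=r_k\,\mathbb E\max_jY_j^{2k}$ with $Y\sim N_p(0,\Sigma)$. This is the maximal analogue of the paper's own single-coordinate formula $\mathbb E X_{1j}^{2k}=(2k-1)!!\,r_k\Sigma_{jj}^k$, and it uses the same reasoning. Once $m_k\ge p^k$ gives $r_k=O(1)$, only the elementary Gaussian tail and a union bound are needed. Your estimate $O(t_0^{2k-1})$ for the piece on $[t_0,\infty)$ is valid; in fact $O(t_0^{2k-2}+1)$ holds.

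\textbf{What each buys.} Your approach avoids relying on a sphere concentration inequality. It also shows the lemma is exactly a Gaussian maximal moment times the radial ratio $r_k$, so any sharper Gaussian maximal bound transfers verbatim. The paper's approach yields fully explicit constants in one pass. Neither uses any correlation structure, since both rest only on union bounds over coordinates with $\Sigma_{jj}\le C_\Sigma$.
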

We are now ready to prove Theorem 2.1. By (\ref{moment_xij}), we have
\[
    \kappa_j = \frac{3r_2\Sigma_{jj}^2}{\Sigma_{jj}^2} = 3r_2,
    \qquad
    r_2 = \frac{\mathbb{E}\xi_1^4}{p^2 + 2p}.
\]
Recall
\[
    \hat\kappa_j = \frac{\frac{1}{n}\sum_{i=1}^nX_{ij}^4}{\hat{\Sigma}_{jj}^2},
\]
put
\[
    d_j=\hat\Sigma_{jj}-\Sigma_{jj},
    \qquad
    w_{ij}=X_{ij}^4-6r_2\Sigma_{jj}X_{ij}^2,
\]
\begin{equation}
\label{Wj}
    W_j
    =
    \frac1n\sum_{i=1}^n
    \big(w_{ij}-\mathbb Ew_{ij}\big),
\end{equation}
and
\begin{equation}
\label{Vj}
    V_j=-3r_2d_j^2.
\end{equation}
The identity
\[
    \frac1{x^2}-\frac1{a^2}
    =
    -\frac{(x-a)(x+a)}{a^2x^2}
\]
and elementary algebra give
\begin{equation}
\label{decomp_first}
    \hat\kappa_j-\kappa_j
    =
    \frac{W_j}{\Sigma_{jj}^2}
    +
    R_{n,j},
\end{equation}
where
\[
    R_{n,j}
    =
    \delta_j(W_j+V_j)
    +
    \frac{V_j}{\Sigma_{jj}^2},
    \qquad
    \delta_j
    =
    \frac1{\hat\Sigma_{jj}^2}
    -
    \frac1{\Sigma_{jj}^2}.
\]
This decomposition is used on the event
\[
    \mathcal A_n
    =
    \bigg\{
    \Delta_{\Sigma,n}
    :=
    \max_{1\leq j\leq p}|d_j|
    \leq\frac b2
    \bigg\}.
\]

We next establish the uniform rate of the remainder. The maximal
sample-mean inequality \eqref{eq:maximal_sample_mean}, whose proof is given
in Appendix~F, and Lemma~\ref{lemma_elliptical_max_expect} yield
\begin{equation}
\label{eq:appendix_a_maximal_rates}
\begin{aligned}
    \mathbb E\Delta_{\Sigma,n}
    &\leq
    C\sqrt{\frac{\log(2p)}{n}}
    \bigg\{
    \mathbb E\max_{1\leq j\leq p}
    |X_{1j}^2-\mathbb EX_{1j}^2|^2
    \bigg\}^{1/2}
    \leq
    C\frac{\log^{3/2}(2p)}{\sqrt n},
    \\
    \mathbb E\max_{1\leq j\leq p}|W_j|
    &\leq
    C\sqrt{\frac{\log(2p)}{n}}
    \bigg\{
    \mathbb E\max_{1\leq j\leq p}
    |w_{1j}-\mathbb Ew_{1j}|^2
    \bigg\}^{1/2}
    \leq
    C\frac{\log^{5/2}(2p)}{\sqrt n}.
\end{aligned}
\end{equation}
Let \(L_p=\log(2p)\) and \(\eta_{n,p} = ( L_p^9 /n )^{1/6} \). If \(\eta_{n,p}\) is bounded below by a fixed positive constant, the
finite-sample assertion of Theorem~2.1 is immediate after enlarging \(C\),
because the Kolmogorov distance is at most one and
\(\varepsilon_{n,p}\geq\eta_{n,p}\). It therefore suffices to consider
\(\eta_{n,p}\) smaller than a fixed constant. Markov's inequality and
\eqref{eq:appendix_a_maximal_rates} then show that, with probability at
least \(1-C\eta_{n,p}\),
\[
    \Delta_{\Sigma,n}
    \leq
    \frac{CL_p^{3/2}}{\eta_{n,p}\sqrt n},
    \qquad
    \max_{1\leq j\leq p}|W_j|
    \leq
    \frac{CL_p^{5/2}}{\eta_{n,p}\sqrt n}.
\]
The first bound is \(O(n^{-1/3})\), so this event is contained in
\(\mathcal A_n\) for all sufficiently large \(n\). On \(\mathcal A_n\),
Condition~(A.2) gives
\[
    \max_j|\delta_j|
    \leq C\Delta_{\Sigma,n},
    \qquad
    \max_j|V_j|
    \leq C\Delta_{\Sigma,n}^2.
\]
Consequently, on the preceding event,
\begin{equation}
\label{eq:uniform_linearization_rate}
\begin{aligned}
    \mathcal E_{n,p}
    &:=
    \sqrt n
    \max_{1\leq j\leq p}|R_{n,j}|
    \\
    &\leq
    C\sqrt n
    \bigg\{
    \Delta_{\Sigma,n}
    \max_j|W_j|
    +
    \Delta_{\Sigma,n}^2
    +
    \Delta_{\Sigma,n}^3
    \bigg\}
    \leq
    C\frac{L_p}{n^{1/6}}.
\end{aligned}
\end{equation}
In particular,
\[
    \mathbb P\bigg(
    \mathcal E_{n,p}>
    C\frac{L_p}{n^{1/6}}
    \bigg)
    \leq
    C\eta_{n,p}.
\]
Under \(\log p=o(n^{1/14})\), this bound also gives
\[
    \mathcal E_{n,p}
    =
    o_{\mathbb P}\big\{L_p^{-1/2}\big\}.
\]
We next establish the Gaussian approximation for \(\sqrt{n}\max_{1\leq j \leq p} W_j\). Since \(W_j\) is a sum of centered random variables, we apply high-dimensional Gaussian approximation results for maxima of sums of random vectors \cite{victor2017clt}. Proposition 2.1 of \cite{victor2017clt} requires the variance lower bound
\begin{equation}
    (M.1) \quad\quad\quad n^{-1}\sum_{i=1}^n\text{var}(w_{ij}) \geq b' > 0
    \nonumber
\end{equation}
for some constant \(b'>0\), the moment bound
\begin{equation}
    (M.2) \quad\quad\quad n^{-1}\sum_{i=1}^n\mathbb{E}\big(\big|w_{ij} - \mathbb{E}w_{ij}\big|^{2+k}\big) \leq B_n^k \quad \text{for} \quad k=1,2,
    \nonumber
\end{equation}
and one of the following tail or moment conditions:
\begin{equation}
  (E.1) \quad\quad\quad  \mathbb{E}\big[ \exp(|w_{ij} - \mathbb{E}w_{ij}|/B_n) \big] \leq 2,
    \nonumber
\end{equation}
for all \(i=1,\cdots,n\) and \(j = 1,\cdots, p\), or
\begin{equation}
  (E.2) \quad\quad\quad  \mathbb{E}\Big[ \max_{1\leq j \leq p} \big(|w_{ij} - \mathbb{E}w_{ij}|/B_n \big)^q \Big] \leq 2.
    \nonumber
\end{equation}
Conditions (A.2) and (A.3) imply (M.1) and (M.2) for all \(B_n \gg 1\); in fact, the left-hand side of (M.2) is \(O(1)\). Since \(\mathbb{E}(\xi_i^2) = p\) diverges with \(p\), an exponential moment bound such as (E.1) is not justified under finite-moment assumptions alone. We therefore try to verify (E.2) instead. Taking \(q =4\) and using Jensen's inequality gives
\begin{equation}
    \mathbb{E}\Big[ \max_{1\leq j \leq p} \big|w_{ij} - \mathbb{E}w_{ij}\big|^4 \Big] \leq C \cdot\mathbb{E}\Big[ \max_{1\leq j \leq p}\big| X_{1j}\big|^{16} \Big],
    \nonumber
\end{equation}
where \(C\) is a positive constant depending only on \(B\). By Lemma \ref{lemma_elliptical_max_expect} and Condition (A.3), taking \(B_n = C\log^2 p\) yields
\begin{equation}
\mathbb{E}\Big[ \max_{1\leq j \leq p} \big(|w_{ij} - \mathbb{E}w_{ij}|/B_n \big)^4 \Big] \lesssim B_n^{-4} \log^8 p \lesssim 1,
    \nonumber
\end{equation}
where \(C>0\) is chosen large enough so that
\[
    \mathbb{E}\Big[ \max_{1\leq j \leq p} \big(|w_{ij} - \mathbb{E}w_{ij}|/B_n \big)^4 \Big] \leq 2.
\]
Therefore, Proposition 2.1 of \cite{victor2017clt} gives
\begin{equation}
\label{asymp_normal_W}
\begin{aligned}
&\sup_{x \in \mathbb{R}}
\bigg|
\mathbb{P}\bigg(
\sqrt{n}\max_{1\leq j \leq p}
\frac{|W_j|}{\Sigma_{jj}^2}>x
\bigg)
-
\mathbb{P}\bigg(
\max_{1\leq j \leq p}
\big|Z_j^{\tilde{\Sigma}^*}\big|>x
\bigg)
\bigg|
\\
&\quad\lesssim
\bigg\{
\frac{B_n^2\log^7(2pn)}{n}
\bigg\}^{1/6}
+
\bigg\{
\frac{B_n^2\log^3(2pn)}{\sqrt n}
\bigg\}^{1/3}.
\end{aligned}
\end{equation}
where \(\tilde{\Sigma}^*(j,k) = \text{cov}(w_{1j}/\Sigma_{jj}^2,w_{1k}/\Sigma_{kk}^2)\).
The diagonal entries of \(\tilde\Sigma^*\) are bounded above and away from
zero for all sufficiently large \(p\). The Gaussian anti-concentration
inequality applied to the augmented \(2p\)-dimensional Gaussian vector gives
\[
    \sup_{x\in\mathbb R}
    \mathbb P\bigg(
    \bigg|
    \max_{1\leq j\leq p}|Z_j^{\tilde\Sigma^*}|
    -x
    \bigg|
    \leq t
    \bigg)
    \leq
    Ct\sqrt{\log(2p)},
    \qquad t>0.
\]
Combining this inequality, \eqref{decomp_first},
\eqref{eq:uniform_linearization_rate}, and \eqref{asymp_normal_W} yields
\[
\begin{aligned}
&\sup_{x\in\mathbb R}
\bigg|
\mathbb P\bigg(
\sqrt n
\max_{1\leq j\leq p}
|\hat\kappa_j-\kappa_j|>x
\bigg)
-
\mathbb P\bigg(
\max_{1\leq j\leq p}
|Z_j^{\tilde\Sigma^*}|>x
\bigg)
\bigg|
\\
&\quad\leq
C\bigg[
\bigg\{
\frac{\log^4(2p)\log^7(2pn)}{n}
\bigg\}^{1/6}
+
\bigg\{
\frac{\log^4(2p)\log^3(2pn)}{\sqrt n}
\bigg\}^{1/3}
+
\bigg\{
\frac{\log^9(2p)}{n}
\bigg\}^{1/6}
\bigg]
\\
&\quad\leq
C\bigg[
\bigg\{
\frac{\log^{11}(2pn)}{n}
\bigg\}^{1/6}
+
\bigg\{
\frac{\log^7(2pn)}{\sqrt n}
\bigg\}^{1/3}
\bigg]
=C\varepsilon_{n,p}.
\end{aligned}
\]
This proves the first conclusion of Theorem~2.1. It remains to identify the
limiting Gaussian covariance structure. Recall that
$w_{ij}=X_{ij}^4-6r_2\Sigma_{jj}X_{ij}^2$. By Lemma \ref{lemma_2}, for \(j\neq k\),
\begin{equation}
    \begin{aligned}
        \tilde{\Sigma}^*(j,k)
        &= \operatorname{cov}\bigg(
        \frac{w_{1j}}{\Sigma_{jj}^2},
        \frac{w_{1k}}{\Sigma_{kk}^2}
        \bigg)  \\
        &= \frac{1}{\Sigma_{jj}^2\Sigma_{kk}^2}
        \operatorname{cov}\big(
        X_{1j}^4 - 6r_2\Sigma_{jj}X_{1j}^2,\,
        X_{1k}^4 - 6r_2\Sigma_{kk}X_{1k}^2
        \big) \\
        &= r_4 \big[ 24r_{jk}^4 + 72r_{jk}^2 + 9 \big]
        + r_2^3\big[ 72r_{jk}^2 + 36 \big]
        - 9r_2^2
        - r_2r_3\big[ 144r_{jk}^2 + 36 \big],
    \end{aligned}
    \label{eq:tilde_sigma_p_offdiag}
\end{equation}
where \(r_{jk}=\Sigma_{jk}/(\Sigma_{jj}^{1/2}\Sigma_{kk}^{1/2})\). Similarly, for \(j=k\),
\begin{equation}
    \tilde{\Sigma}^*(j,j)
    =
    \operatorname{var}\bigg(\frac{w_{1j}}{\Sigma_{jj}^2}\bigg)
    =
    105r_4-180r_2r_3+108r_2^3-9r_2^2 .
    \label{eq:tilde_sigma_p_diag}
\end{equation}
Since \(|r_{jk}|\le 1\), Lemma \ref{lemma_1} implies
\[
    \max_{1\le j,k\le p}
    \Big|
    \tilde{\Sigma}^*(j,k)-\tilde{\Sigma}(j,k)
    \Big|
    \le
    C\big(
    |r_4-1|+|r_2r_3-1|+|r_2^3-1|+|r_2^2-1|
    \big)
    =o(p^{-3/4}),
\]
where \(\tilde{\Sigma}(j,k)=24r_{jk}^4,\qquad 1\le j,k\le p\).
Equivalently, \(\tilde{\Sigma}=24R^{\odot 4}\), where
\(R=(r_{jk})_{1\le j,k\le p}\) and \(R^{\odot 4}\) denotes the fourth Hadamard
power of \(R\). By the Schur product theorem, \(\tilde{\Sigma}\) is positive
semidefinite and hence is a valid covariance matrix. Let
\[
    G_p=(G_{p,1},\ldots,G_{p,p})^\top\sim N(0,\tilde{\Sigma}^*),
    \qquad
    G=(G_1,\ldots,G_p)^\top\sim N(0,\tilde{\Sigma}).
\]
To handle the absolute maximum, introduce the augmented Gaussian vectors
\[
    \bar G_p=(G_p^\top,-G_p^\top)^\top,\qquad
    \bar G=(G^\top,-G^\top)^\top .
\]
Then
\[
    \max_{1\le j\le p}|G_{p,j}|=\max_{1\le j\le 2p}\bar G_{p,j},
    \qquad
    \max_{1\le j\le p}|G_j|=\max_{1\le j\le 2p}\bar G_j,
\]
and the covariance matrices of \(\bar G_p\) and \(\bar G\) still differ by at most
\[
    \Delta_p
    :=
    \max_{1\le j,k\le p}
    \Big|
    \tilde{\Sigma}^*(j,k)-\tilde{\Sigma}(j,k)
    \Big|
    =o(p^{-3/4})
\]
in element-wise maximum norm. Therefore, by Lemma \ref{lemma_gaussian_comparison},
\begin{equation}
    \sup_{x\in\mathbb R}
    \bigg|
    \mathbb P\Big(
    \max_{1\le j\le p}|Z_j^{\tilde{\Sigma}^*}|\le x
    \Big)
    -
    \mathbb P\Big(
    \max_{1\le j\le p}|Z_j^{\tilde{\Sigma}}|\le x
    \Big)
    \bigg|
    \le
    C\Delta_p^{1/3}
    \bigg\{1\vee \log\bigg(\frac{2p}{\Delta_p}\bigg)\bigg\}^{2/3}
    =o(1).
    \label{eq:gaussian_compare_tilde}
\end{equation}
The first conclusion of Theorem~2.1 together with
\eqref{eq:gaussian_compare_tilde} yields
\[
    \sup_{x\in\mathbb R}
    \bigg|
    \mathbb P\Big(
    \sqrt n\max_{1\le j\le p}|\hat\kappa_j-\kappa_j|>x
    \Big)
    -
    \mathbb P\Big(
    \max_{1\le j\le p}|Z_j^{\tilde{\Sigma}}|>x
    \Big)
    \bigg|
    \to 0 ,
\]
which completes the proof. \(\hfill\square\)

\section*{Appendix B: Proof of Proposition~2.1}
\label{appendix_b}

Under \(H_0\), all coordinatewise kurtoses are identical, namely
\[
    \kappa_1=\cdots=\kappa_p=3r_2.
\]
Hence, for \(1\le j\le p-1\), \(
    \hat\kappa_{j+1}-\hat\kappa_j
    =
    (\hat\kappa_{j+1}-\kappa_{j+1})
    -
    (\hat\kappa_j-\kappa_j) \). The decomposition in Appendix~A gives
\begin{equation}
    \begin{aligned}
        \hat\kappa_{j+1}-\hat\kappa_j
        &=
        \frac{W_{j+1}}{\Sigma_{j+1,j+1}^2}
        -
        \frac{W_j}{\Sigma_{jj}^2}
        +
        R_{n,j+1}-R_{n,j}                                      \\
        &:=\mathbb W_{j+1,j}+\mathbb R_{n,j+1,j},
    \end{aligned}
    \label{eq:parallel_decomposition}
\end{equation}
where
\[
    \mathbb W_{j+1,j}
    =
    \frac{W_{j+1}}{\Sigma_{j+1,j+1}^2}
    -
    \frac{W_j}{\Sigma_{jj}^2}.
\]
Moreover, \eqref{eq:uniform_linearization_rate} implies
\begin{equation}
\label{eq:parallel_remainder_rate}
    \sqrt n
    \max_{1\leq j\leq p-1}
    |\mathbb R_{n,j+1,j}|
    \leq
    2\mathcal E_{n,p}
    =
    o_{\mathbb P}\big\{\log^{-1/2}(2p)\big\}.
\end{equation}
Thus the leading term is \(\mathbb W_{j+1,j}\). To compute its covariance matrix, define
\[
    Y_{ij}
    =
    \frac{w_{i,j+1}}{\Sigma_{j+1,j+1}^2}
    -
    \frac{w_{ij}}{\Sigma_{jj}^2},
    \qquad 1\le j\le p-1 .
\]
Then
\[
    \mathbb W_{j+1,j}
    =
    \frac1n\sum_{i=1}^n
    \left(Y_{ij}-\mathbb EY_{ij}\right),
\]
and the covariance matrix of \(Y_i=(Y_{i1},\ldots,Y_{i,p-1})^\top\) is \(
    \breve\Sigma^*_{jk} = \operatorname{cov}(Y_{ij},Y_{ik})\). By bilinearity of covariance and the definition of \(\tilde{\Sigma}^*\), for \(j\neq k\),
\begin{equation}
    \begin{aligned}
        \breve{\Sigma}^*_{jk}
        &:=
        \operatorname{cov}\left(
        \frac{w_{i,j+1}}{\Sigma_{j+1,j+1}^2}
        -
        \frac{w_{ij}}{\Sigma_{jj}^2},
        \frac{w_{i,k+1}}{\Sigma_{k+1,k+1}^2}
        -
        \frac{w_{ik}}{\Sigma_{kk}^2}
        \right)                                      \\
        &=
        \tilde{\Sigma}^*_{j+1,k+1}
        +
        \tilde{\Sigma}^*_{jk}
        -
        \tilde{\Sigma}^*_{j+1,k}
        -
        \tilde{\Sigma}^*_{j,k+1} .
    \end{aligned}
    \label{eq:breve_sigma_p_offdiag}
\end{equation}
Similarly, for the diagonal entries,
\begin{equation}
    \begin{aligned}
        \breve{\Sigma}^*_{jj}
        &=
        \operatorname{var}\left(
        \frac{w_{i,j+1}}{\Sigma_{j+1,j+1}^2}
        -
        \frac{w_{ij}}{\Sigma_{jj}^2}
        \right)                                      \\
        &=
        \tilde{\Sigma}^*_{j+1,j+1}
        +
        \tilde{\Sigma}^*_{jj}
        -
        2\tilde{\Sigma}^*_{j+1,j}                   \\
        &=
        210r_4 - 360 r_2r_3+216r_2^3-18r_2^2
        -2\tilde{\Sigma}^*_{j+1,j}.
    \end{aligned}
    \label{eq:breve_sigma_p_diag}
\end{equation}
We now apply the same high-dimensional Gaussian approximation theorem as in
Appendix A to the centered vector \(
    (Y_{i1}-\mathbb EY_{i1},\ldots,Y_{i,p-1}-\mathbb EY_{i,p-1})^\top \). The moment conditions are inherited from those already verified for
\(w_{ij}\). Moreover, provided the adjacent contrasts are nondegenerate, namely
\begin{equation}
\label{nondegenerate_contrasts}
        \inf_{1\le j\le p-1}
    \breve\Sigma^*_{jj}
    \ge c_0>0,
\end{equation}
Condition~(A.2) and Lemma~\ref{lemma_1} imply this bound for all sufficiently
large \(p\), since
\[
    \inf_{1\le j\le p-1}
    \{1-r_{j+1,j}^4\}
    \ge 1 - r_0^4>0,
\]
the variance lower-bound condition required by the Gaussian approximation
theorem is satisfied. Therefore,
\begin{equation}
\label{eq:asymp_normal_parallel_W}
\begin{aligned}
&\sup_{x \in \mathbb{R}}
\bigg|
\mathbb{P}\bigg(
\sqrt n\max_{1\le j\le p-1}|\mathbb W_{j+1,j}|>x
\bigg)
-
\mathbb{P}\bigg(
\max_{1\le j\le p-1}|Z_j^{\breve{\Sigma}^*}|>x
\bigg)
\bigg|
\\
&\quad\lesssim
\bigg\{
\frac{B_n^2\log^7(2pn)}{n}
\bigg\}^{1/6}
+
\bigg\{
\frac{B_n^2\log^3(2pn)}{\sqrt n}
\bigg\}^{1/3}.
\end{aligned}
\end{equation}
Combining \eqref{eq:parallel_decomposition},
\eqref{eq:parallel_remainder_rate}, and
\eqref{eq:asymp_normal_parallel_W}, and repeating the anti-concentration
argument in Appendix~A, we obtain
\begin{equation}
    \begin{aligned}
    \sup_{x \in \mathbb{R}}
    \left|
    \mathbb{P}\left(
    \sqrt n\max_{1\le j\le p-1}
    |\hat\kappa_{j+1}-\hat\kappa_j|>x
    \right)
    -
    \mathbb{P}\left(
    \max_{1\le j\le p-1}|Z_j^{\breve{\Sigma}^*}|>x
    \right)
    \right|                                      
    \leq
    C\varepsilon_{n,p}.
    \end{aligned}
    \label{eq:parallel_gaussian_p}
\end{equation}
It remains to pass from \(\breve\Sigma^*\) to its limiting covariance matrix.
By Lemma \ref{lemma_1} and the preceding calculation for \(\tilde\Sigma^*\), we know
\[
    \max_{1\le j,k\le p}
    |\tilde\Sigma^*_{jk} -\tilde\Sigma_{jk}|
    =o(p^{-3/4}).
\]
Using the linear representation
\[
    \breve\Sigma^*_{jk}
    =
    \tilde{\Sigma}^*_{j+1,k+1}
    +
    \tilde{\Sigma}^*_{jk}
    -
    \tilde{\Sigma}^*_{j+1,k}
    -
    \tilde{\Sigma}^*_{j,k+1} ,
\]
we have
\[
    \max_{1\le j,k\le p-1}
    |\breve{\Sigma}^*_{jk}-\breve\Sigma_{jk}|
    \le
    4
    \max_{1\le j,k\le p}
    |\tilde{\Sigma}^*_{jk} -\tilde\Sigma_{jk} |
    =
    o(p^{-3/4}).
\]
The limiting covariance matrix is therefore
\[
    \breve{\Sigma}_{jk}
    =
    24\left(
    r_{j+1,k+1}^4+r_{j,k}^4-r_{j+1,k}^4-r_{j,k+1}^4
    \right),
    \qquad j\neq k,
\]
and \(\breve{\Sigma}_{jj} = 48(1-r_{j+1,j}^4) \). Equivalently, if \(D\) is the \((p-1)\times p\) first-difference matrix with
\(D_{j,j+1}=1\) and \(D_{j,j}=-1\), then
\( \breve\Sigma=D\tilde\Sigma D^\top\), which also shows that \(\breve\Sigma\) is positive semidefinite. Let
\[
    H_p\sim N(0,\breve{\Sigma}^*),
    \qquad
    H\sim N(0,\breve\Sigma).
\]
Applying Lemma \ref{lemma_gaussian_comparison} to the augmented vectors
\((H_p^\top,-H_p^\top)^\top\) and \((H^\top,-H^\top)^\top\) gives
\[
    \sup_{x\in\mathbb R}
    \left|
    \mathbb P\left(
    \max_{1\le j\le p-1}|Z_j^{\breve{\Sigma}^*}|\le x
    \right)
    -
    \mathbb P\left(
    \max_{1\le j\le p-1}|Z_j^{\breve\Sigma}|\le x
    \right)
    \right|
    =o(1).
\]
Together with (\ref{eq:parallel_gaussian_p}), we obtain
\[
        \sup_{x\in\mathbb R}
    \bigg|
    \mathbb P\Big(
    T_n >x
    \Big)
    -
    \mathbb P\Big(
    \max_{1\le j\le p-1} |Z_j^{\breve{\Sigma}}|>x
    \Big)
    \bigg|
    \to 0 .
\]
This concludes the proof. \(\hfill\square\)

\section*{Appendix C: Proofs of Theorem~3.1 and Corollary~3.1}
\label{appendix_c}

\subsection*{Proof of Theorem~3.1}

Put \(d=p-1\). We first consider an auxiliary multiplier variable without the factor $\hat a_{n,p}$ ,
\[
    \mathring S_n
    =
    \frac1{\sqrt n}
    \sum_{i=1}^n\big(w_i-\bar w\big)e_i.
\]
The bootstrap statistic in the theorem satisfies
\[
    S_n=\hat a_{n,p}\mathring S_n.
\]
Recall from Appendix B that the leading term in the expansion of the adjacent-difference statistic is
\[
    \mathbb W_{j+1,j}
    =
    \frac{W_{j+1}}{\Sigma_{j+1,j+1}^2}
    -
    \frac{W_j}{\Sigma_{jj}^2},
    \qquad 1\leq j\leq d,
\]
where
\[
    W_j
    =
    \frac1n\sum_{i=1}^n
    \big\{
    X_{ij}^4-\mathbb EX_{ij}^4
    -
    6r_2\Sigma_{jj}\big(X_{ij}^2-\Sigma_{jj}\big)
    \big\}.
\]
Then \(\operatorname{cov}(w_i)=\breve\Sigma^*\), and
\[
    \sqrt n\,\mathbb W
    =
    \frac1{\sqrt n}\sum_{i=1}^n
    \big(w_i-\mathbb Ew_i\big).
\]

To apply the high-dimensional bootstrap theorem to the absolute maximum, define
\[
    \widetilde w_i
    =
    \big[
    \big(w_i-\mathbb Ew_i\big)^\top,
    -\big(w_i-\mathbb Ew_i\big)^\top
    \big]^\top
    \in\mathbb R^{2d}.
\]
Let
\[
    \widetilde Z\sim N(0,\widetilde\Gamma_p),
    \qquad
    \widetilde\Gamma_p
    =
    \begin{pmatrix}
        \breve\Sigma^* & -\breve\Sigma^*\\
        -\breve\Sigma^* & \breve\Sigma^*
    \end{pmatrix}.
\]
Then
\[
    \max_{1\leq j\leq d}|Z_j^{\breve\Sigma^*}|
    =
    \max_{1\leq j\leq2d}\widetilde Z_j.
\]
The variance lower bound follows from \eqref{nondegenerate_contrasts}. The covariance calculations in Lemma~\ref{lemma_2} and Condition~(A.2) imply
\[
    \max_{1\leq j\leq d}
    \mathbb E\big(w_{ij}-\mathbb Ew_{ij}\big)^2
    =
    O(1).
\]
Moreover, by Lemma~\ref{lemma_elliptical_max_expect},
\[
\begin{aligned}
    \mathbb E
    \Big[\max_{1\leq j\leq d}
    |w_{ij}-\mathbb Ew_{ij}|^4 \Big]
    &\leq
    C
    \mathbb E
    \Big[ \max_{1\leq j\leq p}
    \big(
    |X_{ij}|^{16}+|X_{ij}|^8+1
    \big) \Big]                                                  \\
    &\lesssim
    \log^8p.
\end{aligned}
\]
Thus, taking \(B_n=C\log^2p\) with \(C>0\) sufficiently large gives
\[
    \mathbb E
     \Big[ \max_{1\leq j\leq d}
    \big(
     |w_{ij}-\mathbb Ew_{ij}|/B_n
    \big)^4  \Big]
    \leq2.
\]
The same bound holds for \(\widetilde w_i\), up to a change in the absolute constant. Applying the Gaussian multiplier bootstrap theorem in \cite{victor2017clt} to \(\{\widetilde w_i\}_{i=1}^n\) gives, for every \(\eta\in(0,e^{-1})\), with probability at least \(1-\eta\),
\begin{equation}
\label{eq:auxiliary_bootstrap_bound}
\begin{aligned}
&\sup_{x\in\mathbb R}
\bigg|
\mathbb P\Big(
\max_{1\leq j\leq d}|\mathring S_{nj}|>x
\,\big|\,
\{X_i\}_{i=1}^n
\Big)
-
\mathbb P\Big(
\max_{1\leq j\leq d}|Z_j^{\breve\Sigma^*}|>x
\Big)
\bigg|                                                   \\
&\quad\leq
C\Bigg[
\bigg\{
\frac{B_n^2\log^5(2pn)\log^2(1/\eta)}{n}
\bigg\}^{1/6}
+
\bigg\{
\frac{B_n^2\log^3(2pn)}{\eta^{1/2}n^{1/2}}
\bigg\}^{1/3}
\Bigg].
\end{aligned}
\end{equation}
To deal with $\hat a_{n,p}$, the following lemma shows that the multiplier factor is asymptotically negligible at the rate required by the Gaussian comparison argument. Its proof is placed in Appendix F.
\begin{lemma}
\label{lemma_bootstrap_factor}
Under \(H_0\), assume the conditions of Theorem~3.1. If \(\log p=o(n^{1/14})\), then
\[
    0<\hat a_{n,p}\leq1,
    \qquad
    |\hat a_{n,p}-1|\log^2p=o_{\mathbb P}(1).
\]
\end{lemma}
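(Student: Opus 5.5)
The plan is to reduce the lemma to a uniform consistency statement for the ratios \(\hat q_j\). The claim \(0<\hat a_{n,p}\le 1\) is immediate from \eqref{bootstrap_factor}. On \(\mathcal G_{n,p}^c\) we have \(\hat a_{n,p}=1\). On \(\mathcal G_{n,p}\) the median of the \(\hat q_j\) is positive and \(\max_j\hat q_j\ge\operatorname{med}_j\hat q_j\), so the ratio inside the braces is at least \(1\) and its \(-1/8\) power lies in \((0,1]\).

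For the rate, I will show
\[
    \delta_{n,p}:=\max_{1\le j\le p-1}|\hat q_j-1|=o_{\mathbb P}\big(\log^{-2}p\big)
\]
together with \(\mathbb P(\mathcal G_{n,p})\to1\). On \(\mathcal G_{n,p}\cap\{\delta_{n,p}\le1/2\}\) the ratio lies between \(1\) and \((1+\delta_{n,p})/(1-\delta_{n,p})\le 1+4\delta_{n,p}\), so
\[
    |\hat a_{n,p}-1|\le 1-(1+4\delta_{n,p})^{-1/8}\le \delta_{n,p}/2 .
\]
Both claims then follow.

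To control \(\delta_{n,p}\) I split \(\hat q_j-1\) into a denominator error and a numerator error. Under \(H_0\), Lemma~\ref{lemma_1} and the formula \eqref{eq:breve_sigma_p_diag} give \(\breve\Sigma^*_{jj}=\omega_j+o(p^{-3/4})\) uniformly in \(j\), where \(\omega_j=48(1-r_{j,j+1}^4)\ge 48(1-r_0^4)>0\) by (A.2). Hence it suffices to show that both \(\max_j|\hat\omega_j-\omega_j|\) and \(\max_j|\hat\Omega_{jj}-\breve\Sigma^*_{jj}|\) are \(o_{\mathbb P}(\log^{-2}p)\). Since the \(\omega_j\) are bounded below, the first bound also gives \(\min_j\hat\omega_j>0\) with probability tending to one. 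The second bound then forces every \(\hat q_j\) close to \(1\), so the median is positive, and together with \(\mathcal D_{n,p}\) this yields \(\mathbb P(\mathcal G_{n,p})\to1\).

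\emph{Denominator.} Apply the maximal sample-mean inequality \eqref{eq:maximal_sample_mean} to the \(2p-1\) products \(X_{1j}X_{1k}\) with \(|j-k|\le1\). Their squared envelope satisfies \(\mathbb E\max_j X_{1j}^4=O(\log^2p)\) by Lemma~\ref{lemma_elliptical_max_expect}, exactly as for \(\Delta_{\Sigma,n}\) in \eqref{eq:appendix_a_maximal_rates}. This gives \(\max_{|j-k|\le1}|\hat\Sigma_{jk}-\Sigma_{jk}|=O_{\mathbb P}(\log^{3/2}p/\sqrt n)\). Because \(b\le\Sigma_{jj}\le B\), the map \((\Sigma_{jj},\Sigma_{j+1,j+1},\Sigma_{j,j+1})\mapsto r_{j,j+1}^4\) is uniformly Lipschitz on the relevant region. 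Hence \(\max_j|\hat\omega_j-\omega_j|=O_{\mathbb P}(\log^{3/2}p/\sqrt n)\).

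\emph{Numerator.} Let \(\tilde\Omega\) be the empirical covariance built from \(w_i\) instead of \(v_i\). I write
\[
    \hat\Omega_{jj}-\breve\Sigma^*_{jj}
    =(\hat\Omega_{jj}-\tilde\Omega_{jj})+(\tilde\Omega_{jj}-\breve\Sigma^*_{jj}).
\]
\begin{itemize}
\item For the first piece, \(v_{ij}-w_{ij}\) is a combination of \(X_{i,\cdot}^4\) and \(X_{i,\cdot}^2\) with coefficients \(\hat\Sigma^{-2}-\Sigma^{-2}\), \(\hat\Sigma^{-1}-r_2\Sigma^{-1}\), etc. These coefficients are uniformly \(O_{\mathbb P}(\log^{3/2}p/\sqrt n)+o(p^{-3/4})\), using \(r_2=1+o(p^{-3/4})\). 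By Cauchy--Schwarz, \(|\hat\Omega_{jj}-\tilde\Omega_{jj}|\) is at most that rate times \(\{n^{-1}\sum_i\max_jX_{ij}^8\}^{1/2}\) times \(\{n^{-1}\sum_i\max_j(1+X_{ij}^8)\}^{1/2}\). By Markov's inequality and Lemma~\ref{lemma_elliptical_max_expect}, the averaged maxima are \(O_{\mathbb P}(\log^4p)\). So this piece is \(O_{\mathbb P}(\log^{11/2}p/\sqrt n)\) up to \(o(p^{-3/4})\log^4 p\) terms.
\item For the second piece, split off the mean: \((\bar w_j-\mathbb Ew_{1j})^2=O_{\mathbb P}(\log^5p/n)\) by \eqref{eq:appendix_a_maximal_rates}. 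The remaining term \(n^{-1}\sum_i\{(w_{ij}-\mathbb Ew_{1j})^2-\breve\Sigma^*_{jj}\}\) is handled by \eqref{eq:maximal_sample_mean} with squared envelope \(\mathbb E\max_j(w_{1j}-\mathbb Ew_{1j})^4\lesssim\log^8p\), as computed in Appendix~C. This gives \(O_{\mathbb P}(\log^{9/2}p/\sqrt n)\).
\end{itemize}

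Multiplying the worst rate by \(\log^2p\) leaves \(\log^{15/2}p/\sqrt n\), which is \(o(1)\) because \(\log^{15}p=o(n^{15/14})\)… — the exponent needs care here. The condition \(\log p=o(n^{1/14})\) gives \(\log^{14}p=o(n)\), so I expect to need the sharper bound \(\log^{9/2}p/\sqrt n\) (from the Gaussian-type \(\mathbb E\max_jX_{1j}^{8}\) interplay) rather than the crude \(\log^{11/2}\). The crude Cauchy--Schwarz in the perturbation step should therefore be replaced by pairing the coefficient error \(O_{\mathbb P}(\log^{3/2}p/\sqrt n)\) with \(\max_j n^{-1}\sum_iX_{ij}^8\). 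That quantity is \(O_{\mathbb P}(1)\), not \(\log^4p\): it is a maximum of means with bounded expectations, and it concentrates by the same maximal inequality since \(\log^{9/2}p/\sqrt n\to0\). This yields \(O_{\mathbb P}(\log^{3/2}p/\sqrt n)\) for the first piece.

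The main obstacle is this bookkeeping of logarithmic powers, so that every term stays \(o(\log^{-2}p)\) under \(\log p=o(n^{1/14})\). The sum-of-squares concentration step, with its \(\log^{9/2}p/\sqrt n\) rate, is the binding one: it gives \(\log^{13/2}p/\sqrt n=o(1)\) precisely because \(\log^{13}p=o(n^{13/14})\).
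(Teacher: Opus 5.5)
Your proposal is correct and follows essentially the same route as the paper's proof in Appendix F. You reduce to $\max_j|\hat q_j-1|$, control $\hat\omega_j-\omega_j$ through the adjacent sample covariances, and split $\hat\Omega_{jj}-\breve\Sigma^*_{jj}$ into a plug-in perturbation plus the concentration of $\widetilde\Omega_{jj}$, whose $\log^{9/2}p/\sqrt n$ rate is binding and gives $\log^{13/2}p/\sqrt n=o(1)$. The only difference is in the perturbation step: the paper keeps the crude mean-of-maxima envelope in $D_n$, so $D_n^{1/2}=O_{\mathbb P}(\log^{7/2}p/\sqrt n)$, but pairs it with $\max_j\widetilde\Omega_{jj}^{1/2}=O_{\mathbb P}(1)$ rather than a second crude envelope, while your corrected max-of-means pairing gives the sharper $O_{\mathbb P}(\log^{3/2}p/\sqrt n)$. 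Either version suffices, and you should drop the failed $\log^{11/2}p$ first attempt from the write-up.
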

Conditionally on \(\{X_i\}_{i=1}^n\), the vectors \(\mathring S_n\) and \(S_n\) are centered Gaussian with covariance matrices \(\widetilde\Omega\) and \(\hat a_{n,p}^2\widetilde\Omega\), respectively. Their covariance matrices differ by
\[
    \delta_{n,p}
    =
    |\hat a_{n,p}^2-1|\,\|\widetilde\Omega\|_{\max}.
\]
Because \(\widetilde\Omega\) is positive semidefinite,
\[
    \|\widetilde\Omega\|_{\max}
    \leq
    \max_{1\leq j\leq d}\widetilde\Omega_{jj}
    =
    O_{\mathbb P}(1),
\]
where the last equality follows from \eqref{eq:infeasible_variance_rate}. Lemma~\ref{lemma_bootstrap_factor} therefore yields
\[
    \delta_{n,p}\log^2p=o_{\mathbb P}(1).
\]
By \eqref{eq:infeasible_variance_rate} and
\eqref{nondegenerate_contrasts}, there are constants \(c,C>0\) such that
\[
    \mathcal H_{n,p}
    =
    \bigg\{
    c\leq
    \min_{1\leq j\leq d}\widetilde\Omega_{jj}
    \leq
    \max_{1\leq j\leq d}\widetilde\Omega_{jj}
    \leq C,
    \quad
    \hat a_{n,p}\geq\frac12
    \bigg\}
\]
satisfies \(\mathbb P(\mathcal H_{n,p})\to1\) by
Lemma~\ref{lemma_bootstrap_factor}. On this event, the Gaussian
comparison inequality applied to the corresponding augmented
\(2d\)-dimensional vectors gives
\begin{equation}
\label{eq:factor_comparison}
\begin{aligned}
&1_{\mathcal H_{n,p}}
\sup_{x\in\mathbb R}
\bigg|
\mathbb P\Big(
\max_{1\leq j\leq d}|S_{nj}|>x
\,\big|\,
\{X_i\}_{i=1}^n
\Big)
-
\mathbb P\Big(
\max_{1\leq j\leq d}|\mathring S_{nj}|>x
\,\big|\,
\{X_i\}_{i=1}^n
\Big)
\bigg|                                                   \\
&\quad\leq
C\big(\delta_{n,p}\big)^{1/3}
\bigg\{
1\vee
\log\bigg(
\frac{2p}{\delta_{n,p}}
\bigg)
\bigg\}^{2/3}
=
o_{\mathbb P}(1).
\end{aligned}
\end{equation}
Since \(\mathbb P(\mathcal H_{n,p}^c)\to0\), the supremum in
\eqref{eq:factor_comparison} is \(o_{\mathbb P}(1)\) without the indicator.
Substituting \(B_n=C\log^2p\) in
\eqref{eq:auxiliary_bootstrap_bound} proves the first conclusion of
Theorem~3.1, and combining the two displays proves
the factor-adjusted assertion of that theorem.

We next consider the bootstrap statistic. Define the auxiliary feasible variable
\[
    \mathring{\hat S}_n
    =
    \frac1{\sqrt n}
    \sum_{i=1}^n\big(v_i-\bar v\big)e_i,
\]
so that \(\hat S_n=\hat a_{n,p}\mathring{\hat S}_n\). The same Rosenthal-type argument used in Appendix A gives
\[
    \max_{1\le j\le p}|\hat\Sigma_{jj}-\Sigma_{jj}|=O_{\mathbb P}\bigg(\frac{\log^{3/2}(2p)}{\sqrt{n}}\bigg).
\]
On the event
\(\{
    \max_{1\leq j\leq p}
    |\hat\Sigma_{jj}-\Sigma_{jj}|
    \leq b/2
    \}\), a first-order Taylor expansion of the map
\[
    a\mapsto a^{-2}(x^4-6ax^2)
\]
over the compact interval \([b/2,2B]\), together with Condition (A.2), yields
\[
\begin{aligned}
    \max_{1\le j\le d}
    \frac1n\sum_{i=1}^n
    |v_{ij}-w_{ij}|^2
    &\le
    C
    \bigg\{
    \max_{1\le j\le p}|\hat\Sigma_{jj}-\Sigma_{jj}|^2
    +
    |r_2-1|^2
    \bigg\}                                      \\
    &\quad \times
    \bigg[
    1+
    \frac1n\sum_{i=1}^n
    \max_{1\le j\le p}
    \big(|X_{ij}|^8+|X_{ij}|^4\big)
    \bigg].
\end{aligned}
\]
Let the right-hand side without its constant be \(D_n\), as in
Appendix~F. Conditionally on the data,
\(\mathring{\hat S}_n-\mathring S_n\) is centered Gaussian and every
coordinate variance is at most \(D_n\). The conditional Gaussian maximal
inequality and \eqref{eq:Dn_rate} therefore give
\begin{equation}
\label{eq:feasible_gaussian_max_rate}
\begin{aligned}
    \mathbb E_*
    \max_{1\leq j\leq d}
    |\mathring{\hat S}_{nj}-\mathring S_{nj}|
    &\leq
    C\sqrt{\log(2p)}D_n^{1/2}
    \\
    &=
    O_{\mathbb P}\bigg(
    \frac{\log^4(2p)}{\sqrt n}
    \bigg)
    +
    o\big(
    p^{-3/4}\log^{5/2}(2p)
    \big).
\end{aligned}
\end{equation}
Under \(\log p=o(n^{1/14})\), the last display is
\(o_{\mathbb P}\{\log^{-1/2}(2p)\}\). Since
\(0<\hat a_{n,p}\leq1\), the same bound applies to
\(\max_j|\hat S_{nj}-S_{nj}|\). Here \(\mathbb P_*\) and
\(\mathbb E_*\) denote probability and expectation with respect to the
bootstrap multipliers conditional on the data.
For every \(\varepsilon>0\),
\[
\begin{aligned}
&\sup_{x\in\mathbb R}
\bigg|
\mathbb P\Big(
\max_{1\leq j\leq d}|\hat S_{nj}|\leq x
\,\big|\,
\{X_i\}_{i=1}^n
\Big)
-
\mathbb P\Big(
\max_{1\leq j\leq d}|S_{nj}|\leq x
\,\big|\,
\{X_i\}_{i=1}^n
\Big)
\bigg|                                                   \\
&\quad\leq
\mathbb P\Big(
\max_{1\leq j\leq d}|\hat S_{nj}-S_{nj}|>\varepsilon
\,\big|\,
\{X_i\}_{i=1}^n
\Big)                                                   \\
&\qquad+
\sup_{x\in\mathbb R}
\mathbb P\Big(
 x-\varepsilon
 <
 \max_{1\leq j\leq d}|S_{nj}|
 \leq
 x+\varepsilon
\,\big|\,
\{X_i\}_{i=1}^n
\Big).
\end{aligned}
\]
By \eqref{eq:feasible_gaussian_max_rate}, there exists a deterministic
sequence \(\zeta_{n,p}\downarrow0\) such that
\(\zeta_{n,p}\sqrt{\log(2p)}\to0\) and, upon taking
\(\varepsilon=\zeta_{n,p}\), the first term is
\(o_{\mathbb P}(1)\). By Lemma~\ref{lemma_bootstrap_factor},
\(\hat a_{n,p}\to1\) in probability, so the conditional coordinate
variances of \(S_n\) remain uniformly bounded away from zero with probability
tending to one. The Gaussian anti-concentration inequality then makes the
second term \(o_{\mathbb P}(1)\), and hence
\begin{equation}
\label{eq:feasible_infeasible_comparison}
\sup_{x\in\mathbb R}
\bigg|
\mathbb P\Big(
\max_{1\leq j\leq d}|\hat S_{nj}|>x
\,\big|\,
\{X_i\}_{i=1}^n
\Big)
-
\mathbb P\Big(
\max_{1\leq j\leq d}|S_{nj}|>x
\,\big|\,
\{X_i\}_{i=1}^n
\Big)
\bigg|
=
 o_{\mathbb P}(1).
\end{equation}

Finally, Appendix B established that
\[
    \max_{1\leq j,k\leq p-1}
    |\breve\Sigma^*_{jk}-\breve\Sigma_{jk}|
    \log^2p
    \to0.
\]
The Gaussian comparison inequality implies
\[
\sup_{x\in\mathbb R}
\bigg|
\mathbb P\Big(
\max_{1\leq j\leq d}|Z_j^{\breve\Sigma^*}|>x
\Big)
-
\mathbb P\Big(
\max_{1\leq j\leq d}|Z_j^{\breve\Sigma}|>x
\Big)
\bigg|
=
 o(1).
\]
Combining this display with the first conclusion of Theorem~3.1 and \eqref{eq:feasible_infeasible_comparison} proves the second conclusion of Theorem~3.1. This completes the proof. \(\hfill\square\)

\subsection*{Proof of Corollary~3.1}

By Proposition~2.1,
\[
\begin{aligned}
&\sup_{x\in\mathbb R}
\bigg|
\mathbb P(T_n>x)
-
\mathbb P\Big(
\max_{1\leq j\leq p-1}|Z_j^{\breve\Sigma^*}|>x
\Big)
\bigg|                                                   \\
&\quad\leq
C\varepsilon_{n,p}.
\end{aligned}
\]
On the other hand, Theorem~3.1 gives, with probability at least \(1-\eta\),
\[
\begin{aligned}
&\sup_{x\in\mathbb R}
\bigg|
\mathbb P\Big(
\max_{1\leq j\leq p-1}|\mathring S_{nj}|>x
\,\big|\,
\{X_i\}_{i=1}^n
\Big)
-
\mathbb P\Big(
\max_{1\leq j\leq p-1}|Z_j^{\breve\Sigma^*}|>x
\Big)
\bigg|                                                   \\
&\quad\leq
C\Bigg[
\bigg\{
\frac{\log^9(2pn)\log^2(1/\eta)}{n}
\bigg\}^{1/6}
+
\bigg\{
\frac{\log^7(2pn)}{\eta^{1/2}n^{1/2}}
\bigg\}^{1/3}
\Bigg]
.
\end{aligned}
\]
The triangle inequality proves the second assertion of Corollary~3.1.
The feasible conclusion of Theorem~3.1 and the second conclusion of
Proposition~2.1 then yield
\[
\sup_{x\in\mathbb R}
\Big|
\mathbb P\big(T_n>x\big)
-
\mathbb P\Big(
\hat M_n>x
\,\big|\,
\{X_i\}_{i=1}^n
\Big)
\Big|
=
 o_{\mathbb P}(1),
\]
which proves the first assertion of Corollary~3.1.

It remains to justify the random critical value. Define
\[
    F_n(x)=\mathbb P(T_n\leq x),
    \qquad
    F_n^G(x)
    =
    \mathbb P\bigg(
    \max_{1\leq j\leq p-1}
    |Z_j^{\breve\Sigma}|
    \leq x
    \bigg),
\]
and
\[
    \hat F_n(x)
    =
    \mathbb P\big(
    \hat M_n\leq x
    \,\big|\,
    \{X_i\}_{i=1}^n
    \big).
\]
The preceding results imply
\[
    \rho_n
    :=
    \sup_x|F_n(x)-F_n^G(x)|
    \to0,
    \qquad
    \beta_n
    :=
    \sup_x|\hat F_n(x)-F_n^G(x)|
    =
    o_{\mathbb P}(1).
\]
Choose a deterministic sequence \(\gamma_n\downarrow0\) such that
\(\rho_n\leq\gamma_n\) for all sufficiently large \(n\) and
\(\mathbb P(\beta_n>\gamma_n)\to0\). Let
\[
    q_n^G(u)
    =
    \inf\big\{
    x:F_n^G(x)\geq u
    \big\}.
\]
Condition~(A.2) implies that the marginal variances of the Gaussian
contrasts are bounded away from zero. Hence \(F_n^G\) is continuous, and
on the event \(\{\beta_n\leq\gamma_n\}\),
\[
    q_n^G(1-\alpha-\gamma_n)
    \leq
    \hat c_{1-\alpha}
    \leq
    q_n^G(1-\alpha+\gamma_n).
\]
It follows that
\[
\begin{aligned}
    \alpha-\gamma_n-\rho_n
    -\mathbb P(\beta_n>\gamma_n)
    &\leq
    \mathbb P_{H_0}\big(
    T_n>\hat c_{1-\alpha}
    \big)
    \\
    &\leq
    \alpha+\gamma_n+\rho_n
    +\mathbb P(\beta_n>\gamma_n).
\end{aligned}
\]
Both bounds converge to \(\alpha\), and therefore
\[
    \mathbb P_{H_0}\big(
    T_n>\hat c_{1-\alpha}
    \big)
    \to\alpha.
\]
Thus the proposed bootstrap test has asymptotic size $\alpha$, which completes the proof. \(\hfill\square\)

\section*{Appendix D: Proof of Proposition~3.1}
\label{appendix_d}
Let
\[
    \mu_{2j}
    =
    \mathbb E(X_{1j}^2)
    =
    \Sigma_{jj},
    \qquad
    \mu_{4j}
    =
    \mathbb E(X_{1j}^4)
    =
    \kappa_j\mu_{2j}^2,
\]
and define their empirical counterparts by
\[
    \hat\mu_{2j}
    =
    \frac1n\sum_{i=1}^nX_{ij}^2,
    \qquad
    \hat\mu_{4j}
    =
    \frac1n\sum_{i=1}^nX_{ij}^4.
\]
Then
\[
    \hat\kappa_j
    =
    \frac{\hat\mu_{4j}}{\hat\mu_{2j}^2},
    \qquad
    \kappa_j
    =
    \frac{\mu_{4j}}{\mu_{2j}^2}.
\]

We divide the proof into two steps. In the first step, we control the empirical second and fourth moments. Applying the maximal inequality in \eqref{eq:maximal_sample_mean} to
\[
    Z_{ij}^{(4)}
    =
    X_{ij}^4-\mathbb E(X_{1j}^4),
    \qquad
    1\leq j\leq p,
\]
gives
\[
\begin{aligned}
\mathbb E \Big[
\max_{1\leq j\leq p}
|\hat\mu_{4j}-\mu_{4j}|  \Big]
&\leq
C
\sqrt{\frac{\log(2p)}{n}}
\bigg\{
\mathbb E  \Big[ 
\max_{1\leq j\leq p}
\big|
X_{1j}^4-\mathbb E(X_{1j}^4)
\big|^2 \Big]
\bigg\}^{1/2}.
\end{aligned}
\]
By Jensen's inequality,
\[
    \mathbb E
    \Big[ \max_{1\leq j\leq p}
    X_{1j}^8 \Big]
    \leq
    \bigg\{
    \mathbb E
    \Big[ \max_{1\leq j\leq p}
    X_{1j}^{16} \Big]
    \bigg\}^{1/2}
    =
    M_{16,n,p}^{1/2},
\]
and
\[
    \max_{1\leq j\leq p}
    \big\{
    \mathbb E(X_{1j}^4)
    \big\}^2
    \leq
    M_{16,n,p}^{1/2}.
\]
Consequently,
\begin{equation}
\label{eq:power_fourth_moment_rate}
    \max_{1\leq j\leq p}
    |\hat\mu_{4j}-\mu_{4j}|
    =
    O_{\mathbb P}\bigg(
    M_{16,n,p}^{1/4}
    \sqrt{\frac{\log p}{n}}
    \bigg).
\end{equation}
Similarly, applying \eqref{eq:maximal_sample_mean} to
\[
    Z_{ij}^{(2)}
    =
    X_{ij}^2-\mathbb E(X_{1j}^2), \quad 1\leq j \leq p,
\]
and using \(
    \mathbb E
    (\max_{1\leq j\leq p}
    X_{1j}^4)
    \leq
    M_{16,n,p}^{1/4}
\), we obtain
\begin{equation}
\label{eq:power_second_moment_rate}
    \max_{1\leq j\leq p}
    |\hat\mu_{2j}-\mu_{2j}|
    =
    O_{\mathbb P}\bigg(
    M_{16,n,p}^{1/8}
    \sqrt{\frac{\log p}{n}}
    \bigg).
\end{equation}
Condition~(C.1) includes \(\kappa_j\leq B\), so we have $\Delta\leq B$. Condition~(C.3) therefore implies
\[
    M_{16,n,p}^{1/4}
    \sqrt{\frac{\log p}{n}}
    =
    o(1).
\]
Since \(\Sigma_{jj}\geq b\), Jensen's inequality gives
\[
    M_{16,n,p}
    \geq
    \max_{1\leq j\leq p}
    \mathbb E(X_{1j}^{16})
    \geq
    b^8.
\]
It follows that \( \sqrt{\frac{\log p}{n}} = o(1)\) and hence
\[
\begin{aligned}
    M_{16,n,p}^{1/8}
    \sqrt{\frac{\log p}{n}}
    &=
    \bigg\{
    M_{16,n,p}^{1/4}
    \sqrt{\frac{\log p}{n}}
    \bigg\}^{1/2}
    \bigg(
    \frac{\log p}{n}
    \bigg)^{1/4} = o(1).
\end{aligned}
\]
Thus, by \eqref{eq:power_second_moment_rate}, the event
\[
    \mathcal E_n
    =
    \bigg\{
    \max_{1\leq j\leq p}
    |\hat\mu_{2j}-\mu_{2j}|
    \leq
    \frac b2
    \bigg\}
\]
satisfies
\(    \mathbb P(\mathcal E_n)\to1
\). On \(\mathcal E_n\), we notice
\(
    \hat\mu_{2j}
    \geq
    \frac{\mu_{2j}}2
    \geq
    \frac b2
\), and
\[
\begin{aligned}
|\hat\kappa_j-\kappa_j|
&\leq
\frac{|\hat\mu_{4j}-\mu_{4j}|}{\hat\mu_{2j}^2}
+
\mu_{4j}
\bigg|
\frac1{\hat\mu_{2j}^2}
-
\frac1{\mu_{2j}^2}
\bigg|                                                \\
&\leq
\frac4{b^2}
|\hat\mu_{4j}-\mu_{4j}|
+
\kappa_j
\frac{
|\hat\mu_{2j}-\mu_{2j}|
\big(\hat\mu_{2j}+\mu_{2j}\big)
}{
\hat\mu_{2j}^2
}                                                     \\
&\leq
C
\big\{
|\hat\mu_{4j}-\mu_{4j}|
+
|\hat\mu_{2j}-\mu_{2j}|
\big\},
\end{aligned}
\]
where \(C>0\) depends only on \(b\) and \(B\). Combining
\eqref{eq:power_fourth_moment_rate} and
\eqref{eq:power_second_moment_rate} yields
\begin{equation}
\label{eq:power_kurtosis_rate}
    \sqrt n
    \max_{1\leq j\leq p}
    |\hat\kappa_j-\kappa_j|
    =
    O_{\mathbb P}\big(
    M_{16,n,p}^{1/4}\sqrt{\log p}
    \big).
\end{equation}
It follows that
\begin{equation}
\label{eq:power_adjacent_rate}
\begin{aligned}
&\sqrt n
\max_{1\leq j\leq p-1}
\big|
\hat\kappa_{j+1}-\hat\kappa_j
-
\big(\kappa_{j+1}-\kappa_j\big)
\big|                                                 \\
&\quad\leq
2\sqrt n
\max_{1\leq j\leq p}
|\hat\kappa_j-\kappa_j|                              \\
&\quad=
O_{\mathbb P}\big(
M_{16,n,p}^{1/4}\sqrt{\log p}
\big).
\end{aligned}
\end{equation}
In the second step, we establish upper bound for the bootstrap critical value. Define the multiplier bootstrap variable without the factor \(\hat a_{n,p}\) by
\[
    \mathring{\hat S}_n
    =
    \frac1{\sqrt n}
    \sum_{i=1}^n
    \big(v_i-\bar v\big)e_i,
\]
and let \(\mathring c_{1-\alpha}\) denote the conditional
\((1-\alpha)\)-quantile of
\(\|\mathring{\hat S}_n\|_\infty\). By the definition of the proposed
bootstrap statistic, it satisfies  \(
    \hat S_n
    =
    \hat a_{n,p}\mathring{\hat S}_n
\). By the definition of $\hat a_{n,p}$, we know \[
    0<\hat a_{n,p}\leq 1.
\]
Consequently,
\(
    \|\hat S_n\|_\infty
    \leq
    \|\mathring{\hat S}_n\|_\infty
\) conditionally on the observations, and hence
\begin{equation}
\label{eq:power_quantile_comparison}
    \hat c_{1-\alpha}
    \leq
    \mathring c_{1-\alpha}.
\end{equation}
In particular, no convergence of \(\hat a_{n,p}\) to one is required
under \(H_1\). Recall that, conditionally on the observations,
\[
    \mathring{\hat S}_n
    \sim
    N_{p-1}(0,\hat\Omega).
\]
On the event \(\mathcal E_n\), all \(\hat\Sigma_{jj}\)'s are bounded
away from zero. It follows from the definition of \(v_{ij}\) that
\[
    |v_{ij}|
    \leq
    C
    \big(
    1+|X_{ij}|^4+|X_{i,j+1}|^4
    \big),
\]
where \(C>0\) depends only on \(b\). Therefore,
\[
    \max_{1\leq j\leq p-1}
    v_{ij}^2
    \leq
    C
    \Big(
    1+
    \max_{1\leq k\leq p}|X_{ik}|^8
    \Big).
\]
Since the empirical variance is bounded by the empirical second
moment, we obtain
\[
\begin{aligned}
    \max_{1\leq j\leq p-1}\hat\Omega_{jj}
    &\leq
    \frac1n\sum_{i=1}^n
    \max_{1\leq j\leq p-1}v_{ij}^2                         
    \leq
    \frac Cn\sum_{i=1}^n
    \bigg(
    1+
    \max_{1\leq k\leq p}|X_{ik}|^8
    \bigg).
\end{aligned}
\]
Since \(M_{16,n,p}\geq b^8\) and \(
    \mathbb E (
    \max_{1\leq k\leq p}|X_{1k}|^8 )
    \leq
    M_{16,n,p}^{1/2}.
\), Markov's inequality gives
\begin{equation}
\label{eq:power_bootstrap_variance}
    \max_{1\leq j\leq p-1}\hat\Omega_{jj}
    =
    O_{\mathbb P}\big(
    M_{16,n,p}^{1/2}
    \big).
\end{equation}
Put \(
    \hat\sigma_{\max}^2
    =
    \max_{1\leq j\leq p-1}\hat\Omega_{jj}
\). Conditionally on the data, the Gaussian union bound yields
\[
\begin{aligned}
\mathbb P\Big(
\|\mathring{\hat S}_n\|_\infty>t
\,\big|\,
\{X_i\}_{i=1}^n
\Big)
&\leq
2(p-1)
\exp\bigg(
-\frac{t^2}{2\hat\sigma_{\max}^2}
\bigg).
\end{aligned}
\]
Therefore,
\[
    \mathring c_{1-\alpha}
    \leq
    \hat\sigma_{\max}
    \bigg\{
    2\log\bigg(
    \frac{2(p-1)}{\alpha}
    \bigg)
    \bigg\}^{1/2}.
\]
Combining this bound with \eqref{eq:power_quantile_comparison} and
\eqref{eq:power_bootstrap_variance}, we obtain
\begin{equation}
\label{eq:power_critical_value}
    \hat c_{1-\alpha}
    =
    O_{\mathbb P}\big(
    M_{16,n,p}^{1/4}\sqrt{\log p}
    \big).
\end{equation}
Finally, by the definition of \(\Delta\),
\[
\begin{aligned}
T_n
&=
\sqrt n
\max_{1\leq j\leq p-1}
|\hat\kappa_{j+1}-\hat\kappa_j|                         \\
&\geq
\sqrt n\,\Delta
-
\sqrt n
\max_{1\leq j\leq p-1}
\big|
\hat\kappa_{j+1}-\hat\kappa_j
-
\big(\kappa_{j+1}-\kappa_j\big)
\big|.
\end{aligned}
\]
It follows from \eqref{eq:power_adjacent_rate} that
\[
    T_n
    \geq
    \sqrt n\,\Delta
    -
    O_{\mathbb P}\big(
    M_{16,n,p}^{1/4}\sqrt{\log p}
    \big).
\]
Together with \eqref{eq:power_critical_value}, this gives
\[
\begin{aligned}
T_n-\hat c_{1-\alpha}
&\geq
\sqrt n\,\Delta
-
O_{\mathbb P}\big(
M_{16,n,p}^{1/4}\sqrt{\log p}
\big)                                                 =
\sqrt n\,\Delta
\big\{
1-o_{\mathbb P}(1)
\big\},
\end{aligned}
\]
where the last equality follows from Condition~(C.3). Consequently,
\[
    \mathbb P_{H_1}\big(
    T_n>\hat c_{1-\alpha}
    \big)
    \to 1.
\]
This completes the proof. \(\hfill\square\)

\section*{Appendix E: Proof of Lemma~\ref{lemma_elliptical_max_expect}}
\label{appendix_e}

We are about to prove a maximal moment inequality for elliptical random vectors. The proof is written for the case needed in the main text, namely \(2k=16\), the argument for a general fixed \(k\) is identical after replacing \(16\) by \(2k\). Since \(x^{16}=|x|^{16}\),
\[
\max_{1\le j\le p} X_{1j}^{16}=\Big(\max_{1\le j\le p}|X_{1j}|\Big)^{16}.
\]
Write \(A:=\Sigma^{1/2}\), and let \(a_j^\top\) be the \(j\)th row of \(A\). Then
\[
X_{1j}=\xi_1\,(Au_1)_j=\xi_1\,\langle a_j,u_1\rangle,
\qquad
\|a_j\|^2=(AA^\top)_{jj}=\Sigma_{jj}\le C_{\Sigma},
\]
where \(C_{\Sigma}=\max_{1\le j\le p}\Sigma_{jj}=O(1)\). Therefore,
\[
\max_{1\le j\le p}|X_{1j}|
=|\xi_1|\cdot \max_{1\le j\le p}|(Au_1)_j|.
\]
Let
\[
W:=\max_{1\le j\le p}|(Au_1)_j|.
\]
By independence of \(\xi_1\) and \(u_1\),
\[
\mathbb E\Big[\max_{1\le j\le p} X_{1j}^{16}\Big]
=\mathbb E\big[|\xi_1|^{16} W^{16}\big]
=\mathbb E|\xi_1|^{16}\cdot \mathbb E[W^{16}].
\]

\medskip
\noindent\emph{Step 1: tail bound for \(W\).}
For each fixed \(j\), the spherical inner-product tail bound yields, for all \(t\ge 0\),
\[
\mathbb P\big(|(Au_1)_j|\ge t\big)
=\mathbb P\big(|\langle a_j,u_1\rangle|\ge t\big)
\le 2\exp\!\Big(-\frac{p\,t^2}{2\|a_j\|^2}\Big)
\le 2\exp\!\Big(-\frac{p\,t^2}{2C_{\Sigma}}\Big).
\]
By the union bound,
\[
\mathbb P(W\ge t)
=\mathbb P\Big(\bigcup_{j=1}^p\{|(Au_1)_j|\ge t\}\Big)
\le \sum_{j=1}^p \mathbb P\big(|(Au_1)_j|\ge t\big)
\le 2p\exp\!\Big(-\frac{p\,t^2}{2C_{\Sigma}}\Big).
\]
Hence,
\[
\mathbb P(W\ge t)\le \min\Big\{1,\ 2p\exp\!\Big(-\frac{p\,t^2}{2C_{\Sigma}}\Big)\Big\},
\qquad t\ge 0.
\]

\medskip
\noindent\emph{Step 2: integrate the tail to bound \(\mathbb E[W^{16}]\).}
Using \(\mathbb E[W^{16}]=16\int_0^\infty t^{15}\mathbb P(W\ge t)\,dt\), we obtain
\[
\mathbb E[W^{16}]
\le 16\int_0^\infty t^{15}\min\Big\{1,\ 2p e^{-c t^2}\Big\}\,dt,
\qquad c:=\frac{p}{2C_{\Sigma}}.
\]
Let \(t_0\) be the unique solution to \(2p e^{-c t_0^2}=1\), so that
\[
t_0=\sqrt{\frac{\log(2p)}{c}}.
\]
Then \(\min\{1,2p e^{-c t^2}\}=1\) on \([0,t_0]\) and equals \(2p e^{-c t^2}\) on \([t_0,\infty)\). Hence
\[
\mathbb E[W^{16}]
\le 16\int_0^{t_0} t^{15}\,dt + 16\int_{t_0}^\infty t^{15}\cdot 2p e^{-c t^2}\,dt
= t_0^{16} + 32p\int_{t_0}^\infty t^{15}e^{-c t^2}\,dt.
\]
For the tail integral, substitute \(y=c t^2\), so that
\(t=\sqrt{y/c}\) and \(dt=(2\sqrt{cy})^{-1}dy\). This gives
\[
\int_{t_0}^\infty t^{15}e^{-c t^2}\,dt
=\frac{1}{2c^8}\int_{c t_0^2}^\infty y^7 e^{-y}\,dy
=\frac{1}{2c^8}\Gamma(8, c t_0^2).
\]
Since \(c t_0^2=\log(2p)\) and \(\Gamma(8,x)=7!\,e^{-x}\sum_{k=0}^7 x^k/k!\), we obtain
\[
32p\int_{t_0}^\infty t^{15}e^{-c t^2}\,dt
=32p\cdot \frac{1}{2c^8}\cdot 7!\,e^{-\log(2p)}\sum_{k=0}^7 \frac{(\log(2p))^k}{k!}
=\frac{40320}{c^8}\sum_{k=0}^7 \frac{(\log(2p))^k}{k!}.
\]
Moreover, \(t_0^{16}=(\log(2p)/c)^8=c^{-8}(\log(2p))^8\). Combining the last two displays yields
\[
\mathbb E[W^{16}]
\le \frac{1}{c^8}\left[(\log(2p))^8+40320\sum_{k=0}^7\frac{(\log(2p))^k}{k!}\right].
\]
Since \(c=p/(2C_{\Sigma})\), we have \(c^{-8}=(2C_{\Sigma}/p)^8\), which proves
\[
\mathbb E[W^{16}]
\le \Big(\frac{2C_{\Sigma}}{p}\Big)^8
\left[(\log(2p))^8+40320\sum_{k=0}^7\frac{(\log(2p))^k}{k!}\right].
\]
Finally,
\[
\mathbb E\Big[\max_{1\le j\le p} X_{1j}^{16}\Big]
=\mathbb E|\xi_1|^{16}\cdot \mathbb E[W^{16}]
\le \mathbb E|\xi_1|^{16}\cdot
\Big(\frac{2C_{\Sigma}}{p}\Big)^8
\left[(\log(2p))^8+40320\sum_{k=0}^7\frac{(\log(2p))^k}{k!}\right].
\]
Using \(\mathbb E|\xi_1|^{16} =  O (p^8)\), we conclude that
\[
    \mathbb E\Big[\max_{1\le j\le p} X_{1j}^{16}\Big]=O(\log^8 p).
\]
This completes the proof. \(\hfill \square\)

\section*{Appendix F: Proof of Lemma~\ref{lemma_bootstrap_factor}}
\label{appendix_f}
Let \(d=p-1\), and recall the definitions
\[
    \widetilde\Omega
    =
    \frac1n\sum_{i=1}^n
    \big(w_i-\bar w\big)\big(w_i-\bar w\big)^\top,
    \qquad
    \hat\Omega
    =
    \frac1n\sum_{i=1}^n
    \big(v_i-\bar v\big)\big(v_i-\bar v\big)^\top.
\]
We first record a maximal inequality that will be used repeatedly. Let \(Z_1,\ldots,Z_n\) be i.i.d. mean-zero random vectors in \(\mathbb R^m\). By symmetrization and the conditional Rademacher maximal inequality,
\begin{equation}
\label{eq:maximal_sample_mean}
    \mathbb E \Bigg[
    \max_{1\leq j\leq m}
    \bigg|
    \frac1n\sum_{i=1}^n Z_{ij}
    \bigg|  \Bigg]
    \leq
    C
    \sqrt{\frac{\log(2m)}{n}}
    \bigg\{
    \mathbb E \bigg[
    \max_{1\leq j\leq m}Z_{1j}^2 \bigg]
    \bigg\}^{1/2}.
\end{equation}
Indeed, if \(\varepsilon_1,\ldots,\varepsilon_n\) are independent Rademacher variables, then
\[
\begin{aligned}
    \mathbb E \Bigg[ 
    \max_{1\leq j\leq m}
    \bigg|
    \frac1n\sum_{i=1}^n Z_{ij}
    \bigg| \Bigg]
    &\leq
    \frac2n
    \mathbb E \bigg\{ 
    \mathbb E_{\varepsilon}
    \max_{1\leq j\leq m}
    \bigg|
    \sum_{i=1}^n\varepsilon_iZ_{ij}
    \bigg|  \bigg\}                                                  \\
    &\leq
    \frac{C\sqrt{\log(2m)}}{n}
    \mathbb E \bigg[ 
    \max_{1\leq j\leq m}
    \bigg(
    \sum_{i=1}^n Z_{ij}^2
    \bigg)^{1/2}   \bigg]                                          \\
    &\leq
    C
    \sqrt{\frac{\log(2m)}{n}}
    \bigg\{
    \mathbb E
    \bigg[\max_{1\leq j\leq m}Z_{1j}^2 \bigg]
    \bigg\}^{1/2}.
\end{aligned}
\]

Recall that each \(w_{ij}\) is a linear combination of normalized fourth- and second-power terms. By Condition~(A.2), Lemma~\ref{lemma_elliptical_max_expect}, and \(\mathbb E\xi_1^{2k}=O(p^k)\) for \(k\leq8\),
\begin{equation}
\label{eq:w_max_moments}
    \mathbb E \Big[
    \max_{1\leq j\leq d}
    |w_{ij}-\mathbb Ew_{ij}|^2 \Big]
    =
    O(\log^4p),
    \qquad
    \mathbb E \Big[
    \max_{1\leq j\leq d}
    |w_{ij}-\mathbb Ew_{ij}|^4 \Big]
    =
    O(\log^8p).
\end{equation}
Apply \eqref{eq:maximal_sample_mean} to
\[
    Z_{ij}
    =
    \big(w_{ij}-\mathbb Ew_{ij}\big)^2
    -
    \mathbb E\big(w_{ij}-\mathbb Ew_{ij}\big)^2,
\]
together with \eqref{eq:w_max_moments}, this gives
\begin{equation}
\label{eq:infeasible_variance_rate}
\begin{aligned}
    \max_{1\leq j\leq d}
    \big|
    \widetilde\Omega_{jj}-\breve\Sigma^*_{jj}
    \big|
    &\leq
    \max_{1\leq j\leq d}
    \bigg|
    \frac1n\sum_{i=1}^n
    \bigg[
    \big(w_{ij}-\mathbb Ew_{ij}\big)^2
    -
    \mathbb E\big(w_{ij}-\mathbb Ew_{ij}\big)^2
    \bigg]
    \bigg|                                                   \\
    &\quad+
    \max_{1\leq j\leq d}
    |\bar w_j-\mathbb Ew_{ij}|^2                            \\
    &=
    O_{\mathbb P}\bigg(
    \frac{\log^{9/2}p}{\sqrt n}
    \bigg),
\end{aligned}
\end{equation}
where $\bar w_j = n^{-1}\sum_{i=1}^n w_{ij}$. Here the sample-mean term is
\(O_{\mathbb P}(\log^5p/n)\), and is absorbed by the displayed rate under \(\log p=o(n^{1/14})\). We next compare the feasible and infeasible influence vectors,  $v_i$ and $w_i$. Put
\[
    D_n
    =
    \max_{1\leq j\leq d}
    \frac1n\sum_{i=1}^n|v_{ij}-w_{ij}|^2,
    \qquad
    \Delta_{\Sigma,n}
    =
    \max_{1\leq j\leq p}
    |\hat\Sigma_{jj}-\Sigma_{jj}|.
\]
On the event \(\{\Delta_{\Sigma,n}\leq b/2\}\), $D_n$ satisfies
\begin{equation}
\label{eq:plug_in_L2}
\begin{aligned}
    D_n
    &\leq
    C\big(
    \Delta_{\Sigma,n}^2+|r_2-1|^2
    \big)                                             
    \bigg[
    1+
    \frac1n\sum_{i=1}^n
    \max_{1\leq j\leq p}
    \big(
    |X_{ij}|^8+|X_{ij}|^4
    \big)
    \bigg].
\end{aligned}
\end{equation}
Applying \eqref{eq:maximal_sample_mean} to \(X_{ij}^2-\mathbb EX_{ij}^2\), and using Lemma~\ref{lemma_elliptical_max_expect}, yields
\begin{equation}
\label{eq:sample_variance_rate}
    \Delta_{\Sigma,n}
    =
    O_{\mathbb P}\bigg(
    \frac{\log^{3/2}p}{\sqrt n}
    \bigg),
    \qquad
    \frac1n\sum_{i=1}^n
    \max_{1\leq j\leq p}
    \big(
    |X_{ij}|^8+|X_{ij}|^4
    \big)
    =
    O_{\mathbb P}(\log^4p).
\end{equation}
By Lemma~\ref{lemma_1}, \(|r_2-1|=o(p^{-3/4})\). It follows from \eqref{eq:plug_in_L2}--\eqref{eq:sample_variance_rate} that
\begin{equation}
\label{eq:Dn_rate}
    D_n^{1/2}
    =
    O_{\mathbb P}\bigg(
    \frac{\log^{7/2}p}{\sqrt n}
    \bigg)
    +
    o\big(p^{-3/4}\log^2p\big).
\end{equation}
By the Cauchy--Schwarz inequality,
\[
\begin{aligned}
    \max_{1\leq j\leq d}
    |\hat\Omega_{jj}-\widetilde\Omega_{jj}|
    &\leq
    D_n
    +
    2D_n^{1/2}
    \max_{1\leq j\leq d}
    \widetilde\Omega_{jj}^{1/2}.
\end{aligned}
\]
The covariance calculations in Appendix B imply \(\max_j\breve\Sigma^*_{jj}=O(1)\), and hence \eqref{eq:infeasible_variance_rate} gives \(\max_j\widetilde\Omega_{jj}=O_{\mathbb P}(1)\). Therefore,
\begin{equation}
\label{eq:feasible_variance_rate}
    \max_{1\leq j\leq d}
    |\hat\Omega_{jj}-\breve\Sigma^*_{jj}|
    =
    O_{\mathbb P}\bigg(
    \frac{\log^{9/2}p}{\sqrt n}
    \bigg)
    +
    o\big(p^{-3/4}\log^2p\big).
\end{equation}
Recall 
\[
    \hat r_{j,j+1}
    =
    \frac{\hat \Sigma_{j,j+1}}
    {\big(\hat \Sigma_{jj} \hat\Sigma_{j+1,j+1}\big)^{1/2}},
    \qquad
    \omega_j
    = \breve{\Sigma}_{jj} = 
    48\big(1-r_{j,j+1}^4\big), \qquad \hat\omega_j
    =
    48\big(1- \hat r_{j,j+1}^4\big),
\]
Applying \eqref{eq:maximal_sample_mean} to the adjacent products
\(X_{ij}X_{i,j+1}-\mathbb E(X_{ij}X_{i,j+1})\), together with \eqref{eq:sample_variance_rate} and Condition~(A.2), gives
\begin{equation}
\label{eq:correlation_rate}
    \max_{1\leq j\leq d}
    |\hat r_{j,j+1}-r_{j,j+1}|
    =
    O_{\mathbb P}\bigg(
    \frac{\log^{3/2}p}{\sqrt n}
    \bigg),
\end{equation}
and therefore
\begin{equation}
\label{eq:omega_rate}
    \max_{1\leq j\leq d}
    |\hat\omega_j-\omega_j|
    =
    O_{\mathbb P}\bigg(
    \frac{\log^{3/2}p}{\sqrt n}
    \bigg).
\end{equation}
The covariance calculation in Proposition~2.1, together with Lemma~\ref{lemma_1}, also yields
\begin{equation}
\label{eq:population_variance_limit}
  \max_{1\leq j\leq d}
    |\breve\Sigma^*_{jj}- \omega_j |=  \max_{1\leq j\leq d}
    |\breve\Sigma^*_{jj}- \breve{\Sigma}_{jj}|
    =
    o(p^{-3/4}).
\end{equation}
Combining \eqref{eq:feasible_variance_rate}--\eqref{eq:population_variance_limit}, we obtain
\begin{equation}
\label{eq:delta_rate}
\begin{aligned}
    d_{n,p}
    &:=%
    \max_{1\leq j\leq d}
    |\hat\Omega_{jj}-\hat\omega_j|                         \\
    &=
    O_{\mathbb P}\bigg(
    \frac{\log^{9/2}p}{\sqrt n}
    \bigg)
    +
    o\big(p^{-3/4}\log^2p\big).
\end{aligned}
\end{equation}
Under \(\log p=o(n^{1/14})\),
\[
    d_{n,p}\log^2p
    =
    O_{\mathbb P}\bigg(
    \frac{\log^{13/2}p}{\sqrt n}
    \bigg)
    +
    o\big(p^{-3/4}\log^4p\big)
    =
    o_{\mathbb P}(1).
\]

Notice that \eqref{nondegenerate_contrasts} and \eqref{eq:population_variance_limit} imply that \(\inf_j\omega_j\) is bounded away from zero for all sufficiently large \(n\). By \eqref{eq:omega_rate}, the same is true for \(\inf_j\hat\omega_j\) with probability tending to one. Consequently,
\[
    \vartheta_{n,p}
    :=
    \max_{1\leq j\leq d}|\hat q_j-1|
    \leq
    C \cdot d_{n,p},
    \qquad
    \vartheta_{n,p}\log^2p=o_{\mathbb P}(1).
\]
On the event \(\{\vartheta_{n,p}<1/2\}\), both \(\max_j\hat q_j\) and \(\operatorname{med}_j\hat q_j\) belong to \([1-\vartheta_{n,p},1+\vartheta_{n,p}]\). Hence,
\[
    \bigg|
    \frac{\max_j\hat q_j}
    {\operatorname{med}_j\hat q_j}
    -1
    \bigg|
    \leq
    C\vartheta_{n,p}.
\]
Since the map \(x\mapsto x^{-1/8}\) is continuously differentiable in a neighborhood of one,
\[
    |\hat a_{n,p}-1|
    \leq
    C\vartheta_{n,p},
\]
which proves
\[
    |\hat a_{n,p}-1|\log^2p=o_{\mathbb P}(1).
\]
Finally, \(\mathbb P(\mathcal G_{n,p})\to1\). On
\(\mathcal G_{n,p}\), the displayed ratio is at least one, while on
\(\mathcal G_{n,p}^c\) we set \(\hat a_{n,p}=1\). Therefore
\(0<\hat a_{n,p}\leq1\) for every sample. This completes the proof.
\(\hfill\square\)

\section*{Appendix G: Gaussian approximation for \(T_n^*\) under \(H_0\)}
\label{appendix_g}
\setcounter{equation}{0}
\renewcommand{\theequation}{G.\arabic{equation}}

This appendix extends the argument of Proposition~2.1 to the symmetric
statistic \(T_n^*\) under \(H_0\). Let
\[
    \widehat K
    =
    \big(
    \widehat\kappa_1,\ldots,\widehat\kappa_p
    \big)^\top,
\]
and consider
\[
    T_n^*
    =
    \sqrt n
    \max_{1\leq j<l\leq p}
    \big|
    \widehat\kappa_j-\widehat\kappa_l
    \big|.
\]
Let
\[
    \mathcal I_p
    =
    \big\{
    (j,l):1\leq j<l\leq p
    \big\},
    \qquad
    q_p=\frac{p(p-1)}{2}.
\]
Define the \(q_p\times p\) pairwise-difference matrix \(A_p\), indexed by
\((j,l)\in\mathcal I_p\), through
\[
    (A_p)_{(j,l),\cdot}=e_j^\top-e_l^\top.
\]
Under \(H_0\), all population marginal kurtoses equal \(3r_2\), and hence
\begin{equation}
    T_n^*
    =
    \bigg\|
    A_p\sqrt n
    \big(
    \widehat K
    -
    3r_2\mathbf 1_p
    \big)
    \bigg\|_\infty.
\end{equation}
The statistic also has the equivalent \(O(p)\) representation
\begin{equation}
    T_n^*
    =
    \sqrt n
    \bigg(
    \max_{1\leq j\leq p}\widehat\kappa_j
    -
    \min_{1\leq j\leq p}\widehat\kappa_j
    \bigg).
\end{equation}

Recall the finite-\(p\) influence covariance \(\tilde\Sigma^*\) from
Theorem~2.1 and set
\[
    \Gamma_*^*
    =
    A_p\tilde\Sigma^*A_p^\top,
\]
and
\[
    \Gamma_*
    =
    24A_pR^{\odot4}A_p^\top.
\]
For \((j,l),(k,m)\in\mathcal I_p\),
\begin{equation}
    \big(\Gamma_*^*\big)_{(j,l),(k,m)}
    =
    \tilde\Sigma^*_{jk}
    -
    \tilde\Sigma^*_{jm}
    -
    \tilde\Sigma^*_{lk}
    +
    \tilde\Sigma^*_{lm},
\end{equation}
and
\[
    \big(\Gamma_*\big)_{(j,l),(k,m)}
    =
    24\big(
    r_{jk}^4-r_{jm}^4-r_{lk}^4+r_{lm}^4
    \big).
\]
The transformed Gaussian coordinates must be nondegenerate. We therefore
assume that, for a constant \(b_*>0\),
\begin{equation}
\label{eq:symmetric_nondegenerate}
    \inf_{1\leq j<l\leq p}
    \big(
    \tilde\Sigma^*_{jj}
    +
    \tilde\Sigma^*_{ll}
    -
    2\tilde\Sigma^*_{jl}
    \big)
    \geq b_*.
\end{equation}
A convenient sufficient condition is
\begin{equation}
    \max_{1\leq j<l\leq p}|r_{jl}|
    \leq r_0<1.
\end{equation}
Indeed, the limiting pairwise variances are
\(48(1-r_{jl}^4)\). Lemma~\ref{lemma_1} transfers this lower bound to
the finite-\(p\) covariance matrix for all sufficiently large \(p\). Let
\[
    \varepsilon_{n,p} = \bigg\{
    \frac{\log^{11} (2pn) } {n}
    \bigg\}^{1/6} + \bigg\{
    \frac{\log^7 (2pn) }{\sqrt n}
    \bigg\}^{1/3}.
\]
\begin{prop}
\label{prop_symmetric_statistic}
Under \(H_0\), suppose that Conditions~(A.1)--(A.3) and
\eqref{eq:symmetric_nondegenerate} hold and that \(n,p\to\infty\). Then
\begin{equation}
\label{eq:symmetric_finite}
\sup_{x\in\mathbb R}
\bigg|
\mathbb P\big(T_n^*>x\big)
-
\mathbb P\bigg(
\max_{1\leq a\leq q_p}
\big|Z_a^{\Gamma_*^*}\big|>x
\bigg)
\bigg|
\leq C\varepsilon_{n,p}.
\end{equation}
If \(\log p=o(n^{1/14})\), then
\begin{equation}
\label{eq:symmetric_limit}
\sup_{x\in\mathbb R}
\bigg|
\mathbb P\big(T_n^*>x\big)
-
\mathbb P\bigg(
\max_{1\leq a\leq q_p}
\big|Z_a^{\Gamma_*}\big|>x
\bigg)
\bigg|
\to0.
\end{equation}
Here \(Z^{\Gamma_*^*}\) and \(Z^{\Gamma_*}\) are centered Gaussian
vectors with the indicated covariance matrices.
\end{prop}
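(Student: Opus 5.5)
The plan is to repeat the argument of Appendix~B with the first-difference matrix $D$ replaced by the pairwise-difference matrix $A_p$. The only new feature is that the index set now has $q_p\le p^2$ elements instead of $p-1$. Since $\log(2q_pn)\le 2\log(2pn)$, every logarithmic factor changes by at most a constant, and the bound $C\varepsilon_{n,p}$ is unaffected.

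\emph{Decomposition.} By \eqref{decomp_first}, under $H_0$ we have $\widehat\kappa_j-\widehat\kappa_l=W_j/\Sigma_{jj}^2-W_l/\Sigma_{ll}^2+R_{n,j}-R_{n,l}$. The remainder is handled uniformly over pairs by
\[
\sqrt n\max_{(j,l)\in\mathcal I_p}|R_{n,j}-R_{n,l}|\le 2\mathcal E_{n,p},
\]
which needs no union over $q_p$ indices. Hence \eqref{eq:uniform_linearization_rate} gives $\mathcal E_{n,p}\le CL_p n^{-1/6}$ outside an event of probability at most $C\eta_{n,p}$, and $\mathcal E_{n,p}=o_{\mathbb P}(L_p^{-1/2})$ when $\log p=o(n^{1/14})$.

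\emph{Gaussian approximation for the linear part.} The leading term is $n^{-1/2}\sum_i A_p(\tilde w_i-\mathbb E\tilde w_i)$, where $\tilde w_{ij}=w_{ij}/\Sigma_{jj}^2$, and its covariance is $\Gamma_*^*=A_p\tilde\Sigma^*A_p^\top$. I would apply Proposition~2.1 of \cite{victor2017clt} to the augmented $2q_p$-dimensional vectors $(A_p\tilde w_i,-A_p\tilde w_i)$, checking its conditions as follows.
\begin{itemize}
\item (M.1) is exactly the assumption \eqref{eq:symmetric_nondegenerate}.
\item (M.2) holds because each coordinate is a difference of two $w$'s, so its moments are $O(1)$ as in Appendix~A.
\item (E.2) with $q=4$ follows from $\max_{(j,l)}|\tilde w_{ij}-\tilde w_{il}|\le 2\max_j|\tilde w_{ij}|$, Lemma~\ref{lemma_elliptical_max_expect} with $2k=16$, and the choice $B_n=C\log^2p$.
\end{itemize}
The resulting bound uses $\log(2q_pn)\lesssim\log(2pn)$. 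Combining it with Gaussian anti-concentration for the maximum of $2q_p$ coordinates with variances bounded below, whose constant is $C t\sqrt{\log(2q_p)}\lesssim Ct\sqrt{L_p}$, reproduces the error terms of Appendix~A. This proves \eqref{eq:symmetric_finite}.

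\emph{Passing to $\Gamma_*$.} Each entry of $\Gamma_*^*-\Gamma_*$ is a signed sum of four entries of $\tilde\Sigma^*-\tilde\Sigma$, so $\|\Gamma_*^*-\Gamma_*\|_{\max}\le 4\max_{j,k}|\tilde\Sigma^*_{jk}-\tilde\Sigma_{jk}|=o(p^{-3/4})$ by Lemma~\ref{lemma_1}. I would apply Lemma~\ref{lemma_gaussian_comparison} to the augmented $2q_p$-dimensional vectors. Its variance conditions hold because the diagonal entries of both matrices are bounded above by $O(1)$, and bounded below by $b_*$ and, for large $p$, by $b_*/2$. This yields an error $o(p^{-3/4})^{1/3}\{\log(p^2/\Delta)\}^{2/3}=o(1)$, which together with \eqref{eq:symmetric_finite} gives \eqref{eq:symmetric_limit}.

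\emph{Main obstacle.} The main point to check is that the quadratic number of contrasts costs only constants in the exponents. This works because the remainder and the moment envelope are controlled through $\max_j$ over $p$ coordinates, and the dimension enters Proposition~2.1 of \cite{victor2017clt}, the anti-concentration bound and Lemma~\ref{lemma_gaussian_comparison} only logarithmically. The other place requiring care is the nondegeneracy condition: it has to be imposed uniformly over all pairs, since Condition~(A.2) controls only adjacent correlations.
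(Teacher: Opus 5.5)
Your proposal is correct and follows essentially the same route as the paper. Both use the Appendix~A linearization with the remainder controlled through the $p$-coordinate maximum (so $\|A_p\mathcal R_{n,p}\|_\infty\le 2\|\mathcal R_{n,p}\|_\infty$), the row-$\ell_1$-norm-two bound to transfer the (E.2) envelope with $B_n=C\log^2 p$, Proposition~2.1 of \cite{victor2017clt} in dimension $2q_p$ with $\log q_p\lesssim\log p$, anti-concentration to absorb the remainder, and the entrywise bound $4\Delta_p$ with Lemma~\ref{lemma_gaussian_comparison} to pass from $\Gamma_*^*$ to $\Gamma_*$. The only addition is your explicit check that the diagonal of $\Gamma_*$ stays above $b_*/2$ in the comparison step, which the paper leaves implicit.
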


\noindent\textit{Proof.}
For \(1\leq j\leq p\), define
\[
    \psi_{ij}
    =
    \frac{w_{ij}-\mathbb Ew_{ij}}{\Sigma_{jj}^2},
    \qquad
    w_{ij}
    =
    X_{ij}^4-6r_2\Sigma_{jj}X_{ij}^2,
\]
and let \(\psi_i=(\psi_{i1},\ldots,\psi_{ip})^\top\). The decomposition
and uniform remainder bound in Appendix~A give
\begin{equation}
\label{eq:symmetric_linearization}
    \sqrt n
    \big(
    \widehat K
    -
    3r_2\mathbf 1_p
    \big)
    =
    \frac1{\sqrt n}
    \sum_{i=1}^n\psi_i
    +
    \mathcal R_{n,p},
    \qquad
    \mathbb P\bigg(
    \|\mathcal R_{n,p}\|_\infty
    >
    C\frac{\log(2p)}{n^{1/6}}
    \bigg)
    \leq
    C\bigg\{
    \frac{\log^9(2p)}{n}
    \bigg\}^{1/6},
\end{equation}
where \(\operatorname{cov}(\psi_i)=\tilde\Sigma^*\). Under
\(\log p=o(n^{1/14})\), the same tail bound also gives
\[
    \|\mathcal R_{n,p}\|_\infty
    =
    o_{\mathbb P}\big\{\log^{-1/2}(2p)\big\}.
\]

Every row of \(A_p\) has \(\ell_1\)-norm two, so
\[
    \max_a
    \big|
    (A_p\psi_i)_a
    \big|
    \leq
    2\max_{1\leq j\leq p}|\psi_{ij}|,
    \qquad
    \|A_p\mathcal R_{n,p}\|_\infty
    \leq
    2\|\mathcal R_{n,p}\|_\infty.
\]
Consequently, the moment bounds established in Appendix~A also hold for
the transformed vectors, up to a change in the constant. Condition
\eqref{eq:symmetric_nondegenerate} supplies the coordinate-wise variance
lower bounds. The
high-dimensional Gaussian approximation used in Appendix~A, applied in
dimension \(q_p\), therefore gives the same bound
\(C\varepsilon_{n,p}\), since \(q_p\leq p^2\). Put
\[
    \delta_{n,p}
    =
    C\frac{\log(2p)}{n^{1/6}},
    \qquad
    \eta_{n,p}
    =
    \bigg\{
    \frac{\log^9(2p)}{n}
    \bigg\}^{1/6}.
\]
The tail bound in \eqref{eq:symmetric_linearization} and the row-norm
bound show that
\[
    \mathbb P\big(
    \|A_p\mathcal R_{n,p}\|_\infty>2\delta_{n,p}
    \big)
    \leq C\eta_{n,p}.
\]
The Gaussian anti-concentration inequality, applied to the augmented
positive and negative coordinates, is bounded by
\[
    C\delta_{n,p}
    \sqrt{\log(2q_p)}
    \leq
    C\frac{\log^{3/2}(2p)}{n^{1/6}}
\]
Both this quantity and \(\eta_{n,p}\) are bounded by
\(C\varepsilon_{n,p}\). The usual smoothing inequality therefore absorbs
the transformed remainder into the Gaussian-approximation error. This
proves \eqref{eq:symmetric_finite}.

It remains to replace \(\tilde\Sigma^*\) by
\(\tilde\Sigma=24R^{\odot4}\). Appendix~A establishes
\[
    \Delta_p
    :=
    \max_{1\leq j,k\leq p}
    \big|
    \tilde\Sigma^*_{jk}-\tilde\Sigma_{jk}
    \big|
    =
    o(p^{-3/4}).
\]
The row-norm bound gives
\[
    \max_{a,b}
    \big|
    \big(
    A_p\tilde\Sigma^*A_p^\top
    -
    A_p\tilde\Sigma A_p^\top
    \big)_{ab}
    \big|
    \leq4\Delta_p.
\]
Applying Lemma~\ref{lemma_gaussian_comparison} to the augmented positive and negative coordinates yields an \(o(1)\) Kolmogorov distance between the two Gaussian maxima. Combining this comparison with \eqref{eq:symmetric_finite} proves \eqref{eq:symmetric_limit}.
\(\hfill\square\)

\bibliography{ref}

\end{document}